\newtheorem{theorem}{Theorem}[section]
\newtheorem{corollary}[theorem]{Corollary}
\newtheorem{lemma}[theorem]{Lemma}
\newtheorem{proposition}[theorem]{Proposition}
\theoremstyle{definition}
\newtheorem{definition}[theorem]{Definition}
\newtheorem{remark}[theorem]{Remark}
\numberwithin{equation}{section}
\newcommand{\RR}{\mathbb{R}}
\newcommand{\PP}{\mathbb{P}}
\newcommand{\BS}{\mathrm{BS}}
\newcommand{\D}{\mathrm{d}}
\newcommand{\Pf}{\mathrm{P}}
\newcommand{\Ir}{\mathrm{I}}
\newcommand{\NN}{\mathbb{N}}
\newcommand{\Ff}{\mathcal{F}}
\newcommand{\Cc}{\mathcal{C}}
\newcommand{\Hh}{\mathcal{H}}
\newcommand{\Oo}{\mathcal{O}}
\newcommand{\EE}{\mathbb{E}}
\newcommand{\Ss}{\mathcal{S}}
\newcommand{\Ll}{\mathcal{L}}
\newcommand{\eps}{\varepsilon}
\newcommand{\E}{\mathrm{e}}
\newcommand{\dx}{\partial_{x}}
\newcommand{\dxx}{\partial_{x}^2}
\newcommand{\Ddx}{\mathcal{D}_{x}}
\newcommand{\dy}{\partial_{y}}
\newcommand{\dyy}{\partial_{y}^2}
\newcommand{\LY}{\Ll_{Y}}
\newcommand{\Lg}{\left\langle}
\newcommand{\Rg}{\right\rangle}
\newcommand{\sigmaLim}{\boldsymbol{\varkappa}}
\newcommand{\half}{\frac{1}{2}}
\newcommand{\tbrK}{K}
\newcommand{\tbrKab}{K^a_\beta}
\newcommand{\tbrooK}{\overline{\overline{K}}}
\newcommand{\tbrooKab}{\overline{\overline{K}}^{\overline{a}}_\beta}
\newcommand{\tbrthKab}{\widetilde{\widehat{K}}^{\widehat{a}}_\beta}
\newcommand{\tbrhK}{\widehat{K}}
\newcommand{\tbrhKab}{\widehat{K}^{\widehat{a}}_\beta}
\newcommand{\tbrhhKab}{\widehat{\widehat{K}}^{\widehat{a}}_\beta}
\newcommand{\tbrepsK}{K_\eps}
\newcommand{\tbrtKyb}{\widetilde{K}^{y_0}_\beta}
\newcommand{\tbrKxyb}{K_\beta^{x-y_\sigma}}
\newcommand{\tbrchif}{\overline{\chi}_1}
\newcommand{\tbrchis}{\overline{\chi}_2}
\newcommand{\tbrchit}{\overline{\chi}_3}
\newcommand{\LBsLim}{\Ll_{\BS}^{\sigmaLim}}
\newcommand{\amin}{\alpha_{0}}
\newcommand{\dmin}{\delta_{0}}
\newcommand{\Dom}{\mathfrak{D}}
\newcommand{\Sigov}{\overline{\sum}}
\newcommand{\Sigund}{\underline{\sum}}
\begin{document}

\title{A theoretical analysis of Guyon's toy volatility model}

\author{Ofelia Bonesini}
\address{Department of Mathematics, University of Padova}
\email{bonesini@math.unipd.it}

\author{Antoine Jacquier}
\address{Department of Mathematics, Imperial College London, and Alan Turing Institute}
\email{a.jacquier@imperial.ac.uk}

\author{Chlo\'e Lacombe}
\address{Department of Mathematics, Imperial College London}
\email{chloe.lacombe14@imperial.ac.uk}

\date{\today}
\thanks{The authors would like to thank Alexander Kalinin for his insightful remarks about 
boundary behaviour of solutions of SDEs as well as Jean-Pierre Fouque for his pointers on ergodic diffusions,
and Yuliya Mishura for her hints about boundedness of moments.
They are also indebted to Julien Guyon for introducing them to this exciting problem.
AJ acknowledges financial support from the EPSRC grant EP/T032146/1.}
\keywords{Path-dependent volatility, large deviations, implied volatility asymptotics}
\subjclass[2020]{41A60, 60F10, 60G15}

\maketitle
\begin{abstract}
We provide a thorough analysis of the path-dependent volatility model introduced by Guyon~\cite{G17},
proving existence and uniqueness of a strong solution, characterising its behaviour at boundary points,
providing asymptotic closed-form option prices as well as deriving small-time behaviour estimates. 
\end{abstract} 

\tableofcontents

\section{Introduction}
Stochastic volatility models have been used extensively over the past three decades in order to reproduce particular features of market data,
on Equities, FX and Fixed Income markets, both under the historical measure and for pricing purposes.
Most of them are based on a Markovian assumption for the underlying process,
essentially for mathematical convenience, as PDE techniques and Monte Carlo schemes are more readily available then. 
However, recent models have departed from this Markovian confinement and have shown to provide extremely accurate fit to market data.
One approach considers instantaneous volatility driven by fractional Brownian motion, giving rise to the rough volatility generation and its numerous descendants~\cite{Alos, BayerFriz, ElEuch, FZ17, Fuka, GatheralRough, Guennoun}.
A less strodden, yet very intuitive, path, originally introduced by Engle~\cite{E82} and Bollerslev~\cite{B86} in the early 1980s
suggested to consider models where volatility depends on the past history of the stock price process.
Their approach, though, was under the historical measure, and Duan~\cite{D95} investigated these discrete-time models in the context of option pricing.
With this in mind, Hobson and Rogers~\cite{HR98} extended this approach to continuous time, 
proposing  that instantaneous volatility should depend on exponentially weighted moments of the stock price. 
Contrary to stochastic volatility models, 
the market here is complete. 
Hobson and Rogers~\cite{HR98} showed that such models generate implied volatility smiles and skews consistent with market data. 
Further results investigated some theoretical properties of these models, in particular~\cite{MZ09} proving existence and uniqueness of strong solutions.
This path has recently been given new highlights by Guyon~\cite{Guyon14}, who assumed that the underlying stock price process behaves as
$$
\frac{\D S_t}{S_t} = \sigma(t, S_t, Y_t) l(t,S_t) \D W_t, \quad S_0 := s_0 >0,
$$
where~$W$ is a standard Brownian motion, $Y$ an adapted process and~$l(\cdot)$ a leverage function ensuring that European options are fully recovered.
Inspired by Hobson and Rogers~\cite{HR98}, Guyon~\cite{G17} suggested to choose~$Y$ as an exponentially weighted moving average of~$S$.
Not only does this model calibrate perfectly to the observed smile, but the diffusion map~$\sigma(\cdot)$ can be chosen in such a way that 
joint calibration with VIX data becomes feasible, a notoriously difficult task.

Motivated by his empirical results,
we investigate the theoretical properties of this model.
We provide a full characterisation of the behaviour of the volatility process at its boundaries, together with its ergodic behaviour, and derive closed-form asymptotics for the corresponding option prices in small time.
In Section~\ref{sec:SetUp}, we set the notations and present Guyon's model. 
Section~\ref{sec:Results} gathers the main theoretical results, proving  existence and uniqueness of a strong solution (Section~\ref{sec:Boundary}),
deriving the stationary distribution (Section~\ref{sec:Stationary}),
which we use to obtain an expansion of the option price in Section~\ref{sec:PDE}.
We finally provide small-time option price and implied volatility asymptotics for this model in Section~\ref{sec:LDP}. 
We gather all (lengthy) proofs in the appendix.


This project arises as an empirical analysis carried out by Guyon~\cite{G17} (see also~\cite{Guyon14})
to describe the relationship between the VIX index and the VVIX, a volatility of volatility index.
Figure~\ref{pic:VIX_VVIX} below shows a scatter plot of one versus the other over a five-year period.
The approximate linear relationship highlighted by the least-square regression fit was first noted by Guyon~\cite{G17},
and we follow his recommendations here.
\begin{figure}[h!]
\centering
\includegraphics[scale=0.5]{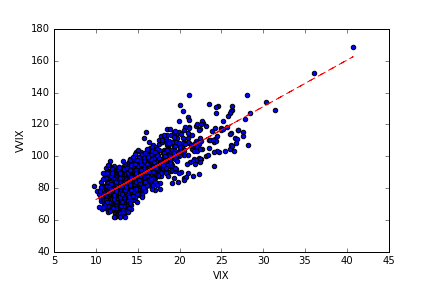}
\caption{Historical VVIX vs historical VIX (13/4/12-8/5/17).
\small{\textit{Source: CBOE.}}}
\label{pic:VIX_VVIX}
\end{figure}
\section{Set up and notations}\label{sec:SetUp}

The underlying process~$S$, describing the evolution of the S\&P index follows the general dynamics
$$
\frac{\D S_t}{S_t} = \sigma(Y_t)\D W_t, \quad S_0 = s_0 >0,
$$
for some given Brownian motion~$W$ generating a filtration~$\Ff=(\Ff_t)_{t\geq 0}$, 
where $\sigma:\RR^*_+\to \RR$ is non anticipative. 
Following Guyon~\cite{G17} and Hobson and Rogers~\cite{HR98}, 
we assume that the process~$Y$ is adapted to~$\Ff$
and is a function of the past history of the stock~$S$, 
making the latter non-Markovian, in the sense
\begin{equation}\label{eq:DefY}
Y_t := \frac{S_t}{\overline{S}^h_t}, \quad \text{for }t\in [0,T], 
\qquad \text{where }
\overline{S}^h_t := \frac{1}{h} \int_{-\infty}^t \exp\left\{- \frac{t-u}{h}\right\} S_u \D u
\end{equation}
is the exponentially weighted moving average (EWMA) of the stock price process. 
Here, the time horizon is set to be~$T$.
The constant~$h>0$, denoting the length of the time window, is left unspecified for now.
Using It\^o's formula and denoting $X:=\log(S)$, we can summarise the dynamics for the couple~$(X, Y)$ as
\begin{equation}\label{eq:SDE}
\left\{
\begin{array}{rll}
\D X_t & = \displaystyle -\half\sigma(Y_t)^2\D t + \sigma(Y_t) \D W_t,  & X_0 = x_0:=\log(s_0),\\
\D Y_t & = \displaystyle b(Y_t) \D t + \widetilde{\sigma}(Y_t) \D W_t, &Y_0 = y_0 >0,
\end{array}
\right.
\end{equation}
with $b(y) := \frac{1}{h}y(1-y)$ and $\widetilde{\sigma}(y) := y\sigma(y)$, for $y >0$
and some $h>0$. 
Guyon~\cite{G17} showed that, for the linear relationship between the VIX and the VVIX to hold, 
one needs to consider a diffusion coefficient of the form
\begin{equation}\label{eq:Sigma}
\sigma(y) := -\frac{\alpha}{\beta} + \gamma y^{-\beta},
\end{equation}
with $\alpha,\beta,\gamma >0$.
In that case, $\widetilde{\sigma}$ is null at $y_\sigma := \left( \frac{\beta\gamma}{\alpha}\right)^{1/\beta}$, and
\begin{equation*}
\widetilde{\sigma}(0) = \left\{
\begin{array}{ll}
\text{not defined}, & \text{if }\beta>1,\\
0, & \text{if }\beta<1,\\
\gamma, & \text{if }\beta=1.
\end{array}
\right.
\end{equation*}
We note that, for $y \in (y_\sigma, \infty)$,
$\sigma(y)<0$
and therefore $\widetilde{\sigma}(y)<0$ as well.
While this may appear odd,
it is not however an issue as Brownian increments are symmetric around the origin.
While Figure~\ref{pic:VIX_VVIX} provides strong empirical arguments in favour of such a model, a theoretical analysis thereof is however needed in order to investigate further its practical benefits.
For example, since $\widetilde{\sigma}(y)^2  \sim \gamma^2 y^{2(1-\beta)}$ as~$y$ approaches zero, 
the map $\widetilde{\sigma}$ is square integrable around the origin if and only if $\beta < \frac{3}{2}$,
and theoretical issues will arise if this is not satisfied (therefore ruling out such values out of calibration).
That said, as we will show below, our main interest will be on the behaviour of the process on $(y_\sigma,\infty)$,
and therefore this restriction on~$\beta$ will not be enforced.
Here and in the following,
given two functions~$f, g:\RR\to\RR$, we shall write
$f(y)\sim g(y)$ as~$y$ tends to a (possible infinite) point~$\bar{y}$ whenever
$\lim_{y\to \bar{y}} f(y)/g(y) = 1$.
We write $\mathbb{P}_y(\cdot)$ for the conditional probability $\mathbb{P}_y(\cdot| Y_0=y)$ and consequently $\mathbb{E}_y[\cdot]$ for $\mathbb{E}_y[\cdot| Y_0=y]$.

We now step into this theoretical analysis by first concentrating on the existence and uniqueness of a strong solution for~\eqref{eq:SDE}, then by deriving a precise classification of the special points~$0$, $y_\sigma$ and~$\infty$,
before diving into the asymptotic behaviour of the process
and the corresponding option prices.

\section{Main results}\label{sec:Results}

\subsection{Existence and uniqueness of strong solutions}\label{sec:existence_uniqueness_sol}

Following~\cite[Definition 2.1]{IW14},
the definition of a strong solution  allows for explosion in finite time.
It is enough to check existence and uniqueness of solutions for the one-dimensional equation associated to the process~$Y$
since the process $X= \log(S)$ is uniquely determined as a function of~$Y$ and~$W$.
A localised version of~\cite[Corollary to Theorem 3.2]{IW14} which, according to the authors, can be proved similarly to Theorem~3.1 therein, yields the existence of a unique strong solution, provided that the drift~$b(\cdot)$ is locally Lipschitz and the volatility~$\widetilde{\sigma}(\cdot)$ is $\half$-H\"older
(as mentioned in~\cite[page 184]{IW14}, this condition is in some sense maximal).
Alternatively, one can exploit~\cite[Proposition 2.3]{DW98}, since the conditions therein are a direct consequence of local Lipschitzianity and local H\"olderianity of~$b$ and~$\sigma$. Notice that the fact that we are just focusing here on the positive half-line and not on~$\mathbb{R}$ can be overcome just by setting~$\sigma$ and~$b$ identically equal to zero for negative arguments.
The local Lipschitz property of the drift is straightforward:
for any $N \in \mathbb{N}$ and any $x, y \in [-N,N]$, we have
\begin{align*}
    |b(x)-b(y)|
    & = \frac{\left|x-x^2 - \left(y-y^2\right)\right|}{h}
    \leq \frac{|x- y| + \left|x^2 - y^2\right|}{h}
    \leq \frac{|x- y|+2N|x - y|}{h}
    \leq K_N|x- y|,
\end{align*}
from which the local Lipschitz property with constant $K_N:=\frac{2N + 1}{h}$ follows.
Now, the volatility function is $\alpha$-H\"older with $\alpha \geq \half$ if and only if $0 < \beta \leq \half$ or $\beta=1$:
for $\beta=1$, $\widetilde{\sigma}(y)=-\frac{\alpha}{\beta}y+\gamma$ is affine hence globally Lipschitz.
Now, for any $N \in \mathbb{N}$ and $x, y \in [-N,N]$,
\begin{align*}
    \left|\widetilde{\sigma}(x)- \widetilde{\sigma}(y)\right|
    & = \left|-\frac{\alpha}{\beta}x +\gamma x^{1-\beta} + \frac{\alpha}{\beta}y - \gamma y^{1-\beta}\right|
    \leq \frac{\alpha}{\beta}|x-y| + \gamma \left|x^{1-\beta} - y^{1-\beta}\right|\\
    & \leq 2 \max \left\{\frac{\alpha}{\beta}, \gamma \right\}|x-y|^{1-\beta},
\end{align*}
which is locally
$\half$-H\"older continuous if and only if $\beta \leq \half$.
We will below consider the process~$Y$, not on the positive half line, but on the open interval $(y_{\sigma},\infty)$,
on which it enjoys nice ergodic properties.
There, the regularity of~$\widetilde{\sigma}$
is improved since, for any $x, y \in (y_\sigma, \infty)$,
\begin{align*}
   \left|\widetilde{\sigma}(x)- \widetilde{\sigma}(y)\right|
     & = \left|\frac{\alpha}{\beta}(y-x) +\gamma \left(x^{1-\beta}-y^{1-\beta}\right) \right|
    \leq \frac{\alpha}{\beta}|x-y| + \frac{\gamma (1-\beta)}{y_\sigma^{\beta}}|x - y|\\
     & \leq \left(\frac{\alpha}{\beta} +\gamma(1-\beta)\frac{\alpha}{\beta \gamma} \right)|x-y|
     = \frac{\alpha(2-\beta)}{\beta}  |x-y|.
\end{align*}
\subsection{Boundary classification}\label{sec:Boundary}
Now,
we need to analyse its behaviour in its domain and in particular at the boundary of the latter.
To do so, we follow the boundary classification in~\cite[Chapter 15, Section 6]{KT81}. 
The reason for this choice is that it seems the most suitable reference here. 
First, it includes both \textit{Feller} and \textit{Russian} boundary classifications, therefore allowing for a precise comparison. 
Second, it only requires the volatility coefficient to be non-null in the interior of the domain considered. 
On the contrary, the treatise in~\cite{CE05}, although more complete in some sense, requires the volatility process to be non null everywhere in~$\RR$. 
Consider a \textit{regular} (in the sense of~\cite{KT81}) diffusion process $Y=\{Y_t\}_{t \geq 0}$, on a domain $\Dom\subset\RR$, with left and right boundaries~$l$ and~$r$:
$$
\D Y_t= \mu(Y_t)\D t+ \sigma(Y_t)\D W_t, \qquad Y_0=y_0 \in \Dom.
$$
For any point~$y$ in the interior of~$\Dom$, namely $y \in (l,r)$, we assume that the drift and variance coefficients~$\mu(\cdot)$ and~$\sigma(\cdot)$ are continuous
and that $\sigma(y)>0$ for all~$y$ in the interior of~$\Dom$.
For any $x,y\in\RR$, introduce the hitting times $\tau_x := \inf \left\{ t \ge 0: Y_t = x \right\}$ and $\tau_{x,y} := \min \left\{\tau_x, \tau_y \right\}$.

We study the left boundary~$l$,
the discussion for the right boundary~$r$ being similar.
To provide a precise description, we recall 
some standard notions.
The \textit{scale function} $S:\Dom\to\RR$ is defined in terms of the so-called the \textit{scale density} $s:\Dom\to\RR$ via
\[
s(\xi):= \exp  \left( - \int_{\xi_0}^{\xi} \frac{2 \mu(v)}{\sigma^2(v)}\D v \right),
\qquad
S(x):=\int_{x_0}^x s(\xi) \D \xi, 
\]
where $\xi_0, x_0 \in (l,r)$ are arbitrary fixed points.
The particular choice of these points has no importance for the boundary discussion~\cite[Chapter 15, Section 3]{KT81}.
For any closed interval $I:=[a,b] \subset (l,r)$, 
we also introduce the \textit{scale measure}, namely the map 
$S:I\mapsto S(b) - S(a)$.
Then, we define the \textit{speed density}~$m$ and \textit{speed measure}~$M$:
\begin{align*}
&m(\xi) := \frac{1}{\sigma^2(\xi)s(\xi)}, 
\qquad
M[I] = M[a,b] := \int_a^b m(x) \D x.
\end{align*} 
Notice that both $S$ and $M$ are positive and finite on their domain.
Finally, 
\begin{align*}
N(l):=\int_l^x S[\eta,x] \D M(\eta)
	=\int_l^x M(l,\xi] \D S(\xi)
	= \int_l^x S[\eta,x]\frac{\D \eta}{\sigma^2(\eta)s(\eta)}.
\end{align*}
Since we are only interested in whether the integrals are finite or not, the upper bound~$x$ is irrelevant, 
explaining why we omit it from the notations.
Through the quantity $M(l, x]$, we can in some sense estimate the velocity of the process near $l$ and with the quantity $N(l)$ we can approximately quantify how long it takes to hit a point $x \in (l, r)$ starting at the left  boundary~$l$.
Now, we are ready to give the first classification:
\begin{definition}\ 
\begin{itemize}
\item[-] The left boundary~$l$ is \textit{attracting} if $S(l, x_0] < \infty$ for some $x_0$ in $(l, r)$. Then,
$$
\PP(\tau_{l_+}\leq \tau_{b}|Y_0=x)>0, \qquad \text{for any }l<x<b<r.
$$
\item[-] The left boundary~$l$ is \textit{unattracting} when $S(l, x_0] = \infty$ for some $x_0$ in $(l, r)$.
Then,
$$
\PP(\tau_{l_+} < \tau_{b}|Y_0=x)=0, \qquad \text{for any }l<x<b<r.
$$
\end{itemize}
\end{definition}
A left boundary~$l$ is therefore attracting when there is a positive probability that the process reaches~$l$  prior to the arbitrary state~$b$ (not necessarily in finite time), when its initial condition is $x < b$.
To complete our discussion of boundary classification we introduce the quantity $\Sigma(l):=\int_l^x S(l, \xi] \frac{\D\xi}{\sigma^2(\xi) s(\xi)}$,
where again the upper bound~$x$ in the integration is irrelevant.
Roughly speaking, $\Sigma(l)$ determines the time required by the process, starting from an interior point $x$, to reach the boundary~$l$ or another interior point $b>x$.
\begin{definition}
The boundary~$l$ is \textit{attainable} if $\Sigma(l) < \infty$; otherwise it is \textit{unattainable}.
\end{definition}
A straightforward argument shows that if~$l$ is attainable, then it is attracting. 
Indeed, $S(l, x_0] < \infty$ whenever $\Sigma(l) < \infty$. This is  in contrast to unattainable boundaries that may or may not be attracting.
For an attracting attainable boundary $l$, for any $l < x < b < r$,
\begin{align*}
	\PP(\tau_{l_+} < \infty| y_0=x)>0 
	\qquad \text{ and } \qquad
	\EE[\tau_{l_+} \wedge \tau_b | y_0=x] < \infty.
\end{align*}
Table 6.1 in~\cite{KT81} provides a complete portrait of Feller and Russian characterisations in terms of~$S(l,x]$, $M(l,x]$, $\Sigma(l)$ and~$N(l)$.
We give here a short description in words of Feller's:

\begin{itemize}

	\item[-]  \textbf{Regular boundary:} A regular boundary is attracting and attainable. 
	A diffusion process can enter but also leave from such a boundary point.

	\item[-]  \textbf{Exit boundary:} An exit boundary is attracting and attainable too, but when the initial point gets closer to it, the process cannot reach any interior point~$b$ regardless how close~$b$ is to~$l$. 
	Indeed, in this case it should hold: $\lim_{b \searrow l} \lim_{x \searrow l} \PP(\tau_b < t | Y_0=x)=0$, for any $t >0$. 
	No continuous sample path can exit $l$ after touching it.

	\item[-]  \textbf{Entrance boundary:} An entrance boundary is unattracting and unattainable. 
	A process starting from any point in the interior of the domain~$\Dom$ can not reach the entrance boundary. 
	Nevertheless, one can consider a process starting at the entrance boundary $l$: in this case, the process moves to the interior of the domain and never comes back to the boundary.

	\item[-]  \textbf{Natural (Feller) boundary:} A point is a natural boundary when it is unattainable (it can be attracting or not). 
	In general, such boundaries are discarded from the state space of the process since a diffusion process cannot start from nor reach it in finite time.
\end{itemize}
The following theorem, proved in Appendix~\ref{app:Proof_Thm_behaviour}, provides a detailed analysis of the behaviour of the process~$Y$ in Equation~\eqref{eq:SDE}
at the boundaries of its domain.
\begin{theorem}\label{Thm_behaviour}\ 
\begin{itemize}
\item[-]  Consider the process~$Y$
in~\eqref{eq:SDE} over the domain $\Dom=(y_\sigma, \infty)$. 
The right boundary $r=\infty$ is  entrance (unattracting, unattainable) while the left boundary $l=y_\sigma$ is
\begin{center}
	\begin{tabular}{|c|c|c|}
		\hline
		 Left boundary $y_\sigma$ & Feller & Russian \\ 
		\hline
		$y_\sigma > 1$ & exit-trap-absorbing  & attracting attainable \\  
		\hline
		$y_\sigma= 1$ & natural & attracting unattainable\\
		\hline
		 $y_\sigma < 1$ & entrance & unattracting unattainable\\
		\hline
	\end{tabular}
\end{center}
\item[-]  If the process~$Y$ in~\eqref{eq:SDE} is defined over $\Dom=(0,y_\sigma)$, then the left boundary $l=0$ is  
\begin{center}
	\begin{tabular}{|c|c|c|}
		\hline
		 Left boundary $0$ & Feller & Russian \\ 
		\hline
		$\beta<\half$ & regular  & attracting attainable \\  
		\hline
		$\beta\geq \half$ & exit-trap-absorbing & attracting attainable\\
		\hline
	\end{tabular}
\end{center}
while the right boundary $r=y_\sigma$ is
\begin{center}
	\begin{tabular}{|c|c|c|}
		\hline
		 Right boundary $y_\sigma$ & Feller & Russian \\ 
		\hline
		$y_\sigma > 1$ & entrance & unattracting unattainable\\
		\hline
		$y_\sigma= 1$ & natural & attracting unattainable\\
		\hline
		 $y_\sigma < 1$ & exit-trap-absorbing  & attracting attainable \\  
		\hline
	\end{tabular}
\end{center}
\end{itemize}
\end{theorem}

\begin{remark}
    As a consequence of this classification we limit our discussion to the domain $(y_\sigma, \infty)$ with $y_\sigma<1$. 
    On $(0,y_\sigma)$ the strict positivity of~$Y$ is not guaranteed as the origin is attracting and attainable. 
    Moreover, the case where $\Dom=(y_\sigma, \infty)$ with $y_\sigma  \geq 1$, should be ruled out as well since~$y_\sigma$ is attracting and attainable, so that~$Y$ may then exit it to enter $(0,y_\sigma)$ with strictly positive probability.
    An application of \cite[Theorem~3.2, Section~4, Chapter~4]{IW14} guarantees that~$Y$ does not explode in finite time, or more precisely that
$$
\PP\left(\inf\Big\{t \geq 0: Y_t \in \{y_\sigma,+\infty\}\Big\} = \infty \vert Y_0=y_0\right) = 1,
\qquad\text{for any }y_0 \in (y_\sigma, \infty).
$$
\end{remark}

\subsection{Ergodic Behaviour and Stationary distribution}\label{sec:Stationary}

\subsubsection{Ergodic  behaviour}\label{sec:ErgodicB}

We now discuss the ergodic behaviour of the process~$Y$ in~\eqref{eq:SDE} through the following theorem proved in Appendix~\ref{Proofthm:ErgodicB}.
To do so, introduce the probabilities
$$
\Pf(z) := \PP_{y_0}\left(\lim_{t \uparrow \tau_{\Dom}}Y_t=z\right),
\qquad\text{for }z \in \{l,r\},
$$
where~$\tau_{\Dom}$ denotes the lifetime of the process in $\Dom = (l,r)$.
Following~\cite[Section~2.7-2.8]{{Pinsky}}, transience of the process 
then corresponds to $\Pf(l)+\Pf(r) = 1$.
\begin{theorem}\label{thm:ErgodicB}
The ergodic behaviour of the process~$Y$ in~\eqref{eq:SDE} is as follows:
\begin{center}
	\begin{tabular}{|c|c|c|}
		\hline
		 & $\Dom=(0,y_\sigma)$ & $\Dom=(y_\sigma, \infty)$\\ 
		\hline
		$y_\sigma<1$ & $Y$ transient and  $\Pf(y_\sigma)$ in~\eqref{eq:Pysigma} & $Y$ recurrent \\ 
		\hline
		$y_\sigma=1$ & $Y$ transient and  $\Pf(y_\sigma)$ in~\eqref{eq:Pysigma}
		& $Y$ transient and $\Pf(y_\sigma)=1$\\
		\hline
		$y_\sigma > 1$ & $Y$ transient and  $\Pf(0)=1$   &  $Y$ transient and  $\Pf(y_\sigma)=1$   \\
		\hline
	\end{tabular}
\end{center}
with 
\begin{equation}\label{eq:Pysigma}
\Pf(y_\sigma)= \frac{\int_{0}^{y_0} \exp\left\{- \int_x^y \frac{2b(s)}{\widetilde{\sigma}^2(s)} \D s \right\} \D y}{\int_{0}^{y_\sigma} \exp\left\{- \int_x^y \frac{2b(s)}{\widetilde{\sigma}^2(s)} \D s \right\} \D y}
\quad \text{and} \quad
\Pf(0) = 1 - \Pf(y_\sigma),
\quad \text{for any }x \in (0,y_\sigma).
\end{equation}
\end{theorem}

An immediate consequence of the fact that~$Y$ is recurrent when $y_\sigma < 1$ on the domain $(y_\sigma, \infty)$  is that~$Y$ does not explode in finite time with probability one. 
This will thus be the case of interest, for which a stationary distribution is available
(Proposition~\ref{prop:Stationary}).

\subsubsection{Stationary distribution over the domain $\Dom=(y_\sigma, \infty)$}

We now investigate the ergodic properties of the process~$Y$ in Equation~\eqref{eq:SDE} over the domain $\Dom=(y_\sigma, \infty)$.
Recall that its infinitesimal generator is defined, for any $y \in \Dom$, as 
$$
(\Ll_{Y}\varphi)(y) := \lim_{t\downarrow 0}\frac{\EE[\varphi(Y_t) \vert Y_0=y] - \varphi(y)}{t},
$$
for all functions~$\varphi$ such that the limit is finite for all $y\in \Dom$.
We recall~\cite[Section 3.2]{FPSS11} that a process $(Y_t)_{t>0}$ is ergodic if it admits a unique, stationary distribution~$\Pi$, and for any measurable bounded function~$\phi$, the almost sure limit
$$
\lim_{t \uparrow \infty} \frac{1}{t} \int_0^t \phi(Y_s)\D s = \int_{\Dom} \phi(y) \Pi(\D y)
$$
holds.
If this limit exists, an ergodic solution must satisfy $\Ll_{Y}^* \Pi = 0$, 
where $\Ll_{Y}^*$ is the adjoint of the infinitesimal generator $\Ll_{Y}$, defined in~\cite[Section 1.5.3]{FPSS11} via the identity
\begin{equation}\label{def:adjoint}
\int g(\xi) \Ll_{Y} f(\xi) \D \xi = \int f(\xi) \Ll_{Y}^* g(\xi) \D\xi,
\end{equation}
for any rapidly decaying smooth test functions~$f$ and~$g$.
The generator and its adjoint are available explicitly here:
\begin{proposition}\label{pp:generator}
For any $y\in \Dom$, we have
\begin{align*}
(\Ll_Y f)(y) & = \frac{1}{h}y(1-y) \dy f(y) + \half y^2 \sigma^2(y) \dyy f(y),\\
(\Ll_Y^* g)(y) & = -\frac{1}{h} \dy \Big(y(1-y)g(y)\Big) + \half \dyy \Big(y^2\sigma^2(y)g(y)\Big).
\end{align*}
\end{proposition}

\begin{proof}
The expression for~$\Ll_Y$ is standard and the one for~$\Ll_{Y}^*$ follows using~\eqref{def:adjoint} and integration by parts. 
Given $y\in \Dom$ and $f,g: \Dom\to \RR$ twice continuously differentiable functions with bounded derivatives,
and such that the two functions and their derivatives tend to zero fast enough at the boundaries, we have
\begin{equation*}
\begin{aligned}
	\langle f, \Ll_Y^* g \rangle 
	& = \langle \Ll_Y f, g \rangle
	= \int_{\Dom}  \left[ \frac{y(1-y)}{h} 
	\dy f(y) + \half y^2 \sigma^2(y) \dyy f(y)\right]g(y) \D y\\
	& = \frac{1}{h}\int_{\Dom} \dy f(y) y (1-y)g(y) \D y
+ \half \int_{\Dom} \dyy f(y) y^2 \sigma^2(y) g(y) \D y\\
	& = \frac{1}{h}\int_{\Dom} \dy f(y) y (1-y)g(y) \D y
-\half \int_{\Dom} \dy f(y) \dy\left(y^2 \sigma^2 (y) g(y) \right)\D y \\
	& = -\int_{\Dom} \dy f(y) \left\{ -\frac{1}{h}y(1-y)g(y) +\half\dy \left(y^2\sigma^2(y)g(y) \right)\right\} \D y\\
	& = \int_{\Dom} f(y) \left[ -\dy \left(\frac{1}{h}y(1-y)g(y)\right) + \half \dyy\left(y^2\sigma^2(y)g(y)\right)\right] \D y,
\end{aligned}
\end{equation*}
and the proposition follows.
\end{proof}

For $f:\Dom\to\RR$, finding the explicit solution of the Poisson equation is tedious. 
Indeed,
$(\Ll_{Y}^* f)(y) = 0$
is equivalent to
$$
\begin{aligned}
\displaystyle
&\half y^2 f''(y) \left[ {\left(\frac{\alpha}{\beta}\right)}^2-\frac{2\alpha\gamma}{\beta} y^{-\beta}+\gamma^2 y^{-2\beta}\right] \\
&+yf'(y) \left[2\left\{ {\left(\frac{\alpha}{\beta}\right)}^2-\frac{\alpha\gamma}{\beta}(2-\beta)y^{-\beta}+(1-\beta)\gamma^2 y^{-2\beta}\right\} -\frac{1}{h}(1-y)\right] \\
&+f(y) \left[ {\left(\frac{\alpha}{\beta}\right)}^2 -\frac{\alpha\gamma}{\beta}(1-\beta)(2-\beta)y^{-\beta} + \gamma^2(1-\beta)(1-2\beta)y^{-2\beta} - \frac{1}{h}(1-2y) \right] = 0,
\end{aligned}
$$
with the constraint $\int_{\Dom} f(y) \D y = 1$.
This is a highly non-linear problem, 
which does not admit any obvious explicit solution.
However, using the probabilistic tools
developed in~\cite[page 242]{KT81}, 
such a closed-form expression can be derived as in the following proposition, proved in Appendix~\ref{prop:Stationary_proof}).

\begin{proposition}\label{prop:Stationary}
If $y_{\sigma}<1$, $\Dom=(y_\sigma, \infty)$, the unique stationary distribution reads
\begin{equation}\label{eq:StationaryY}
\Pi(\D y)= \left( \int_{y_\sigma}^{\infty}\frac{\D\xi}{\widetilde{\sigma}^2(\xi)s(\xi)}  \right)^{-1}\frac{\D y}{\widetilde{\sigma}^2(y) s(y)}.
\end{equation}
\end{proposition}


\subsection{Pricing PDE and expansion}\label{sec:PDE}
Pricing options on the stock price given in~\eqref{eq:SDE} can obviously be done with Monte Carlo simulations.
However, through Feynman-Kac, PDE techniques are (when available) often faster 
and may also (as we shall see below) provide closed-form expressions.
Consider an option with payoff~$h(X_T)$
at expiry~$T$,
and denote its price $P(t, X_t, Y_t)$ at time~$t\leq T$.
Introduce the operators 
\begin{equation}\label{eq:Operators}
\Ll_1:= y\sigma^2(y)\partial_{xy}
\qquad\text{and}\qquad
\Ll_{\BS}^{\sigma(y)}:= \partial_{t}+ \frac{\sigma^2(y)}{2}\dxx -\frac{\sigma^2(y)}{2}\dx.
\end{equation}
and recall that~$\Ll_Y$ is defined in Proposition~\ref{pp:generator},
while the operator~$\Ll_{\BS}^{\sigma(y)}$ is nothing else than the Black-Scholes infinitesimal generator 
with volatility~$\sigma(y)$.
\begin{proposition}\label{prop:PricingPDE}
Under the risk-neutral measure, the pricing PDE associated to~\eqref{eq:SDE} is
\begin{equation}\label{eq:PDE}
\left(\Ll_Y + \Ll_1+\Ll_{\BS}^{\sigma(y)}\right)P(t,x,y) = 0,
\end{equation}
for all $t\in [0,T)$, $x\in\RR$ and $y\in \Dom=(y_\sigma, \infty)$,
with terminal condition $P(T,x,y) = h(x)$.
\end{proposition}
Note that the PDE is stated in the domain $\Dom=(y_{\sigma}, \infty)$ and not on the whole positive half-line in the $y$-dimension. 
On~$\Dom$, the drift is quadratic (so that this representation follows from~\cite{Bishop} for example),
while the diffusion coefficient is at most of linear growth.
Note that since $\sigma(\cdot)$ is not bounded away from zero, 
the operator
$\Ll_{\BS}^{\sigma(y)}$ is not strictly elliptic, but only hypoelliptic.
Unfortunately, this pricing PDE does not admit an obvious explicit solution.
However, approximate solutions can be found by expanding the solution using perturbation methods, 
as developed in~\cite{FPSS11}. 
A key ingredient is the (unique) stationary distribution of the ergodic process~$Y$, 
which we proved above for the case $\Dom=(y_\sigma, \infty)$.
This perturbation analysis relies on a few other items
that we need to tackle. 
In particular, we assume that the pricing PDE admits a unique classical solution.

Following~\cite{FFF10, FFK12, FPS00, FPSS03, FPSS03B, FPSS11}, consider a `fast' version of the original process~$Y$, defined as
\begin{equation}
\label{eq:Model_XY_for_perturbation}
\left\{
\begin{array}{rll}
\D X_t & = \displaystyle -\half\sigma(Y_t)^2 \D t + \sigma(Y_t) \D W_t,  &X_0 = x_0\in\RR, \\
\D Y_t & = \displaystyle\frac{1}{\eps} b(Y_t) \D t + \frac{1}{\sqrt{\eps}} \widetilde{\sigma}(Y_t) \D W_t, &Y_0 = y_0 >0,
\end{array}
\right.
\end{equation}
for $\eps>0$.
Proposition~\ref{prop:PricingPDE} then implies that the option price~$P^\eps$, with payoff~$h$, satisfies
\begin{equation}\label{eq:pricing_eps}
\left[\frac{1}{\eps}\Ll_Y+\frac{1}{\sqrt{\eps}}\Ll_1+\Ll_{\BS}^{\sigma(y)}\right]P^\eps(t,x,y)=0,
\end{equation}
for all $t\in [0,T)$, $x\in\RR$ and $y\in \Dom$,
with boundary condition $P^{\eps}(T,x,y) = h(x)$.
Inspired by~\cite{FPS00, FPSS03, FPSS11}, 
we now provide an approximation for the price~$P^{\eps}$, proved in Appendix~\ref{prop:PerturPriceExpansion_Proof}:
\begin{proposition}\label{prop:PerturPriceExpansion}
If the payoff~$h$ is smooth, then the equality
$$
P^{\eps}(t,x,y) = P_{0}(t,x) + \sqrt{\eps}P_1(t,x) + \Oo(\eps)
$$
holds pointwise in $(t,x,y) \in [0,T)\times\RR\times \Dom$ as~$\eps$ tends to zero,
where~$P_0$ corresponds to the Black-Scholes price of the option having payoff $h$ with volatility
$\sigmaLim := \sqrt{\langle \sigma^2, \Pi\rangle}$
and
$$
P_1(t,x) = - \frac{T-t}{2}\langle \varpi, \Pi\rangle \left(\dx^3 - \dx^2 \right) P_0(t,x),
$$
for all $(x,t) \in \RR\times [0,T)$ with boundary condition $P_1(T,x)=0$
and with 
\begin{equation}\label{Eq_eq_def_varpi}
\varpi(y) := y\sigma^2(y)\psi'(y).
\end{equation}
Finally~$\psi$ is the unique solution to
\begin{equation}\label{Eq_eq_def_psi}
\Ll_Y \psi(y)= \sigma^2(y) - \sigmaLim^2,
\qquad\text{for all }y \in (y_{\sigma}, \infty).
\end{equation}
\end{proposition}
\begin{remark}
The assumption of a smooth payoff follows
that in~\cite{FPS00, FPSS11}.
Using mollification arguments, it could be relaxed to include standard European Call and Put options, but we leave this subtlety for later.
\end{remark}
\subsection{Small-time asymptotics}\label{sec:LDP}
We finally investigate the small-time behaviour of the solution to~\eqref{eq:SDE} using large deviations techniques, 
leading to closed-form asymptotics for option prices and implied volatilities.
We refer the reader to~\cite{FrizBook} for an overview of this topic.
For $\eps >0$, $t\in [0,T]$, introduce the small-time rescaling $(X^\eps_t, Y^\eps_t) :=(X_{\eps t} , Y_{\eps t})$, which satisfies
\begin{equation}
\label{eq:Model_X_eps}
\left\{
\begin{array}{rll}\displaystyle
\D X^\eps_t & = \displaystyle-\frac{\eps}{2} \sigma^2(Y^\eps_t) \D t + \sqrt{\eps}\, \sigma(Y^\eps_t) \D W_t,  &X^\eps_0 := x_0 \in\RR, \\
\D Y^\eps_t & = \displaystyle\eps b(Y^\eps_t) \D t + \sqrt{\eps}\,  \widetilde{\sigma}(Y^\eps_t) \D W_t, &Y^\eps_0 = y_0 >0.
\end{array}
\right.
\end{equation}

Let $\overline{\Hh}$ denote the space of absolutely continuous functions starting at the origin, 
with square integrable derivatives, such that
$$
\overline{\Hh}
:= \left\{
f: [0,T] \rightarrow \RR \text{ with } f = \int g(s) \D s \text{ for some } g\in L^2([0,T]), 
\text{ and }\inf_{t\in[0,T]} f_t \ge \frac{1}{\alpha} \log \left( 1 - y^\beta_0 \frac{\alpha}{\beta\gamma}\right)
\right\}.
$$

\begin{remark}
When $y_0 \ge y_\sigma$, the condition
\begin{equation}\label{eq:Condition_f}
\inf_{t\in[0,T]} f_t 
\ge \frac{1}{\alpha} \log \left( 1 - y^\beta_0 \frac{\alpha}{\beta\gamma}\right),
\end{equation}
is automatically satisfied and~$\overline{\Hh}$ is the usual Cameron-Martin space. 
When $y_0 < y_\sigma$, \eqref{eq:Condition_f} is needed to ensure that the solution of the controlled ODE introduced below is positive.
\end{remark}

We now state and prove (in Appendix~\ref{proof:LDP_X}) 
 a pathwise large deviations principle for the log-stock price process.
With $\mathrm{x}_0:= (x_0, y_0)$, 
introduce the map $\Ir^{X,Y}$ 
on~$\Cc([0,T], \RR\times\RR^*_+)$ by
$$
\Ir^{X,Y} (\mathrm{g})
:= \inf \left\{\Lambda (f), f\in \overline{\Hh}, \Ss^{\mathrm{x}_0} (f) = \mathrm{g} \right\},
$$
where~$\Lambda$ is the usual rate function driving the large deviations of the Brownian motion:
\begin{equation*}
\Lambda(f) :=
\left\{
\begin{array}{ll}
\displaystyle \half\int_{0}^{T}\left\|\dot{f}_t\right\|^2\D t, & \text{if } f \in \overline{\Hh},\\
\infty, & \text{otherwise},
\end{array}
\right.
\end{equation*}
and~$\Ss^{\mathrm{x}_0}(f)$ on~$[0,T]$ is the solution to the controlled ODE 
$\dot{\mathrm{g}}_t = \dot{f}_t \left(\underline{\sigma}(\mathrm{g}_t), \underline{\widetilde{\sigma}}(\mathrm{g}_t)\right)^\top$, with $\underline{{\sigma}}(x,y)=\sigma(y)$ and $\underline{\widetilde{\sigma}}(x,y)=\widetilde{\sigma}(y)$,
starting from $\mathrm{g}_0 = \mathrm{x}_0$.

\begin{theorem}\label{thm:LDP_X}
The rescaled $\log$-stock price process $X^\eps$ in~\eqref{eq:Model_X_eps} satisfies a pathwise large deviations principle on $\Cc ([0,T], \RR)$ as $\eps$ tends to zero with speed $\eps$ and rate function
\begin{equation}
\Ir^X (g) := \inf \Big\{ \Ir^{X,Y} (\mathrm{h}), \mathrm{h}:=(g,l), l \in\Cc([0,T], \RR^*_+), l_0 = y_0 \Big\},
\end{equation}
\end{theorem}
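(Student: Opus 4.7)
The plan is to first establish a joint pathwise LDP for $(X^\eps, Y^\eps)$ on $\Cc(\Tt, \RR\times\RR^*_+)$ with rate function $\Ir^{X,Y}$, and then deduce the LDP for $X^\eps$ via Varadhan's contraction principle applied to the continuous projection $\pi(g, l) := g$. Any path $\mathrm{h}$ with $\pi(\mathrm{h}) = g$ must be of the form $(g, l)$ with $l_0 = y_0$ inherited from the dynamics, so the contraction principle delivers exactly the stated formula for $\Ir^X$.

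For the joint LDP I would use a Freidlin--Wentzell type argument. The principal obstruction is that $\sigma(y) = -\alpha/\beta + \gamma y^{-\beta}$ and $\widetilde{\sigma}(y) = y \sigma(y)$ are not globally Lipschitz and blow up at $y = 0$, so the classical statement does not apply directly. The standard remedy is localisation: fix a compact neighbourhood $K \subset (0,\infty)$ of $y_0$ and introduce Lipschitz, bounded extensions $\sigma_K, \widetilde\sigma_K$ of $\sigma, \widetilde\sigma$ agreeing with them on~$K$. The truncated system satisfies the usual Freidlin--Wentzell LDP, obtained by combining Schilder's theorem for $\sqrt{\eps}W$ with the contraction principle applied to the (now continuous, by a Gronwall argument) truncated Ito map. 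The original and truncated systems agree on $\{\tau^\eps_K > T\}$, where $\tau^\eps_K$ is the first exit of $Y^\eps$ from~$K$; a Bernstein-type exponential martingale estimate on the truncated system provides $\PP(\tau^\eps_K < T) \leq \E^{-c_K/\eps}$ for some $c_K > 0$, establishing exponential equivalence between the two systems. Letting $K \uparrow (0,\infty)$ then identifies the joint rate function as $\Ir^{X,Y}$.

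The role of the constraint defining $\overline{\Hh}$ is to guarantee positivity of the $Y$-component of the controlled ODE. Writing the $Y$-component of $\Ss^{\mathrm{x}_0}(f)$ as $\psi$ and separating variables in $\dot\psi_t = \psi_t \sigma(\psi_t)\dot f_t$ yields the explicit formula
$$
\psi_t^\beta = \frac{\beta\gamma}{\alpha}\Bigl(1 - \bigl(1 - y_0^\beta \tfrac{\alpha}{\beta\gamma}\bigr)\E^{-\alpha f_t}\Bigr).
$$
For $y_0 < y_\sigma$ the factor in parentheses is positive and $\psi_t > 0$ on $\Tt$ is equivalent to Condition~\eqref{eq:Condition_f}; for $y_0 \geq y_\sigma$ the factor is non-positive and the constraint is vacuous, $\overline{\Hh}$ reducing to the Cameron--Martin space. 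The hardest step will be the super-exponential non-exit estimate $\PP(\sup_{t \leq T}|Y^\eps_t - y_0| > \delta) \leq \E^{-c/\eps}$ uniformly in~$\eps$; this must be carried out on the truncated system, where exponential inequalities for Itô integrals with bounded integrands apply. Once this is in hand, lower semi-continuity and goodness of $\Ir^{X,Y}$, together with the final contraction, follow by now-classical arguments.
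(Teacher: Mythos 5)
Your high-level strategy coincides with the paper's: first a joint pathwise LDP for $(X^\eps,Y^\eps)$ on $\Cc(\Tt,\RR\times\RR^*_+)$, then projection onto the first coordinate by the contraction principle (the paper invokes~\cite[Theorem~4.2.1]{DZ} in exactly this way). Your identification of the role of Condition~\eqref{eq:Condition_f}, via the explicit solution $\Ss^{y_0}(f)$, also matches the paper. Where you diverge is in how you propose to obtain the joint LDP, and there are two genuine gaps there.

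First, you assert that the truncated It\^o map is ``continuous, by a Gronwall argument'', so that Schilder plus the contraction principle gives the truncated Freidlin--Wentzell LDP. This is not correct as stated: the solution map $W\mapsto Y$ of an It\^o SDE with Lipschitz diffusion coefficient is not continuous from $\Cc(\Tt)$ to $\Cc(\Tt)$ — this is precisely the obstruction that the Azencott/Freidlin--Wentzell ``approximate contraction'' estimates, or rough-path theory, are designed to circumvent, and a Gronwall bound does not fix it. (In the present setting, with a single driving Brownian motion, one could salvage the claim via a Doss--Sussmann representation, but you do not invoke it.) The paper avoids this pitfall by appealing to Peithmann's Theorems~2.7 and~2.9~\cite{P07}, which formulate the bounded-Lipschitz LDP as an approximate-contraction estimate near each control $f$ rather than as a continuity statement.

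Second, and more seriously, your exponentially-good-approximation step requires $c_K\to\infty$ as $K\uparrow(0,\infty)$, and this fails at the left boundary when $y_0<y_\sigma$. From the explicit formula for $\Ss^{y_0}(f)$ it is clear that a control $f$ \emph{not} in $\overline{\Hh}$ drives the $Y$-component of the controlled ODE to $0$ in finite time at \emph{finite} cost $\Lambda(f)$; consequently, the cost to exit a compact $K\subset(0,\infty)$ through its left endpoint stays bounded as that endpoint is pushed towards $0$. Your estimate $\PP(\tau^\eps_K<T)\le\E^{-c_K/\eps}$ is therefore valid with a constant $c_K$ that does \emph{not} diverge, and the exponential-equivalence argument does not close. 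This is exactly the reason the paper restricts the admissible controls to $\overline{\Hh}$ and localises around the controlled path $\Ss^{y_0}(f)$ — which, for $f$ satisfying Condition~\eqref{eq:Condition_f} with $\Lambda(f)\le\lambda$, stays uniformly bounded away from $0$ — rather than around a fixed compact. Your argument would need a comparable refinement (localisation level depending on the control, not a single global truncation) to be correct in the regime $y_0<y_\sigma$.
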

The proof of the theorem relies on first obtaining a large deviations principle for the rescaled process~$Y^\eps$, 
which we state below (and defer its proof to Appendix~\ref{proof:LDP_Y}).
Similarly to above, denote~$\Ss^y_2(f)$ the solution to the controlled ODE 
$\dot{g}_t = \widetilde{\sigma} (g_t) \dot{f}_t$, with $g_0 = y_0$.

\begin{proposition}\label{pp:LDP_Y}
The rescaled process $Y^\eps$ satisfies a pathwise large deviations principle on $\Cc([0,T],\RR^*_+)$ as $\eps$ tends to zero with speed $\eps$ and rate function
\begin{equation}
\label{eq:rf_LDP_ST}
\Ir^Y(g) :=
\inf \left\{ \Lambda (f), f\in \overline{\Hh}, \Ss_2^{y_0}(f) = g \right\}.
\end{equation}
\end{proposition}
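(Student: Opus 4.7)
The plan is to establish the LDP for $Y^\eps$ via an Azencott/Freidlin--Wentzell style argument, but with two complications to overcome: the diffusion coefficient $\widetilde\sigma$ is only locally Lipschitz on $\RR^*_+$ (unbounded near the origin when $\beta>1$, and vanishing at $y_\sigma$), and the drift $b$ has quadratic growth. The key structural observation is that the drift enters $Y^\eps$ with coefficient $\eps$ rather than $\sqrt\eps$, so it is a small perturbation relative to the noise and contributes neither to the speed nor to the rate function. The strategy therefore combines Schilder's theorem for $\sqrt\eps W$, a contraction-principle argument on a truncated coefficient system, and a super-exponential estimate showing the truncation is asymptotically invisible.

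More concretely, I would first apply Schilder's theorem to $\sqrt\eps W$ on $\Cc(\Tt, \RR)$ at speed $\eps$ with rate function $\Lambda$. Then, for each integer $N\geq 1$, introduce a bounded Lipschitz modification $(\widetilde\sigma_N, b_N)$ of $(\widetilde\sigma, b)$ that coincides with the original coefficients on the compact set $K_N := [1/N, N] \subset \RR^*_+$. The truncated SDE $Y^{\eps,N}$ admits a Freidlin--Wentzell LDP (e.g.\ via~\cite{DZ}, Thm.~5.6.3) with rate function $\Ir_N(g) := \inf\{\Lambda(f): \Ss^{y_0}_N(f) = g\}$, where $\Ss^{y_0}_N$ denotes the Itô map associated with $(\widetilde\sigma_N, b_N)$. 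The identification of the limiting rate function with $\Ir^Y$ in~\eqref{eq:rf_LDP_ST} exploits the fact that the controlled ODE $\dot g_t = \widetilde\sigma(g_t)\dot f_t$ can be integrated explicitly: separating variables yields
$$
g_t^{\beta} = \frac{\beta\gamma}{\alpha} - \Big(\frac{\beta\gamma}{\alpha} - y_0^{\beta}\Big)\exp(-\alpha f_t),
$$
and positivity of $g_t$ forces exactly the constraint $\inf_{t\in\Tt} f_t \geq \alpha^{-1}\log(1 - y_0^{\beta} \alpha/(\beta\gamma))$ that defines $\overline\Hh$ (vacuous when $y_0 \geq y_\sigma$). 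This explicit formula also shows that the controlled trajectory cannot cross $y_\sigma$, which is consistent with Theorem~\ref{thm:Behaviour}.

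The bridge between the truncated and untruncated systems is the exponential equivalence
$$
\limsup_{\eps \downarrow 0} \eps \log \PP\Big(\sup_{t\in\Tt} |Y^\eps_t - Y^{\eps,N}_t| > \delta\Big) \underset{N\to\infty}{\longrightarrow} -\infty,
$$
which reduces to showing that $\PP(\tau^{\eps}_N \leq T)$ decays super-exponentially at speed $\eps$, where $\tau^{\eps}_N$ is the first exit time of $Y^\eps$ from $K_N$. The non-explosion at $+\infty$ is handled using Proposition~\ref{prop:Origin} (infinity is recurrent) combined with Novikov-type exponential martingale inequalities applied to the stochastic integral $\int_0^{\cdot} \widetilde\sigma(Y^\eps_s) \D W_s$; near the origin, the boundary analysis of Proposition~\ref{prop:Origin} provides a natural Lyapunov function (the scale function computed there) for controlling the exit probability at speed $\eps$. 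Once exponential equivalence is established, Theorem~4.2.13 of~\cite{DZ} transfers the LDP from $Y^{\eps,N}$ to $Y^\eps$, and a standard lower-semicontinuous envelope argument identifies the limit of $\Ir_N$ with $\Ir^Y$.

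The main obstacle is precisely the super-exponential estimate in the exponential-equivalence step. Unlike the classical Freidlin--Wentzell setting, the diffusion $\widetilde\sigma(y) = -\alpha y/\beta + \gamma y^{1-\beta}$ is genuinely singular at the origin when $\beta > 1$, and its linear growth at infinity is accompanied by a quadratic drift $b$; both facts invalidate the standard Grönwall-type a priori estimates. I would address this by performing the Lamperti-type change of variables $Z^\eps := \Phi(Y^\eps)$ with $\Phi(y) := \int^y \D u /\widetilde\sigma(u)$, which transforms the diffusion coefficient to a constant on each side of $y_\sigma$, reducing the exit estimates to Gaussian concentration for a process with locally bounded drift — a setting in which Bernstein-type inequalities deliver exponential bounds of the required order.
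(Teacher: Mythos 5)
Your proposal takes a genuinely different route from the paper. Both share the key preliminary observations (local Lipschitz continuity of $\widetilde\sigma$ on $\RR^*_+$, the explicit solution of the controlled ODE $\dot g = \widetilde\sigma(g)\dot f$ obtained by separation of variables, the positivity constraint defining $\overline\Hh$, and the need to truncate the coefficients), but the machinery applied afterward differs. You follow the Dembo--Zeitouni scheme: Schilder's theorem, a Freidlin--Wentzell LDP for a truncated process $Y^{\eps,N}$, and exponential equivalence via a super-exponential estimate on the exit time $\tau^\eps_N$ from a growing compact set, glued together by~\cite[Theorem~4.2.13]{DZ}. The paper instead adapts~\cite[Theorem~2.9]{P07}, whose engine is a \emph{local} Freidlin--Wentzell inequality~\cite[Theorem~2.7]{P07}: for a fixed control $f\in\overline\Hh$ with $\Lambda(f)\le\lambda$, if $\sqrt\eps W$ lies in a $\zeta$-tube around $f$ then $Y^\eps$ lies in a $\delta$-tube around $\Ss^{y_0}(f)$, up to a super-exponentially small exceptional set. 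Because the trajectories $\Ss^{y_0}(f)$ over the sub-level set $\{\Lambda\le\lambda\}$ stay in a fixed compact subset of $\RR^*_+$ bounded away from $0$ and $\infty$ (which is precisely what Condition~\eqref{eq:Condition_f} and the recurrence at infinity ensure), the truncated coefficients coincide with the original ones on the relevant tube, and the inequality for the truncated system transfers immediately. No global tail estimate on $Y^\eps$ is ever needed.

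The route you chose places the entire burden on the super-exponential estimate, and the Lamperti-transform argument you sketch for it has a genuine gap. With $\Phi'=1/\widetilde\sigma$ the diffusion coefficient of $Z^\eps:=\Phi(Y^\eps)$ does become constant, but the drift becomes $\eps\bigl[b/\widetilde\sigma-\tfrac12\widetilde\sigma'\bigr]\circ\Phi^{-1}$, and since $\widetilde\sigma$ vanishes linearly at $y_\sigma$ while $b(y_\sigma)\neq 0$, the term $b/\widetilde\sigma$ diverges there. In $Z$-coordinates $y_\sigma$ is mapped to $\pm\infty$ (and $\Phi$ is not even a single diffeomorphism of $\RR^*_+$: it is singular at the interior point $y_\sigma$, so the image is disconnected), and the transformed drift grows \emph{exponentially} in $|Z|$ near that end, not sublinearly. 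Thus the claim that you are left with a process of ``locally bounded drift'' amenable to Bernstein-type Gaussian concentration fails, and the exit estimate across the compact set $K_N=[1/N,N]\ni y_\sigma$ needs a separate argument — either a Lyapunov function that works through the degeneracy, or a truncation scheme that avoids $y_\sigma$ altogether, together with a crossing estimate. The paper sidesteps all of this because the controlled trajectories can never touch $y_\sigma$ (the explicit formula makes $g_t=y_\sigma$ impossible for finite $f_t$), so the degeneracy simply never enters the localised tube argument.
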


Large deviations have been used extensively in Mathematical Finance to derive asymptotic behaviours of the implied volatility and we refer the reader to the monograph~\cite{FrizBook} for a thorough overview. 
The latter, $\Sigma_t (k)$, is the unique non-negative solution to
$\Cc_{\BS}(t, \E^k, \Sigma_t (k)) = \Cc_{obs}(t, \E^k)$, with $\Cc_{obs}(t, \E^k)$ an observed (or computed) Call option price with maturity~$t$ and strike~$\E^k$,
and~$\Cc_{\BS}$ is the corresponding Call price in the Black-Scholes model.
A large deviations principle is the first step to understand the short-time behaviour of the process,
and going from there to the corresponding behaviour of the implied volatility requires
a few small steps that we follow in Appendix~\ref{cor:SmallTimeIV_proof}.

\begin{corollary}\label{cor:SmallTimeIV}
For $y_0 \in \Dom =(y_\sigma, \infty)$, with $y_\sigma < 1$, small-time out-of-the-money options behave as
\begin{equation*}
\begin{array}{ll}
\lim_{t \downarrow 0} t \log \EE\left[\left(S_t - \E^k\right)_{+}\right]
= - \inf _{y \geq k} \mathrm{I}^X (g)\rvert_{g(1)= y},
& \text{if }k>0,\\
\lim_{t \downarrow 0} t \log \EE\left[\left(\E^k - S_t\right)_{+}\right]
=  - \inf _{y \leq k} \mathrm{I}^X (g)\rvert_{g(1)= y},
& \text{if }k<0.
\end{array}
\end{equation*}
\end{corollary}

Similarly to~\cite[Theorem 2.4]{FJ09Heston},
we finally deduce the behaviour of the short-time smile:
\begin{corollary}\label{cor:asymptoticIV}
For $y_0 \in \Dom =(y_\sigma, \infty)$, with $y_\sigma < 1$, the implied volatility behaves as
\begin{equation*}
\lim_{t \downarrow 0} \Sigma_t (k) = 
\left\{\begin{array}{ll}
\displaystyle \frac{k^2}{2} \left(\inf _{y \geq k} \mathrm{I}^X (g)\rvert_{g(1)= y} \right)^{-1},
& \text{if }k>0,\\
\displaystyle \frac{k^2}{2} \left(\inf _{y \leq k} \mathrm{I}^X (g)\rvert_{g(1)= y} \right)^{-1},
& \text{if }k<0.
\end{array}
\right.
\end{equation*}
\end{corollary}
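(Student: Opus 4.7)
The plan is to combine the path-space large deviations principle of Theorem~\ref{thm:LDP_X} with the Berestycki--Busca--Florent recipe that converts small-maturity price asymptotics into implied volatility asymptotics. Throughout, I write $x_0:=\log s_0$ and set $\Ir^X_e(y):=\inf\{\Ir^X(g):g(1)=y\}$; the case $k>x_0$ (out-of-the-money call) is treated in detail, the case $k<x_0$ being identical with put prices and put-call parity.

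\emph{Step 1: endpoint LDP.} Since $X_t$ has the same law as $X^t_1$ and the evaluation map $\pi_1:g\mapsto g(1)$ is continuous on $\Cc(\Tt,\RR)$, the contraction principle applied to Theorem~\ref{thm:LDP_X} yields an LDP for $X_t$ on $\RR$ with speed~$t$ and good rate function $\Ir^X_e$. Combining the LDP upper bound on the closed set $[k,\infty)$ with the lower bound on the open set $(k,\infty)$, and using lower semi-continuity of $\Ir^X_e$ to equate the two infima, I obtain
\[
\lim_{t\downarrow 0} t\log \PP(X_t\ge k) = -\inf_{y\ge k}\Ir^X_e(y).
\]

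\emph{Step 2: from tails to call prices.} Writing $\Cc_{obs}(t,\E^k)=\EE[(\E^{X_t}-\E^k)^+]$, the lower bound
\[
\liminf_{t\downarrow 0} t\log \Cc_{obs}(t,\E^k) \ge -\inf_{y\ge k}\Ir^X_e(y)
\]
follows from $(\E^{X_t}-\E^k)^+\ge(\E^{k+\delta}-\E^k)\,\ind_{\{X_t\ge k+\delta\}}$, the tail estimate of Step~1 applied at level $k+\delta$, and letting $\delta\downarrow 0$. For the matching upper bound, I would split the expectation at a high cut-off $M$, treat the bounded part with Varadhan's lemma, and control the unbounded tail $\EE[\E^{X_t}\ind_{\{X_t\ge M\}}]$ via the martingale property $\EE[\E^{X_t}]\le s_0$ and an exponential moment bound, before letting $M\uparrow\infty$.

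\emph{Step 3: Black--Scholes inversion.} A direct Gaussian tail expansion of the Black--Scholes call with log-strike $k>x_0$ gives
\[
\lim_{t\downarrow 0} t\log \Cc_{BS}(t,\E^k,\Sigma_t(k)) = -\frac{(k-x_0)^2}{2\,\Sigma_0^2(k)},\qquad \Sigma_0(k):=\lim_{t\downarrow 0}\Sigma_t(k).
\]
Matching this with Step~2 through the defining equation $\Cc_{BS}(t,\E^k,\Sigma_t(k))=\Cc_{obs}(t,\E^k)$ and solving for $\Sigma_0(k)$ produces
\[
\Sigma_0^2(k)=\frac{(k-x_0)^2}{2\inf_{y\ge k}\Ir^X_e(y)},
\]
which, normalising $s_0=1$ so that $x_0=0$, is the announced formula.

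\emph{Main obstacle.} The delicate point is the upper bound in Step~2: upgrading a path-space LDP in the supremum topology to an LDP for $\EE[(\E^{X_t}-\E^k)^+]$ requires exponential integrability of $\E^{(1+\eta)X_t}$ uniformly in small~$t$ for some $\eta>0$. Establishing this under the path-dependent dynamics~\eqref{eq:SDE}, with $\sigma$ as in~\eqref{eq:Sigma} blowing up at the origin of~$Y$, is genuinely non-trivial. I would exploit the boundary analysis of Proposition~\ref{prop:Origin} and Theorem~\ref{thm:Behaviour}, together with a localisation on the event $\{\inf_{s\le t}Y_{s}\ge \eta\}$ whose complement must be shown to have super-exponentially small probability on the scale~$\eps\log$, so that its contribution does not affect the logarithmic asymptotics.
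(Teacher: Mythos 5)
Your overall strategy matches the paper's: pass from the path-space LDP of Theorem~\ref{thm:LDP_X} to an endpoint LDP for $X_t$ by contraction (your Step~1 is exactly what the paper does when it ``takes $\eps=t$''), and then convert small-time tail asymptotics into implied-volatility asymptotics. The difference is in how the conversion is packaged: the paper simply invokes the model-free transfer result~\cite[Corollary 7.1]{GL14}, which encapsulates both your Step~2 (tails $\to$ option prices) and your Step~3 (Black--Scholes inversion) in one citation, whereas you carry out that transfer by hand in the Berestycki--Busca--Florent style. Your more explicit route has the virtue of honestly surfacing the one genuinely delicate point, which the paper glosses over entirely: to upgrade the path-space LDP in supremum norm to an estimate on $\EE[(\E^{X_t}-\E^k)^+]$, or equivalently to check the hypotheses of~\cite[Corollary 7.1]{GL14}, one needs exponential integrability of $\E^{(1+\eta)X_t}$ uniformly in small~$t$ (or a comparable control on the right tail). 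You correctly note that this is non-trivial here because $\sigma$ in~\eqref{eq:Sigma} is unbounded as $Y\downarrow 0$, and that it should be attacked via the boundary classification of Proposition~\ref{prop:Origin} and Theorem~\ref{thm:Behaviour} together with a localisation whose complement is shown to be super-exponentially negligible at speed~$t$. So your proposal is correct in outline and is the same argument at a finer resolution; what it buys over the paper's one-line citation is transparency about the moment condition that both proofs ultimately need, and what it costs is the extra work of establishing that condition, which neither you nor the paper actually completes.
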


\appendix

\section{Proof of Theorem~\ref{Thm_behaviour}}
\label{app:Proof_Thm_behaviour}

The proof below relies on the techniques developed in~\cite[Chapter 15, Section 6]{KT81}. 

\subsection{Proof for the domain $\Dom=(y_\sigma, \infty)$}		\subsubsection{Left boundary $y_\sigma$}	
	The classification of the left boundary $y_\sigma$ follows from Lemma~\ref{lem_y_sigma_KT}.
	Introduce, on the domain $(0, \infty)$, the process $Z := (Y - y_\sigma)$, satisfying the SDE
	$$
	\D Z_t = \overline{b}(Z_t) \D t + \overline{\sigma}(Z_t) \D W_t, 
	\qquad Z_0 := y_0 - y_\sigma >0,
	$$
	with $\overline{b}(z) := b(z+y_\sigma)$ 
	and $\overline{\sigma}(z) := \widetilde{\sigma}(z+y_\sigma)$, for $z>0$.
	Armed with Lemma~\ref{lem_y_sigma_KT},  we attack the boundary classification of the origin for~$Z$, 
	which corresponds to the classification of the left boundary~$y_\sigma$ for the original process~$Y$.
	The classification for the different cases, namely $y_	\sigma > 1$, $y_\sigma = 1$ and $y_\sigma < 1$ follows from a 	careful inspection of~\cite[Table 6.1, Chapter 15]{KT81} together with~\cite[Lemma 6.3, Chapter 15]{KT81}.
	Introduce
\begin{equation}\label{eq_def_s_bar}
\begin{array}{rlrlrl}
    \displaystyle \overline{s}(x)
    & := \displaystyle \exp\left\{ \int_x^{\overline{a}} \frac{2\overline{b}(y)}{\overline{\sigma}^2(y)} \D y \right\},
     & 
    \displaystyle \overline{S}(0,{\overline{a}}]
    & := \displaystyle \int_0^{\overline{a}} \overline{s} (y) \D y,
     & 
    \displaystyle \overline{S}[x,{\overline{a}}]
     & := \displaystyle \int_x^{\overline{a}} \overline{s} (y) \D y,\\
        \displaystyle \overline{M}(0,{\overline{a}}]
    & := \displaystyle \int_0^{\overline{a}} \frac{\D x}{\overline{\sigma}^2(x)\overline{s}(x)},
     & 
    \displaystyle \overline{\Sigma}(0)
    & := \displaystyle \int_0^{\overline{a}} \frac{\overline{S}(0,x]}{\overline{\sigma}^2(x)\overline{s}(x)} \D x,
    & \displaystyle \overline{N}(0) & := \displaystyle \int_0^{\overline{a}} \frac{\overline{S}[x,{\overline{a}}]\D x}{\overline{\sigma}^2(x)\overline{s}(x)},
\end{array}
\end{equation}
for $\overline{a} > 0$ and $x\in (0,\overline{a}]$. 
We then deduce their behaviour.
\begin{lemma}\label{lem_y_sigma_KT}
The following hold:
\begin{equation*}
\begin{array}{llll}
\displaystyle  \overline{S}(0,{\overline{a}}]< \infty,
&\displaystyle  \overline{M}(0,{\overline{a}}] = \infty,
&\displaystyle  \overline{\Sigma}(0)< \infty,
&\displaystyle  \text{if }y_\sigma > 1,\\
\displaystyle  \overline{S}(0,{\overline{a}}]< \infty,
&
&\displaystyle  \overline{\Sigma}(0) = \infty,
&\displaystyle  \text{if }y_\sigma = 1,\\
\displaystyle  \overline{S}(0,{\overline{a}}] = \infty,
& 
&\displaystyle  \overline{N}(0) < \infty,
&\displaystyle  \text{if }y_\sigma < 1.
\end{array}
\end{equation*}
\end{lemma}

\begin{remark}
We do not need $\overline{M}(0,{\overline{a}}]$ in the second and third cases because only a few combinations for boundedness/unboundedness of these quantities are possible. 
They are displayed in~\cite[Table 6.1, page 233]{KT81}. 
In particular, in the second line, $\overline{S}(0,{\overline{a}}]$ can be finite and $\overline{\Sigma}(0)$ not if and only if $\overline{M}(0,{\overline{a}}]$ and $\overline{N}(0)$ are infinite. 
In the third line $\overline{N}(0)$ finite implies $\overline{M}(0,{\overline{a}}]$ finite and $\overline{S}(0,{\overline{a}}]$ infinite implies $\overline{\Sigma}(0)$ infinite by~\cite[Lemma 6.3, page 231]{KT81}. 
Similar arguments motivate the form of the statements in Lemmas~\ref{Lem_bb_r_infty}-\ref{Lem_bb_l_zero}-\ref{lemma_y_sig_2}.
\end{remark}

\begin{proof}[Proof of Lemma~\ref{lem_y_sigma_KT}]
We start with the limiting behaviour of the function~$\overline{s}$ and its integral, the scale measure.
Since
\[
\frac{2\overline{b}(y)}{\overline{\sigma}^2(y)}
=\frac{2(1-y_\sigma-y)}{h\gamma^2(y+y_\sigma)^{1-2\beta}\left(1-(1+\frac{y}{y_\sigma})^\beta\right)^2},
\]
a straightforward Taylor expansion around the origin yields
\begin{equation}\label{eq:Expan_right_m}
\left(1-\left(1+\frac{y}{y_\sigma}\right)^\beta\right)^{-2}
= \frac{y^2_\sigma}{\beta^2 y^2}
\left[ 1 - \chi_1 \frac{y}{y_\sigma} + \chi_2 
\frac{y^2}{y_\sigma^2}
-\chi_3\frac{y^3}{y_\sigma^3} + \Oo(y^4) 
\right],
\end{equation}
with 
\begin{equation}\label{eq:ABCbeta_1}
\chi_1 :=  {\beta-1},
\qquad
\chi_2 := \frac{(5\beta-1)(\beta-1)}{12},
\qquad
\chi_3 := \displaystyle  \frac{(\beta^2-1)\beta}{12}.
\end{equation}

Introduce
$\tbrK := \frac{2y^{2\beta +1}_\sigma (1- y_\sigma)}{h \beta^2 \gamma^2}$ 
, and 
$\tbrKab := -\frac{1}{{\overline{a}}} + \tbrchif \log ({\overline{a}}) + \tbrchis{\overline{a}}$.
Using~\eqref{eq:Expan_right_m}, we obtain the asymptotic behaviour, as~$y$ approaches zero,
\begin{equation}\label{eq:Expan_right_integrand}
    \frac{2\overline{b}(y)}{\overline{\sigma}^2(y)}
    = \frac{\tbrK}{y^2} \left( 1 + \tbrchif y + \tbrchis y^2 -\tbrchit y^3 + \Oo\left(y^4\right) \right),
\end{equation}
for some constants $\tbrchif, \tbrchis, \tbrchit$ depending on $\chi_1, \chi_2, \chi_3$,
the values of which are not important\footnote{In the following the symbols $\tbrchif, \tbrchis, \tbrchit$ will refer to different quantities whose specific values are not important for the convergence of the quantities we are interested in.}.
Notice that $\tbrK < 0$, as $y_\sigma > 1$.
Since the expansion is uniform on $[x,{\overline{a}}]$, one obtains
\begin{align}\label{eq:Expan_right_rho_m}
    \overline{s} (x) 
    &= \exp\left\{ 2\int_x^{\overline{a}} \frac{\overline{b}(y)}{\overline{\sigma}^2 (y)} \D y \right\}
    = \exp\left\{\frac{\tbrK}{x}\left(1-\overline{\chi_1} x \log x\right)\right\} 
    \E^{\tbrK \tbrKab} \exp\left\{- \overline{\chi_2} \tbrK x + \Oo(x^2)\right\}\nonumber\\
    &=\exp\left\{\frac{\tbrK}{x} + \tbrK \tbrKab\right\} x^{-\tbrK\overline{\chi_1}} (1+\Oo(x)).
\end{align}
Since $\tbrK < 0$, $\overline{s}(x)$ tends to zero as~$x$ tends to zero from above, and $\overline{S}(0,{\overline{a}}]$ is finite. 

In the case $y_\sigma < 1$, the expansion~\eqref{eq:Expan_right_rho_m} is still valid, albeit with $\tbrK > 0$. 
Therefore,~$\overline{s}$ explodes at the origin and $\overline{S}(0,{\overline{a}}]=\int_0^{\overline{a}} \overline{s} (y) \D y$ is infinite.

The case $y_\sigma = 1$ is slightly different and has to be studied separately. First of all, a Taylor expansion around the origin provides
\begin{equation}\label{eq:Expan_right_1_m}
\left[1- \left(1+y\right)^{\beta} \right]^{-2}
= \frac{1}{\beta^2 y^2} \left[ 1 - \chi_1 y + \chi_2 
y^2 -\chi_3 y^3 + \Oo\left(y^4\right) \right],
\end{equation}
with $\chi_1, \chi_2$ and $\chi_3$ in~\eqref{eq:ABCbeta_1}.
This implies, as~$y$ approaches zero, 
\begin{equation}
\label{eq:Expan_right_integrand_1_m}
\frac{2\overline{b}(y)}{\overline{\sigma}^2(y)}
= - \frac{2y}{h \gamma^2 (y+1)^{1-2\beta}} \frac{1}{(1-(1+y)^\beta)^2}
= \tbrooK \left( \frac{1}{y} + \tbrchif  + \tbrchis x + \tbrchit y^2+ \Oo\left(y^3\right) \right),
\end{equation}
with 
$\tbrooK := -\frac{2}{h \beta^2 \gamma^2} < 0$, and 
$\tbrooKab := \log({\overline{a}}) + \tbrchif {\overline{a}} + \frac{\tbrchis}{2} {\overline{a}}^2 +\frac{\tbrchit}{3}{\overline{a}}^3$. 
Since the expansion is uniform on $[x,{\overline{a}}]$, one obtains
\begin{equation}\label{eq:Expan_right_rho_1_m}
\begin{aligned}\displaystyle
    \overline{s} (x) 
    &= \exp\left\{ 2\int_x^{\overline{a}} \frac{\overline{b}(y)}{\overline{\sigma}^2 (y)} \D y \right\}
    = \exp\left\{\tbrooK\left(-\log(x)-{\tbrchif}x + \Oo(x^2)\right)\right\}
    \exp\left\{  \tbrooK\, \tbrooKab \right\}\\
    &=\exp\left\{-\tbrooK \log(x) + \tbrooK\, \tbrooKab\right\} (1+\Oo(x))
    =\exp\left\{ \tbrooK\, \tbrooKab\right\} x^{-\tbrooK}(1+\Oo(x)).
\end{aligned}
\end{equation}
Since $\tbrooK<0$, $\overline{s}(x)$ tends to zero as~$x$ tends to zero
and therefore $\overline{S}(0,{\overline{a}}]=\int_0^{\overline{a}} \overline{s} (x) \D x$ is finite. 

Now, for  $y_\sigma > 1$, it is straightforward to see that 
\begin{displaymath}
    \overline{M}(0,{\overline{a}}]
    =\int_0^{\overline{a}} \frac{\D x}{\overline{s}(x)\overline{\sigma}^2(x)}  
    = \int_0^{\overline{a}} \frac{\D x}{\overline{s}(x)\widetilde{\sigma}^2(x+y_\sigma)}
    \ge \int_0^{\overline{a}} \frac{\D x}{\widetilde{\sigma}^2(x+y_\sigma)},
\end{displaymath}
which is clearly infinite, because~$\overline{s}$ is bounded above by~$1$ on~$(0,{\overline{a}}]$ and from the asymptotic behaviour of the integrand around the origin. Indeed, for $\eps >0$, 
\begin{align}\label{eq:Isolated}
\int_{0}^{\eps} \frac{1}{\widetilde{\sigma}^2(x+y_\sigma)} \D x
&
\ge \frac{\tbrepsK}{\gamma^2} \int_0^{\eps} \left( 1- \left( 1+\frac{x}{y_\sigma} \right)^\beta \right)^{-2}\D x
\end{align}
with $\tbrepsK := (y_\sigma + \eps)^{-2(1-\beta)}$ for $\beta\in (0,1)$, $\tbrepsK := y_\sigma^{2(\beta-1)}$ for $\beta >1$ and $\tbrepsK := 1$ for $\beta =1$.
Recalling the Taylor expansion in~\eqref{eq:Expan_right_m}, we see that the integrand is not integrable around zero.
We thus conclude about the right behaviour of $y_\sigma$ by noting that the integral representation of $\overline{M}(0,{\overline{a}}]$ diverges.

We now prove the last statement of the lemma, and start with the case $y_\sigma > 1$.
Using~\eqref{eq:Expan_right_rho_m}, we write the asymptotic behaviour of~$\overline{S}(0,x]$ around the origin 
by integrating the asymptotic behaviour of $\overline{s}(\cdot)$ around zero. 
Classical asymptotic expansions for integrals~\cite[Chapter~3.3, pages~62 and~67]{Miller} 
(note that the leading contribution arises at the right boundary of the integration domain) yields, 
after the change of variable $y \mapsto z x$,
\begin{align*}
    \overline{S}(0,x]
    & =  \int_0^x \overline{s} (y) \D y\\
    & =  \E^{\tbrK \tbrKab} x^{-\tbrK\tbrchif +1}\int_0^1 \exp\left\{ \frac{\tbrK}{zx} \right\} z^{-\tbrK\tbrchif} \D z
    =  \E^{\tbrK \tbrKab} x^{-\tbrK\tbrchif+1} \exp\left\{ \frac{\tbrK}{x} \right\} \left( -\frac{x}{\tbrK} + \Oo(x^2) \right),
\end{align*}
as $x$ tends to zero. Combining this with~\eqref{eq:Expan_right_m} and~\eqref{eq:Expan_right_rho_m}, we obtain
\begin{align*}
    \frac{\overline{S}(0,x]}{\overline{s}(x) \overline{\sigma}^2 (x)}
    = \frac{y^{2\beta}_\sigma}{\beta^2 \gamma^2} x \left(-\frac{x}{\tbrK} + \Oo(x^2) \right) \frac{1}{x^2} \left( 1 + \Oo(x) \right) 
    = -\frac{y^{2\beta}_\sigma}{\tbrK \beta^2 \gamma^2} \left( 1 + \Oo(x) \right),
\end{align*}
which is integrable on $(0,{\overline{a}}]$ and concludes the proof, using the fact that
$$
\overline{\Sigma}(0)
= \int_0^{\overline{a}} \frac{\overline{S}(0,x]}{\overline{s}(x) \overline{\sigma}^2(x)} \D x < \infty.
$$
When $y_\sigma = 1$, using~\eqref{eq:Expan_right_rho_1_m}, we can write the asymptotic behaviour of~$\overline{S}(0,\cdot]$ around the origin by integrating that of~$\overline{s}(\cdot)$ around zero. 
This yields, after the change of variable $y \mapsto z x$,
$$
\overline{S}(0,x]
=  \int_0^x \overline{s} (y) \D y
=  \E^{\tbrooK \tbrooKab} x^{-\tbrooK+1}
 \int_0^1 z^{-\tbrooK} \D z
=  \frac{\E^{\tbrooK \tbrooKab}}{1-\tbrooK} x^{-\tbrooK+1}
\left( 1 + \Oo(x) \right),
$$
as $x$ tends to zero. 
Exploiting this together with~\eqref{eq:Expan_right_1_m} and~\eqref{eq:Expan_right_rho_1_m}, we obtain
\begin{align*}
    \frac{\overline{S}(0,x]}{\overline{s} (x) \overline{\sigma}^2 (x)}
    = \frac{1}{\beta^2 \gamma^2 (1-\tbrooK)}\frac{1}{x}
    \left( 1 +\Oo(x) \right),
\end{align*}
which is not integrable on $(0,{\overline{a}}]$ and thus
$$
\overline{\Sigma}(0)
=\int_0^{\overline{a}} \frac{\overline{S}(0,x]}{\overline{s} (x) \overline{\sigma}^2(x)} \D x = \infty.
$$

When $y_\sigma < 1$, we look at $\overline{N}(0)=\int_0^{\overline{a}} \frac{\overline{S}[x,{\overline{a}}]}{\overline{\sigma}^2(x)\overline{s}(x)} \D x$.
Since both $\overline{S}[a,\cdot]$ and $\overline{s}(\cdot) \overline{\sigma}^2(\cdot)$ diverge to infinity at the origin, 
we study the behaviour of the integrand around zero.
For $\delta >0$ such that $x < \delta < {\overline{a}}$ and $x>0$,
$
\overline{S}[x,{\overline{a}}]
=\int_x^{\overline{a}} \overline{s} (y) \D y 
= \int_x^{\delta} \overline{s} (y) \D y + \int_{\delta}^{\overline{a}} \overline{s} (y) \D y$. 
The second integral exists since~$\overline{s}$ is continuous on compacts in~$\RR_+$.
Regarding the first one, classical asymptotic expansions for integrals and~\eqref{eq:Expan_right_rho_m}, yield, after the change of variable $y \mapsto zx$,
$$
\int_x^{\delta} \overline{s} (y) \D y
= \E^{\tbrK \tbrKab} x^{1-\tbrK\tbrchif} \int_1^{\delta / x} \exp\left\{ \frac{\tbrK}{xz} \right\} z^{-\tbrK\tbrchif} \D z
= \E^{\tbrK \tbrKab} x^{1-\tbrK\tbrchif} \E^{\frac{\tbrK}{x}} \left(\frac{x}{\tbrK} + \Oo(x^2)\right),
$$
and the asymptotic behaviour of the integrand around the origin becomes
$$
\frac{\overline{S}[x,{\overline{a}}]}{\overline{s} (x) \overline{\sigma}^2(x)}
= \frac{y^{2\beta}_\sigma}{\beta^2 \gamma^2} \left(\frac{x^2}{\tbrK} + \Oo(x^3)\right) \frac{1}{x^2} (1+\Oo(x))
= \frac{y^{2\beta}_\sigma}{\beta^2 \gamma^2 \tbrK} (1+ \Oo(x)),
$$
which is integrable at the origin, and the claim is proved.
\end{proof}

\subsubsection{Right boundary $\infty$}
To end the boundary classification for the first domain we are left to study the behaviour at the right boundary $\infty$.
We exploit the Lemma~\ref{Lem_bb_r_infty} and~\cite[Table 6.1, Chapter 15]{KT81}.
First, for $y > y_\sigma$, let $s(y)=\exp\left\{ - \int_a^{y}\frac{2b(x)}{\widetilde{\sigma}^2(x)}\D x \right\}$, with $a > y_\sigma$ fixed.
Then, recall the definitions of some quantities:
\begin{equation*}
		\begin{array}{lll}
		\displaystyle S[a,\infty)=\int_a^{\infty} s(x)\D x,
		&\displaystyle   M[a,\infty)=\int_a^\infty \frac{\D x}{\widetilde {\sigma}^2(x) s(x)},
		&\displaystyle  N(\infty)=\int_a^{\infty} \frac{S[a,x]}{\widetilde {\sigma}^2(x) s(x)} \D x.
		\end{array}
\end{equation*}
\begin{lemma}\label{Lem_bb_r_infty}
The following hold:
$$
S[a,\infty)=\infty,\qquad\qquad
M[a,\infty) < \infty,\qquad\qquad
N(\infty) < \infty.
$$
\end{lemma}

\begin{proof}
As $y$ tends to infinity,
$$
\frac{2b(y)}{\widetilde{\sigma}^2(y)} = \frac{2(1-y)}{hs}\left(-\frac{\alpha}{\beta}+\gamma y^{-\beta}\right)^{-2} \sim -\frac{2}{h}\left(\frac{\beta}{\alpha}\right)^2,
$$
and therefore, as $x$ tends to infinity,
\begin{equation}\label{Eq_bb_inf_int_s}
-\int_{a}^x\frac{2b(y)}{\widetilde{\sigma}^2(y)} \D y
\sim \frac{2}{h}\left(\frac{\beta}{\alpha}\right)^2 x.
\end{equation}
Then, $ S[a,\infty)=\int_a^{\infty}
\exp \left( - \int_a^{x}\frac{2b(y)}{\widetilde{\sigma}^2(y)}\D y \right)\D x \sim \int_a^{\infty} \exp\left(\frac{2}{h}\left(\frac{\beta}{\alpha}\right)^2 x\right) \D x$ is infinite.
Now $M[a, \infty)$ is finite since 
$\frac{1}{\widetilde {\sigma}^2(x) s(x)}\sim \frac{\exp\left(-\frac{2}{h}\left(\frac{\beta}{\alpha}\right)^2 x\right)}{x^2\left(-\frac{\beta}{\alpha}+\gamma x^{-\beta}\right)}$
as $x \uparrow \infty$, which is integrable.
Finally, $S[a,x]\sim \left[\frac{2}{h}\left(\frac{\beta}{\alpha}\right)^2\right]^{-1}\exp\left(\frac{2}{h}(\frac{\beta}{\alpha})^2 x\right)$
    as $x \uparrow \infty$, and so $\frac{ S[a,x]}{\widetilde {\sigma}^2(x)s(x)} \sim  \frac{\alpha^3 h}{2 \beta^3}\frac{1}{x^2}$, which is integrable at infinity and $N(\infty)$ is finite.
\end{proof}

\subsection{Proof for the domain $\Dom=(0,y_\sigma)$}

\subsubsection{Left boundary $0$}
Consider $\widetilde{s}(x)=\exp\left\{\int_x^{\widetilde{a}} \frac{2b(\xi)}{\widetilde{\sigma}^2(\xi)}\D \xi\right\}$ 
for $y_\sigma > \widetilde{a} > 0$, $x\in (0,\widetilde{a}]$, and
$$
\widetilde{S}(0,\widetilde{a}]
:=\int_0^{\widetilde{a}}  \widetilde s (y) \D y,\qquad\qquad
\widetilde{M}(0,\widetilde{a}]
    :=\int_0^{\widetilde{a}} \frac{\D x}{\widetilde{\sigma}^2(x)\widetilde s(x)} ,\qquad\qquad
\widetilde{\Sigma}(0)
    :=\int_0^{\widetilde{a}} \frac{\widetilde  S(0,x]}{\widetilde {\sigma}^2(x)s(x)}  \D x.
$$
The following lemma, together with~\cite[Table 6.1, Chapter 15]{KT81}, helps for the left boundary~$0$:

\begin{lemma}\label{Lem_bb_l_zero}
For any $\widetilde{a} \in (0, y_\sigma)$,
\begin{equation*}
\begin{array}{llll}
    \displaystyle   \widetilde{S}(0,\widetilde{a}]
    < \infty,
    &\displaystyle  \widetilde{M}(0,\widetilde{a}] < \infty,
    &\displaystyle  \widetilde  \Sigma(0) < \infty,
    &\displaystyle  \quad \text{ if }\beta < \half,\\
    \displaystyle  \widetilde{S}(0,{\widetilde{a}}] < \infty,
    &\displaystyle \widetilde{M}(0,{\widetilde{a}}] = \infty,
    &\displaystyle  \widetilde{\Sigma}(0) < \infty,
    &\displaystyle  \quad \text{ if } \beta \geq \half.
\end{array}
\end{equation*}
\end{lemma}

\begin{proof}[Proof of Lemma~\ref{Lem_bb_l_zero}]
We start by showing that $\widetilde{S}(0,{\widetilde{a}}]$ is always finite.
The only possible issue for integrability is at zero, so we expand the integrand in a neighborhood of the origin:
\begin{equation*}
\frac{2b(s)}{\widetilde{\sigma}^2(s)} 
= \frac{2(1-s)}{h\gamma^2 s^{1-2\beta}}\left[ 1- \frac{\alpha}{\beta \gamma} s^{\beta} \right]^{-2}
= \frac{2 s^{2\beta-1}}{h\gamma^2}
\left[ 1 + \frac{2\alpha}{\beta \gamma} s^{\beta} +3 \left(\frac{\alpha}{\beta \gamma}\right)^2 s^{2\beta}  + \Oo(s^{3\beta}) \right].
\end{equation*}
Since the expansion is uniform on $(0,\widetilde{a})$,
\begin{equation*}
-\int_{\widetilde{a}}^{x} \frac{2b(s)}{\widetilde{\sigma}^2(s)} \D s
= -\frac{2}{\beta h \gamma^2}\left(\frac{x^{2\beta}}{2} + \frac{2\alpha}{3\beta \gamma}x^{3 \beta} + \frac{3\alpha^2}{4\beta^2\gamma^2}x^{4\beta} + \Oo(x^{5 \beta})+ \tbrooKab\right),
\end{equation*}
with $\tbrooKab:= -\frac{{\widetilde{a}}^{2\beta}}{2} - \frac{2\alpha}{3\beta \gamma}{\widetilde{a}}^{3 \beta} - \frac{3\alpha^2}{4\beta^2\gamma^2}{\widetilde{a}}^{4 \beta}$,
we obtain
\begin{equation*}
    \widetilde{s}(x)
    = \exp\left\{-\int_{{\widetilde{a}}}^{x} \frac{2b(s)}{\widetilde{\sigma}^2(s)} \D s\right\}
    = \exp\left\{-\frac{x^{2\beta} + \Oo(x^{3 \beta})}{\beta h \gamma^2}
    - \frac{2 \tbrooKab}{\beta h \gamma^2}\right\}
    = \exp\left\{ -\frac{2}{\beta h \gamma^2}\tbrooKab\right\}\left(1+\Oo\left(x^{2\beta}\right)\right),
\end{equation*}
and so $\widetilde{S}(0,\widetilde{a}]$ is always finite.

Now, around zero we have the Taylor expansions
\begin{equation}\label{eq_T_ex_ss}
\begin{split}
    &\widetilde s(x)=\exp\left\{ - \frac{2\tbrooKab}{\beta h \gamma^2}\right\}\left(1+\Oo\left(x^{2\beta}\right)\right),\\
    &\widetilde{\sigma}^2(x)
    =\frac{1}{\gamma^2x^{2-2\beta}}\left(1-\frac{\alpha}{\beta \gamma}x^\beta\right)^2
    =\frac{1}{\gamma^2x^{2-2\beta}}\left(1+2\frac{\alpha}{\beta \gamma}x^\beta+ 3\frac{\alpha^2}{\beta^2 \gamma^2}x^{2\beta} + \Oo(x^{3\beta})\right)
\end{split}
\end{equation}
and so $(\widetilde{\sigma}^2(x)\widetilde s(x))^{-1}=\exp\left\{ - \frac{2}{\beta h\gamma^2}\tbrooKab\right\}\frac{1}{\gamma^2x^{2-2\beta}}\left(1+2\frac{\alpha}{\beta \gamma}x^\beta + \Oo(x^{2\beta})\right)$, which is integrable around zero, and so $\widetilde  M(0,{\widetilde{a}}]$ finite, if and only if $\beta >\half$.

Finally, we easily compute a Taylor expansion for $\widetilde{S}(0,x]$ around the origin by integration:
\begin{equation}\label{eq_T_ex_sss}
\widetilde{S}(0,x]=\exp\left\{ - \frac{2}{\beta h \gamma^2}\tbrooKab\right\}x\left(1+\Oo\left(x^{2\beta}\right)\right),
\end{equation}
hence using~\eqref{eq_T_ex_ss}, 
the Taylor expansion around the origin for the integrand in~$\widetilde{\Sigma}(0)$ reads
\[
\frac{\widetilde{S}(0,x]}{\widetilde{\sigma}^2(x)\widetilde s(x)}
 = \frac{1+\Oo(x^\beta)}{\gamma^2x^{1-2\beta}}.
\]
Since this is integrable around the origin if and only if $\beta >0$, $\widetilde{\Sigma}(0)$ is finite for all $\beta>0$.
\end{proof}

\subsubsection{Right boundary $y_\sigma$}

The strategy to prove the third table in the theorem is similar, albeit with different computations, to the first case.
Similarly to before, introduce the process $\widehat{Z} := y_\sigma - Y$ satisfying the SDE
$$
\D \widehat{Z}_t = \widehat{b}(Z_t) \D t + \widehat{\sigma}(\widehat{Z}_t) \D W_t, 
\qquad \widehat{Z}_0 := y_\sigma - y_0 >0,
$$
as well as the maps $\widehat{b}(x) := -b(y_\sigma - x)$ and $\widehat{\sigma}(x) := -\widetilde{\sigma}(y_\sigma -x)$ for $x>0$.
With Lemma~\ref{lemma_y_sig_2}, we obtain the boundary classification of the origin as left boundary for~$\widehat{Z}$ on the domain $\Dom=(0, y_\sigma)$ corresponding to the right boundary classification of~$y_\sigma$ for~$Y$ on the same domain.
All the cases $y_\sigma <1$, $y_\sigma = 1$ and  $y_\sigma > 1$ follow~\cite[Table 6.1, Chapter 15]{KT81} and the following lemma.
Introduce, for $y_\sigma > {\widehat{a}} >0$ and $x\in (0,{\widehat{a}}]$,
\begin{equation}\label{eq_def_s_hat}
\begin{array}{rlrlrl}
\widehat{s}(x) & := \displaystyle \exp\left\{ \int_x^{\widehat{a}} \frac{2\widehat{b}(y)}{\widehat{\sigma}^2(y)} \D y \right\},
 & \displaystyle 
\widehat{S}(0,{\widehat{a}}] & := \displaystyle \int_0^{\widehat{a}} \widehat s(y) \D y, & 
\displaystyle \widehat{M}(0,{\widehat{a}}] & := \displaystyle \int_0^{\widehat{a}} \frac{\D x}{\widehat{\sigma}^2(x)\widehat s(x)},\\
\displaystyle 
\widehat{N}(0) & := \displaystyle \int_0^{\widehat{a}} \frac{\widehat S[x,{\widehat{a}}]}{\widehat{\sigma}^2(x)\widehat s(x)} \D x, & 
 \displaystyle 
\widehat \Sigma(0) & := \displaystyle \int_0^{\widehat{a}} \frac{\widehat S(0,x]}{\widehat{\sigma}^2(x)\widehat s(x)}  \D x,
& \displaystyle S[x,{\widehat{a}}] & := \displaystyle 
\int_{x}^{\widehat{a}} \widehat s(y) \D y.
\end{array}
\end{equation}

\begin{lemma}\label{lemma_y_sig_2}
The following hold:
\begin{equation*}
\begin{array}{llll}
    \displaystyle  \widehat S(0,{\widehat{a}}]=\infty,
    &
    &\displaystyle  \widehat N(0)< \infty,
    &\displaystyle  \quad \text{ if }y_\sigma > 1,\\
    \displaystyle  \widehat S(0,{\widehat{a}}] < \infty,
    &
    &\displaystyle  \widehat \Sigma(0) = \infty,
    &\displaystyle  \quad \text{ if }y_\sigma = 1,\\
    \displaystyle  \widehat S(0,{\widehat{a}}] <\infty,
    &\displaystyle  \widehat{M}(0,{\widehat{a}}] = \infty, 
    &\displaystyle  \widehat \Sigma(0) < \infty,
    &\displaystyle  \quad \text{ if }y_\sigma < 1.
\end{array}
\end{equation*}
\end{lemma}

\begin{proof}[Proof of Lemma~\ref{lemma_y_sig_2}]
A straightforward Taylor expansion around the origin yields
\begin{equation}\label{eq:Expan_left_m}
    \left\{1- \left(1-\frac{y}{y_\sigma}\right)^{\beta} \right\}^{-2}
    = \frac{y^2_\sigma}{\beta^2 y^2}\left\{ 1 + \chi_1 \frac{y}{y_\sigma} + \chi_2 \frac{y^2}{y_\sigma^2} + \chi_3\frac{y^3}{y_\sigma^3} + \Oo(x^4) \right\},
\end{equation}
with $\chi_1, \chi_2$ and $\chi_3$ defined in~\eqref{eq:ABCbeta_1}.
We start with the behaviour of the function~$\widehat{s}$ and its integrated version.
Consider first the case $y_\sigma > 1$.
We split the range of possibilities into two possible intervals for ${\widehat{a}}$:
\begin{itemize}
\item[(i)] If ${\widehat{a}}< y_\sigma-1$,
then $y_\sigma-x\ge y_\sigma-{\widehat{a}} > 1$ and $b$ is negative on $[y_\sigma -{\widehat{a}}, y_\sigma -x]$. 
Then, for $x\in (0,{\widehat{a}}]$,
\begin{align*}
\int_x^{\widehat{a}} \frac{\widehat{b}(y)}{\widehat{\sigma}^2(y)} \D y &= \int_x^{\widehat{a}} \frac{-b(y_\sigma-y)}{\widetilde{\sigma}^2(y_\sigma-y)} \D y
= -\int_x^{\widehat{a}} \frac{b(y_\sigma -y)\D y}{(y_\sigma-y)^{2(1-\beta)} \gamma^2 \left(1-(1-\frac{y}{y_\sigma})^{\beta} \right)^2}\\
&= \frac{1}{h} \int_x^{\widehat{a}} \frac{(y_\sigma -y)^{2\beta -1} (y_\sigma - y - 1)}{\gamma^2 \left(1-(1-\frac{y}{y_\sigma})^{\beta} \right)^2} \D y\\
&\ge \frac{\left( y_\sigma -{\widehat{a}}\right)^{2\beta -1} \left( y_\sigma -{\widehat{a}} -1\right)}{\gamma^2 h} \int_{x}^{\widehat{a}}\left(1-\left(1-\frac{y}{y_\sigma}\right)^{\beta} \right)^{-2}\D y,
\end{align*}
as $\min_{y\in [x,{\widehat{a}}]} \left[(y_\sigma -y)^{2\beta-1} (y_\sigma - y-1)\right]
= \left( y_\sigma -{\widehat{a}}\right)^{2\beta -1} \left( y_\sigma -{\widehat{a}} -1\right)>0$. 
Indeed the map $y \mapsto y^{2\beta -1} (y-1)$ is increasing on $[y_\sigma -{\widehat{a}}, y_\sigma -x]$
because $y_\sigma -{\widehat{a}}>1$.
Noting that~\eqref{eq:Expan_left_m} is uniform on $[x,{\widehat{a}}]$, we obtain, as~$x$ approaches zero
\begin{align*}
\exp & \left(\frac{2\left( y_\sigma -{\widehat{a}}\right)^{2\beta -1} \left( y_\sigma -{\widehat{a}} -1\right)}{\gamma^2 h} \int_{x}^{\widehat{a}} \frac{\D y}{\left(1-(1-\frac{y}{y_\sigma})^{\beta} \right)^2} \right)\\
& = \exp\left(\frac{\tbrhK}{x}\right)
\E^{\tbrhK\tbrthKab}
x^{-\frac{\chi_1 \tbrhK}{y_\sigma}} (1+\Oo(x)),
\end{align*}
with $\tbrhK := \frac{2(y_\sigma -{\widehat{a}})^{2\beta -1} (y_\sigma -{\widehat{a}} -1) y^2_\sigma}{h \beta^2 \gamma^2} >0$ 
and 
$\tbrthKab := -\frac{1}{{\widehat{a}}} + \frac{\chi_1}{y_\sigma} \log ({\widehat{a}}) + \frac{\chi_2}{y^2_\sigma} {\widehat{a}}$, and therefore 
$\lim_{x \downarrow 0} \widehat{s} (x) = \infty$ and $\widehat{S}(0,{\widehat{a}}]=\int_0^{\widehat{a}} \widehat{s} (x) \D x = \infty$.

\item[(ii)] If $y_\sigma-1\le {\widehat{a}} <y_\sigma$, then 
for $x\in (0,{\widehat{a}}]$,
\begin{align}
\int_x^{\widehat{a}} \frac{\widehat{b}(y)}{\widehat{\sigma}^2(y)} \D y 
= \int_x^{y_\sigma-1} \frac{\widehat{b}(y)}{\widehat{\sigma}^2(y)} \D y 
+ \int_{y_\sigma-1}^{\widehat{a}} \frac{\widehat{b}(y)}{\widehat{\sigma}^2(y)} \D y.
\end{align}

Similarly to (i), one can prove that $\lim_{x\downarrow 0}  \int_x^{y_\sigma-1} \frac{\widehat{b}(y)}{\widehat{\sigma}^2(y)} \D y  = \infty$. 
Then, on $(y_\sigma-1,{\widehat{a}}]$, $\widehat{\sigma}$ is not null and is continuous, thus bounded; 
similarly, $\widehat{b}$ is negative and continuous, hence bounded on $(y_\sigma-1,{\widehat{a}}]$. 
Therefore $ \int_{y_\sigma-1}^{\widehat{a}} \frac{\widehat{b}(y)}{\widehat{\sigma}^2(y)} \D y <\infty$, for $x\in(0,{\widehat{a}}]$,
and
$$
\lim_{x\downarrow 0} \widehat{s}(x) = \exp\left\{ \int_{y_\sigma-1}^{\widehat{a}} \frac{\widehat{b}(y)}{\widehat{\sigma}^2(y)} \D y \right\} \exp\left\{  \int_x^{y_\sigma-1} \frac{\widehat{b}(y)}{\widehat{\sigma}^2(y)} \D y \right\} = \infty.
$$

\end{itemize}

Let us now describe the case $y_\sigma = 1$.
Using the Taylor expansion around zero
\begin{equation}\label{eq:Expan_left_1_m}
    \left[1- \left(1-y\right)^{\beta} \right]^{-2}
    = \frac{1}{\beta^2 y^2}\left[ 1 +\chi_1 y + \chi_2 y^2 + \chi_3 y^3 + \Oo(y^4) \right],
\end{equation}
with $\chi_1, \chi_2$ and $\chi_3$ as in~\eqref{eq:ABCbeta_1}, we have, as~$y$ approaches zero,
\begin{equation}
\label{eq:Expan_left_integrand_1_m}
\frac{2\widehat{b}(y)}{\widehat{\sigma}^2(y)}
= \frac{\tbrooK}{y} \left( 1 + \tbrchif y + \tbrchis y^2 + \tbrchit y^3 + \Oo(y^4) \right), 
\end{equation}
with $\tbrooK = -\frac{2}{h \beta^2 \gamma^2} <0$ and $\tbrhhKab := \log({\widehat{a}}) +\overline{\chi_1}{\widehat{a}} + \frac{\tbrchis}{2} {\widehat{a}}^2 +\frac{\tbrchit}{3}{\widehat{a}}^3$. 
Since the expansion is again uniform on $[x,{\widehat{a}}]$, we obtain
\begin{align}\label{eq:Expan_left_rho_1_m}
    \widehat{s} (x) 
    &= \exp\left\{2\int_x^{\widehat{a}} \frac{\widehat{b}(y)}{\widehat{\sigma}^2 (y)} \D y \right\}
    = \exp\left\{\tbrooK\left(-\log(x)-\tbrchif x +o(x)\right)\right\}\exp\left\{\tbrooK\tbrhhKab\right\}, \nonumber\\
    &= \exp\left\{\tbrooK\tbrhhKab\right\}x^{-\tbrooK}(1+\Oo(x)).
\end{align}
Since $\tbrooK <0$, we easily deduce that 
$\lim_{x \downarrow 0} \widehat{s} (x) = 0$, and $\widehat S(0,{\widehat{a}}]=\int_0^{\widehat{a}} \widehat{s} (x) \D x$ is finite. 

Consider now the case $y_\sigma < 1$.
Using~\eqref{eq:Expan_left_m}, we have, as~$y$ approaches zero,
\begin{equation}
\label{eq:Expan_left_integrand_m}
    \frac{2\widehat{b}(y)}{\widehat{\sigma}^2(y)}
    = -\frac{\tbrK}{y^2} \left( 1 + \tbrchif y+ \tbrchis y^2 + \tbrchit y^3 + \Oo(y^4) \right), 
\end{equation}
with $\tbrK = \frac{2y^{2\beta +1}_\sigma (1-y_\sigma)}{h \beta^2 \gamma^2} >0$, as $y_\sigma <1$. 
Since the expansion is uniform on $[x,{\widehat{a}}]$, we obtain
\begin{align}\label{eq:Expan_left_rho_m}
    \widehat{s} (x) 
    &= \exp\left\{2\int_x^{\widehat{a}} \frac{\widehat{b}(y)}{\widehat{\sigma}^2 (y)} \D y \right\}
    = \exp\left\{-\frac{\tbrK}{x}\left[1-\tbrchif x \log x\right]\right\} \exp\left\{-\tbrK\tbrhKab + \tbrchis \tbrK x + \Oo(x^2)\right\}, \nonumber\\
    &=\exp\left\{-\frac{\tbrK}{x} - \tbrK \tbrhKab\right\} x^{\tbrK\tbrchif} (1+\Oo(x)),
\end{align}
where $\tbrhKab=-\frac{1}{\widehat{a}}+\tbrchif \log(\widehat{a})+\tbrchif \widehat{a}$.
Since $\tbrK >0$, we easily deduce that 
$\lim_{x \downarrow 0} \widehat{s} (x) = 0$, and $\widehat{S}(0,{\widehat{a}}]=\int_0^{\widehat{a}} \widehat{s} (x) \D x$ is finite.

The middle statement (when $y_\sigma < 1$) in the lemma is straightforward.
When $x\in (0,{\widehat{a}}]$, 
$\int_x^{\widehat{a}} \frac{\widehat{b}(y)}{\widehat{\sigma}^2(y)} \D y = - \int_{y_\sigma -{\widehat{a}}}^{y_\sigma -x} \frac{b(y)}{\widetilde{\sigma}^2 (y)} \D y$.
Since $0 < y_\sigma -{\widehat{a}} < y_\sigma -x < 1$, $b$ is positive on $[y_\sigma -{\widehat{a}}, y_\sigma -x]$ and the above integral is therefore negative.
Hence, $\widehat{s}$ is bounded by $1$ on $(0,{\widehat{a}}]$, and
$$
\widehat M(0,{\widehat{a}}]
=\int_0^{\widehat{a}} \frac{1}{\widehat{s}(x)\widehat{\sigma}^2(x)} \D x 
\ge \int_0^{\widehat{a}} \frac{\D x}{\widetilde{\sigma}^2 (y_\sigma -x)} = \infty,
$$
using~\eqref{eq:Expan_left_m}, which concludes the proof.

The final integrals in the lemma are delicate. 
We start with the case $y_\sigma < 1$. 
Using~\eqref{eq:Expan_left_rho_m}, 
we obtain the asymptotic behaviour of $\widehat{S}(0,\cdot]$ around zero by integrating that of~$\widehat{s} (\cdot)$ around~$0$. Classical asymptotic expansions for integrals (note that the leading contribution arises at the right boundary of the integration domain) yield, after the change of variable $y \mapsto xz$,
\begin{equation}\label{eq:Expan_left_s_m}
\begin{aligned}\displaystyle
\widehat{S}(0,x] 
&= \int_0^x \widehat{s} (y) \D y
= \E^{-\tbrK \tbrhKab} x^{1+\tbrK\tbrchif} \int_0^1 \exp\left\{-\frac{\tbrK}{xz} \right\} z^{\tbrK\tbrchif} \D z, \\
&= \E^{-\tbrK \tbrhKab} x^{1+\tbrK\tbrchif} \exp\left\{-\frac{\tbrK}{x}\right\} \left(\frac{x}{\tbrK} + \Oo(x^2) \right), \qquad \text{as }x\downarrow 0.
\end{aligned}
\end{equation}

Therefore, combining~\eqref{eq:Expan_left_m},~\eqref{eq:Expan_left_rho_m} and~\eqref{eq:Expan_left_s_m}, we obtain
\begin{align}
    \frac{\widehat{S}(0,x]}{\widehat{s} (x) \widehat{\sigma}^2(x)}
    = x\left(\frac{1}{\tbrK} x + \mathcal{O}(x^2)\right) \frac{y^{2\beta}_\sigma}{\beta^2 \gamma^2} \frac{1}{x^2} (1+\mathcal{O}(x))
    =  \frac{y^{2\beta}_\sigma}{\beta^2 \gamma^2 \tbrK} (1+\Oo(x)), 
\end{align}
which is integrable on $(0,a]$ and concludes the proof.

In the case $y_\sigma = 1$,
exploiting~\eqref{eq:Expan_left_rho_1_m}, we obtain the asymptotic behaviour of $\widehat{S}(0,\cdot]$ around zero by integrating that of~$\widehat{s}(\cdot)$ around~$0$. 
Indeed, after the change of variable $y \mapsto xz$,
\begin{equation}\label{eq:Expan_left_s_1_m}
    \widehat{S}(0,x] 
    = \int_0^x \widehat{s} (y) \D y
    = \frac{\exp\left\{\tbrooK\tbrhhKab\right\}}{ x^{\tbrooK-1}}
    \int_0^1  z^{-\tbrooK} \D z
    = \exp\left\{\tbrooK\tbrhhKab\right\} x^{1-\tbrooK} (1+\Oo(x)),
\end{equation}
as~$x$ tends to zero.
Then, exploiting~\eqref{eq:Expan_left_1_m},~\eqref{eq:Expan_left_rho_1_m} and~\eqref{eq:Expan_left_s_1_m}, we obtain
\begin{align*}
\frac{\widehat{S}(0,x]}{\widehat{s} (x) \widehat{\sigma}^2 (x)}
= \frac{1}{\beta^2 \gamma^2}
\frac{1}{x} \left(1 +\Oo(x)\right),
\end{align*}
which is not integrable on $(0,{\widehat{a}}]$ and thus
$\widehat{\Sigma}(0)=\infty$.

Finally, we  move to the case $y_\sigma > 1$.
Since $\lim_{x\downarrow 0} \widehat{S}[x,{\widehat{a}}] =\infty$ and $\lim_{x\downarrow 0} \widehat{s}(x) \widehat{\sigma}^2(x) = \infty$, one needs to study the behaviour of the integrand around zero to conclude.
For $\delta >0$ such that $x < \delta < {\widehat{a}}$ and $x>0$,
$
\int_x^{\widehat{a}} \widehat{s} (y) \D y 
= \int_x^{\delta} \widehat{s} (y) \D y + \int_{\delta}^{\widehat{a}} \widehat{s} (y) \D y
$. Note that the second integral is convergent as the integral of a continuous function over a closed interval of~$\RR$.

Classical asymptotic expansions for integrals~\cite[Chapter 3.3]{Miller} and~\eqref{eq:Expan_left_rho_m}, yield, after mapping $y \mapsto zx$,
\begin{align}
    \int_x^{\delta} \widehat{s} (y) \D y
    = \E^{-\tbrK \tbrhKab} x^{1+\tbrK\tbrchif } \int_1^{\delta / x} \exp\left\{ -\frac{\tbrK}{xz} \right\} z^{\tbrK\tbrchif} \D z
    = \E^{-\tbrK \tbrhKab} x^{1+\tbrK\tbrchif} \E^{-\frac{\tbrK}{x}} \left(-\frac{x}{\tbrK} + \Oo(x^2)\right),
\end{align}
and the asymptotic behaviour of the integrand around the origin is given by
\begin{align}
    \frac{\widehat{S}[x,{\widehat{a}}]}{\widehat{s} (x) \widehat{\sigma}^2(x)}
    = -\frac{y^{2\beta}_\sigma}{\beta^2 \gamma^2} \left(\frac{x^2}{\tbrK} + \Oo(x^3)\right) \frac{1+\Oo(x)}{x^2}
    = -\frac{y^{2\beta}_\sigma}{\beta^2 \gamma^2 \tbrK} (1+ \Oo(x)),
\end{align}
which is integrable at the origin, and concludes the proof since $ \widehat{N}(0)$ is therefore finite.
\end{proof}

\section{Ergodicity proofs}

\subsection{Proof of Theorem~\ref{thm:ErgodicB}}\label{Proofthm:ErgodicB}

This study is based on~\cite[Theorem 1.1, Chapter 5.1]{Pinsky}.
Since~$\widetilde{\sigma}$ is null at $y_\sigma$ (and possibly at zero), we consider separately the two domains $\Dom_1:= (0, y_\sigma)$ and $\Dom_2:= (y_\sigma, \infty)$
so that Assumption~A~iii) in the aforementioned theorem is satisfied.  
We start with $y_0 \in \Dom_2= (y_\sigma, \infty)$. We have to check the finiteness of 
$$
\bold{A}:= \int_{y_\sigma}^{y_0} \exp\left\{ -\int_{y_0}^y \frac{2b(s)}{\widetilde{\sigma}^2(s)}\D s\right\} \D y
\quad \text{and} \quad
\bold{B}:= \int_{y_0}^{\infty} \exp\left\{ -\int_{y_0}^y \frac{2b(s)}{\widetilde{\sigma}^2(s)}\D s\right\} \D y.
$$
Starting with $\bold{A}$, the changes of variables $y \to x+y_\sigma$ and $s \to v+y_\sigma$ yield
\begin{equation}
\bold{A} = \int_{0}^{y_0-y_\sigma} \exp\left\{ \int_{x}^{y_0-y_\sigma} \frac{2b(v+y_\sigma)}{\widetilde{\sigma}^2(v+y_\sigma)}\D v\right\} \D x
= \int_{0}^{y_0-y_\sigma} \exp\left\{ \int_{x}^{y_0-y_\sigma} \frac{2\overline{b}(v)}{\overline{\sigma}^2(v)}\D v\right\} \D x,
\end{equation}
with $\overline{b}(v):= b(v+y_\sigma)$ and $\overline{\sigma}(v):= \widetilde{\sigma}(v+y_\sigma)$, $v \geq 0$.
Notice that~$\bold{A}=\int_0^a \overline{s}(x) \D x$, with $ \overline{s}$ as in~\eqref{eq_def_s_bar} and $a:= y_0-y_\sigma \geq 0$. Thus, exploiting the proof of Lemma~\ref{lem_y_sigma_KT}, 
 $\bold{A}$ is finite if and only if $y_\sigma \geq 1$.
Regarding~$\bold{B}$, we need to study the integrand at infinity.
Since,
as $s \uparrow\infty$,
\begin{equation}
\frac{2b(s)}{\widetilde{\sigma}^2(s)} = \frac{2(1-s)}{hs\left(-\frac{\alpha}{\beta}+\gamma s^{-\beta}\right)^2} \sim -\frac{2}{h}\left(\frac{\beta}{\alpha}\right)^2,
\end{equation}
then
$
\int_{y_0}^y\frac{2b(s)}{\widetilde{\sigma}^2(s)} \D s
\sim -\frac{2\beta^2}{h \alpha^2} y,
$
as~$y\uparrow\infty$, so~$\bold{B}$ is infinite, concluding the $y_0 \in \Dom_2$ discussion.
\\
Consider now the domain $\Dom_1=(0,y_\sigma)$. We have to check the finiteness of 
$$
\bold{C}:= \int_{0}^{y_0} \exp\left\{ -\int_{y_0}^y \frac{2b(s)}{\widetilde{\sigma}^2(s)}\D s\right\} \D y,
\quad \text{and} \quad
\bold{D}:= \int_{y_0}^{y_\sigma} \exp\left\{ -\int_{y_0}^y \frac{2b(s)}{\widetilde{\sigma}^2(s)}\D s\right\} \D y.
$$
For~$\bold{C}$, the only possible issue for integrability is at zero, so we expand the integrand in a neighborhood of the origin:
\begin{equation*}
\frac{2b(s)}{\widetilde{\sigma}^2(s)} 
= \frac{2(1-s)}{h\gamma^2 s^{1-2\beta}}\left( 1- \frac{\alpha}{\beta \gamma} s^{\beta} \right)^{-2}
= \frac{2 s^{2\beta-1}}{h\gamma^2}
\left[1 + \frac{2\alpha}{\beta \gamma} s^{\beta} +3 \left(\frac{\alpha}{\beta \gamma}\right)^2 s^{2\beta} + \Oo\left(s^{3\beta}\right) \right].
\end{equation*}
Since the expansion is uniform on $(0,y_0)$,
\begin{equation*}
\int_{y_0}^{x} \frac{2b(s)}{\widetilde{\sigma}^2(s)} \D s
= \frac{2}{\beta h \gamma^2}\left(\frac{x^{2\beta}}{2} + \frac{2\alpha}{3\beta \gamma}x^{3 \beta} + \frac{3\alpha^2}{4\beta^2\gamma^2}x^{4\beta}
+ o\left(x^{4 \beta}\right)+ \tbrtKyb\right),
\end{equation*}
with $\tbrtKyb:= -\frac{y_0^{2\beta}}{2} - \frac{2\alpha}{3\beta \gamma}y_0^{3 \beta} - \frac{3\alpha^2}{4\beta^2\gamma^2}y_0^{4 \beta}$.
We then obtain
\begin{equation*}
\exp\left\{-\int_{y_0}^{x} \frac{2b(s)}{\widetilde{\sigma}^2(s)} \D s\right\}
= \exp\left\{-\frac{x^{2\beta} + o\left(x^{2 \beta}\right)}{\beta h \gamma^2} -\frac{2 \tbrtKyb}{\beta h \gamma^2}\right\}
= \exp\left\{-\frac{2 \tbrtKyb}{\beta h \gamma^2}\right\}
\left(1+o\left(x^{2\beta}\right)\right),
\end{equation*}
and so $\bold{C}$ is always finite.
Finally, regarding~$\bold{D}$, the following sequence of change of variables, 
as $x\downarrow y_\sigma-y$ and $s\downarrow y_\sigma-v$, yields
\begin{equation*}
\begin{aligned}
\bold{D}
&= \int_{y_0}^{y_\sigma} \exp\left\{ -\int_{y_0}^x \frac{2b(s)}{\widetilde{\sigma}^2(s)}\D s\right\} \D x
=  \int_{0}^{y_\sigma-y_0} \exp\left\{ -\int_{y}^{y_\sigma-y_0} \frac{2b(y_\sigma-v)}{\widetilde{\sigma}^2(y_\sigma-v)}\D v\right\} \D y \\
& =  \int_{0}^{y_\sigma-y_0} \exp\left\{ \int_{y}^{y_\sigma-y_0} \frac{2\widehat{b}(v)}{\widehat{\sigma}^2(y_\sigma-v)}\D v\right\} \D y, 
\end{aligned}
\end{equation*}
with $\widehat{b}(v)= -b(y_\sigma-v)$ and $\widehat{\sigma}(v)= -\widetilde{\sigma}(y_\sigma-v)$, for $v \in (0,y_\sigma)$.
Now, notice that $\bold{D}=\int_0^a \widehat{s}(x) \D x$, with $\widehat{s}$ defined in~\eqref{eq_def_s_bar} and $a=y_\sigma-y_0$. Thus, exploiting the proof of Lemma~\ref{lemma_y_sig_2}, 
the integral~$\bold{D}$ is finite if and only if $y_\sigma \leq 1$, and the theorem follows.

\subsection{Proof of Proposition~\ref{prop:Stationary}}
\label{prop:Stationary_proof}
Pursuing the analysis in~\cite[page 242]{KT81}, 
we can prove that
\begin{displaymath}
\begin{aligned}
S_{\Pi}(y_\sigma,x]& := \int_{y_\sigma}^{x} s(\xi) \D \xi= \infty, 
& S_{\Pi}[x, \infty)& := \int_{x}^{\infty} s(\xi) \D \xi=\infty,\\
M_{\Pi}(y_\sigma,x]& := \int_{y_\sigma}^{x} \frac{\D \xi}{\widetilde{\sigma}^2(\xi)s(\xi)} < \infty,
& M_{\Pi}[x, \infty)& := \int_{x}^{\infty} \frac{\D \xi}{\widetilde{\sigma}^2(\xi)s(\xi)} < \infty,
\end{aligned}
\end{displaymath}
where $s(\xi):=\exp\left\{ -\int_a^{\xi} \frac{2b(\eta)}{\widetilde{\sigma}^2(\eta)} \D \eta \right\}$, with $a > y_\sigma$ fixed.
Thus, the stationary probability measure is  given by~\eqref{eq:StationaryY}.
Let us start by showing that $S_{\Pi}(y_\sigma,x]$ and $S_{\Pi}[x, \infty)$ are infinite. Exploiting the computations in the proof of Theorem~\ref{thm:ErgodicB}, for the case $y_\sigma < 1$ and $\Dom=(y_\sigma, \infty)$, we obtain the unboundedness of
\begin{displaymath}
\begin{aligned}
	& S_{\Pi}(y_\sigma,x]
	= \int_{y_\sigma}^{x} s(\xi) \D \xi
	= \int_0^{x-y_\sigma} \exp\left\{ -\int_y^{x-y_\sigma} \frac{2b(\eta+y_\sigma)}{\widetilde{\sigma}^2(\eta+y_\sigma)} \D \eta \right\} \D y
	= \int_0^{x-y_\sigma}\overline{s}(y) \D y, 
\end{aligned}
\end{displaymath}
and
\begin{displaymath}
\begin{aligned}
	& S_{\Pi}[x, \infty)
	= \int_{x}^{\infty} s(\xi) \D \xi
	= \int_{x}^{\infty} \exp\left\{ -\frac{2}{h}\int_x^{\xi} \frac{(1-\eta)\D \eta}{\eta(-\frac{\alpha}{\beta}+\gamma \eta^{-\beta})} \right\} \D y
	\sim \int_{x}^{\infty} \exp\left\{ \frac{2}{h}\int_x^{\xi} \frac{\beta}{\alpha}\eta \D \eta \right\} \D y.
\end{aligned}
\end{displaymath}
We are only left to prove that $M_{\Pi}(y_\sigma,x]$ and $M_{\Pi}[x, \infty)$ are finite. 
Exploiting the changes of variables $\xi= v+y_\sigma$ and $\eta=z+y_\sigma$, $M_{\Pi}(y_\sigma,x]$  can be rewritten as
\begin{displaymath}
\begin{aligned}
	M_{\Pi}(y_\sigma,x] &
	= \int_{y_\sigma}^{x} \frac{\D \xi}{\widetilde{\sigma}^2(\xi)s(\xi)} 
	= \int_{y_\sigma}^{x} \frac{1}{\widetilde{\sigma}^2(\xi)}
	\exp\left\{ \int_x^{\xi} \frac{2b(\eta)}{\widetilde{\sigma}^2(\eta)} \D \eta \right\} \D \xi\\
	& = \int_{0}^{x-y_\sigma} \frac{1}{\widetilde{\sigma}^2(v+y_\sigma)}
	\exp\left\{ \int_{x-y_\sigma}^{v} \frac{2b(z+y_\sigma)}{\widetilde{\sigma}^2(z+y_\sigma)} \D z \right\} \D v\\
	& = \int_{0}^{x-y_\sigma} \frac{\D v}{\widetilde{\sigma}^2(v+y_\sigma) \overline{s}(v)}
	= \int_{0}^{x-y_\sigma} \frac{\D v}{\overline{\sigma}^2(v) \overline{s}(v)}
	= \overline{M}(0,x-y_\sigma].
\end{aligned}
\end{displaymath}
Then, in a neighborhood of zero we have
\begin{displaymath}
\begin{aligned}
	\frac{1}{\overline{\sigma}^2(v) \overline{s}(v)}  
	= \frac{y_\sigma^{2 \beta}}{\beta^2 \gamma^2}\E^{-\tbrK \tbrKxyb}
	\exp\left\{-\frac{\tbrK}{v}\right\} v^{\tbrK \tbrchif -2}(1+\Oo(v)), 
\end{aligned}
\end{displaymath}
which is integrable around zero since $\tbrK >0$ and thus $M_{\Pi}(y_\sigma, x]$ is finite.

To conclude we have to study the finiteness of $M_{\Pi}[x, \infty)$, which means that we have to check the integrability of $\frac{1}{\widetilde{\sigma}^2(\xi)s(\xi)}$ at infinity.
Since 
\begin{displaymath}
\begin{aligned}
	\frac{1}{\widetilde{\sigma}^2(\xi)s(\xi)} 
	= \frac{1}{\widetilde{\sigma}^2(\xi)}
	\exp\left\{ \int_x^{\xi} \frac{2b(\eta)}{\widetilde{\sigma}^2(\eta)} \D \eta \right\}
	\sim \frac{\xi^{2\beta-2}}{\gamma} \exp\left\{ \int_x^\xi -\frac{2}{h}(1+\Oo(\eta)) \D \eta \right\}
	\sim \frac{\xi^{2\beta-2}}{\gamma} \E^{ -\frac{2}{h} \xi}
\end{aligned}
\end{displaymath}
is integrable at infinity, the result follows.

\subsection{Proof of Proposition~\ref{prop:PerturPriceExpansion}}
\label{prop:PerturPriceExpansion_Proof}
The aim is to approximate the price of an option with smooth payoff~$h$ 
as $P^\eps\approx Q^\eps:= P_0+\sqrt{\eps}P_1$, that is $P^\eps= Q^\eps + \mathcal{O}(\eps)$. 
We show that both~$P_0$ and~$P_1$ in fact do not depend on~$y$
and provide a precise estimate for the error term.
A Taylor expansion of~$P^\eps$ around $\eps=0$ gives
$$
P^\eps 
= P_0+\sqrt{\eps}P_1 + \eps Q_2 + \eps^{3/2}Q_3+ \Oo\left(\eps^{3/2}\right)
= Q^\eps + \eps Q_2 + \eps^{3/2}Q_3+ \Oo\left(\eps^{3/2}\right).
$$
The pricing PDE~\eqref{eq:pricing_eps} then reads
\begin{align*}
0 & = 
\left(\frac{1}{\eps}\Ll_Y+\frac{1}{\sqrt{\eps}}\Ll_1+\Ll_{\BS}^{\sigma(y)}\right) P^\eps\\
 & = \left(\frac{1}{\eps}\Ll_Y+\frac{1}{\sqrt{\eps}}\Ll_1+\Ll_{\BS}^{\sigma(y)}\right)
 \left(P_0+\sqrt{\eps}P_1 + \eps Q_2 + \eps^{3/2}Q_3+ \Oo\left(\eps^{3/2}\right)\right)\\
 & = \frac{\Ll_Y P_0}{\eps}
 + \frac{\Ll_Y P_1 + \Ll_1 P_0}{\sqrt{\eps}}
 + \Big[\Ll_Y Q_2 +\Ll_1P_1+\Ll_{\BS}^{\sigma(y)} P_0\Big]
+ \sqrt{\eps}\left[\Ll_Y Q_3 + \Ll_1 Q_2 +\Ll_{\BS}^{\sigma(y)} P_1\right]
+ \Oo(\eps).
\end{align*}
Since this should be null for all (small)~$\eps$, each term should be equal to zero.
More specifically,
\begin{enumerate}
\item[a.]  $\Ll_Y P_0=0$.
Since~$\Ll_Y $ has no $x$-derivative, $P_0(t,x,y)= P_0(t,x)$ with $P_0(T,x)=h(x)$;

\item[b.]  $\Ll_Y P_1+\Ll_1P_0=0 =\Ll_Y P_1$
using~a..
Similarly $P_1(t,x,y)= P_1(t,x)$ with $P_1(T,x)=0$;

\item[c.]  $0 = \Ll_Y Q_2 +\Ll_1P_1+\Ll_{\BS}^{\sigma(y)} P_0
 = \Ll_Y Q_2 +\Ll_{\BS}^{\sigma(y)} P_0$.
This is a Poisson equation associated to~$\Ll_Y $
and requires a suitable solvability condition:
Similarly to~\cite{FLS16}, the Fredholm alternative\footnote{As far as we know, there is no general Fredholm alternative for hypoelliptic operators. Numerical tests seem to clearly indicate the presence of a spectral gap in our case, which would be enough, but we leave this very lengthy and detailed analysis to further research.} imposes the condition
\begin{align*}
0 = \langle \Ll_{\BS}^{\sigma} P_0 , \Pi\rangle
& = \int_{y_\sigma}^\infty \left(\Ll_{\BS}^{\sigma} P_0\right)
\Pi(\D y) 
 = \int_{y_\sigma}^\infty \left[\partial_{t} +  \frac{\sigma^2(y)}{2}
\Ddx\right] P_0(t,x) \Pi(\D y)\\
 & =\left[\partial_{t} + \half\int_{y_\sigma}^\infty  \sigma^2(y)\Pi(\D y) \right]\Ddx P_0(t,x),
\end{align*}
where $\Pi$ is the unique stationary distribution of~$Y$ on $\Dom=(y_\sigma, \infty)$
from Proposition~\ref{prop:Stationary},
and with the operator
$\Ddx := \dxx - \dx$.
\end{enumerate}
This last computation in particular reveals that
\begin{equation}\label{eq:BSLim}
\Lg \Ll_{\BS}^{\sigma}, \Pi \Rg = \LBsLim,
\end{equation}
so that~$P_0$ in fact satisfies
$\Ll_{\BS}^{\sigmaLim}P_0(t,x)=0$,
with boundary condition $P_0(T,x)=h(x)$,
so that~$P_0$ corresponds to the Black-Scholes option price
with payoff~$h$
and variance 
$\sigmaLim^2 : = \langle \sigma^2 , \Pi \rangle =  \int_{y_\sigma}^\infty \sigma^2(y)\Pi(\D y)$,
as given in the proposition.
\begin{remark}\label{Rem_fin_sigma_Pi}
The variance~$\sigmaLim^2$ is clearly finite:
using the asymptotic computations
in the previous section,
as $y \uparrow\infty$, the integrand behaves as $\exp\{ - \frac{2}{h}(\frac{\alpha}{\beta})^2 y \}y^{-2}$ which is integrable at infinity~\eqref{Eq_bb_inf_int_s}. 
When $y\downarrow y_\sigma$ it  behaves as $\exp\{-\frac{K}{(y-y_\sigma)}-KK^\alpha_\beta\}(y-y_\sigma)^{-\tbrchif-2}(1+\Oo(y-y_\sigma))$, also integrable since $K>0$~\eqref{eq:Expan_right_rho_m}.
\end{remark}
Observe now from~c. above that 
\begin{equation}\label{Eq_Q2_eq_1}
Q_2= -\Ll_Y^{-1}
\left(\Ll_1 P_1  + \Ll_{\BS}^{\sigma(y)} P_0\right)
= -\Ll_Y^{-1}
\left(\Ll_{\BS}^{\sigma(y)} P_0\right)
= -\Ll_Y^{-1}
\left(\Ll_{\BS}^{\sigma(y)} - \LBsLim\right)  P_0.
\end{equation}
Note that we do not formally need to invert~$\Ll_Y$, but it makes the notations below clearer.

\begin{itemize}
\item[d.] Regarding the $\sqrt{\eps}$ term, $\Ll_Y Q_3 +\Ll_1 Q_2 +\Ll_{\BS}^{\sigma(y)} P_1=0$.
This is again a Poisson equation with solvability condition (using~\eqref{Eq_Q2_eq_1})
\begin{equation}\label{eq:solv_cond_2}
\LBsLim P_1 = 
\Lg \Ll_{\BS}^{\sigma}, \Pi \Rg P_1
= - \langle \Ll_1 Q_2, \Pi \rangle 
= \Lg \Ll_1\Ll_Y^{-1}
\left(\Ll_{\BS}^{\sigma} - \LBsLim\right), \Pi \Rg P_0.
\end{equation}
\end{itemize}
Combining this with the terminal condition on~$P_1$ obtained in~b., we obtain
\begin{equation}
\label{eq:BS_mod_sig_w_source}
\left\{
\begin{array}{ll}\displaystyle
\LBsLim P_1(t,x)
=\left\langle \Ll_1\Ll_Y^{-1}\left(\Ll_{\BS}^{\sigma} - \LBsLim\right), \Pi\right\rangle P_0, \\
P_1(T,x)=0,
\end{array}
\right.
\end{equation}
so that~$P_1$ is the solution of a Black-Scholes system with variance $\sigmaLim^2$ and source
\begin{equation}
\Lg \Ll_1\Ll_Y^{-1}
\left(\Ll_{\BS}^{\sigma} - \LBsLim\right), \Pi \Rg P_0 
 = \half
 \Lg \Ll_1\Ll_Y^{-1}
 \left(\sigma^2 - \sigmaLim^2\right), \Pi \Rg
 \Ddx P_0.
\end{equation}
Setting~$\psi$ to be the solution 
to $\Ll_Y \psi(y)= \sigma^2(y) - \sigmaLim^2$
in~\eqref{Eq_eq_def_psi},
we obtain
$$
\Lg \Ll_1\Ll_Y^{-1}\left(\Ll_{\BS}^{\sigma} - \LBsLim\right), \Pi \Rg P_0
 =\half\Lg \Ll_1\psi(\cdot), \Pi \Rg \Ddx P_0
 =\half
\Lg \varpi, \Pi \Rg \dx\Ddx P_0,
$$
by definition of~$\Ll_1$ in~\eqref{eq:Operators}  and of~$\varpi$ in~\eqref{Eq_eq_def_varpi}.
The last term on the right-hand side is well defined provided that~\eqref{Eq_eq_def_psi} admits a unique solution such that 
$\Lg \varpi, \Pi \Rg$ is finite.
The existence of such a unique (up to some positive constant) solution is ensured by the validity of the corresponding solvability condition, consequence of $\sigmaLim^2$ being finite 
(as proved in Remark~\ref{Rem_fin_sigma_Pi}). 
A similar argument shows that 
$\Lg \varpi, \Pi \Rg$ is also finite 
once we prove polynomial growth at infinity and the boundedness of~$\psi$ around~$y_\sigma$. Indeed, in that case, for $y \uparrow\infty$, the integrand behaves like $\exp \{ - \frac{2}{h}(\frac{\alpha}{\beta})^2 y \}y^{-1}\psi'(y) \sim \exp\{ - \frac{2}{h}(\frac{\alpha}{\beta})^2 y\}(1+y^{n-1})$, which is integrable. 
As $y\downarrow y_\sigma$, we have $\exp\{-\frac{K}{(y-y_\sigma)}-KK^\alpha_\beta\}(y-y_\sigma)^{-K\tbrchif-1}\psi'(y)(1+\Oo(y-y_\sigma))$, which is integrable since $K>0$.
We thus conclude that
\begin{equation}\label{Eq_for_P1}
\sqrt{\eps}P_1(t,x) = -(T-t)\Omega^{\eps}\dx\Ddx P_0(t,x),
\end{equation}
with $\Omega^{\eps}:=\frac{\sqrt{\eps}}{2}\Lg \varpi, \Pi \Rg$. 
This implies
$P_1(T,x)=0$
and, since $\langle \Ll_{\BS}^{\sigma}, \Pi\rangle P_0 = \LBsLim P_0 =0$,
\begin{equation*}
\begin{aligned}
\LBsLim P_1 &
= \frac{1}{\sqrt{\eps}}
\LBsLim \Big( -(T-t)\Omega^{\eps} \dx\Ddx P_0 \Big)\\
& =\frac{\Omega^{\eps} \dx\Ddx P_0 -(T-t) \Omega^{\eps} 
\dx\Ddx \LBsLim P_0}{\sqrt{\eps}}
 =\frac{\Omega^{\eps}}{\sqrt{\eps}} \dx\Ddx P_0 
= \left\langle \Ll_1\Ll_Y^{-1}
\left(\Ll_{\BS}^{\sigma} - \LBsLim\right), \Pi \right\rangle P_0,
\end{aligned}
\end{equation*}
which corresponds precisely to~\eqref{eq:solv_cond_2}.

We now move on to the proof of the error term, 
assuming a smooth payoff~$h$.
With $\Ll_\eps := \frac{1}{\eps}\Ll_Y +  \frac{1}{\sqrt \eps}\Ll_1 + \Ll^{\sigma(y)}_{\BS}$
and 
\begin{equation}
Z^{\eps}:=
\eps Q_2 + \eps \sqrt{\eps} Q_3 - \left(P^\eps-Q^\eps\right)
= \eps Q_2 + \eps \sqrt{\eps} Q_3
- \left[P^\eps - 
\left(P_0+P_1 \sqrt{\eps}\right)\right],
\end{equation}
the pricing PDE~\eqref{eq:pricing_eps} now yields
\begin{align*}
\Ll_\eps Z^\eps &
= \Ll_\eps 
\left(\eps Q_2 + \eps \sqrt{\eps} Q_3
- \left[P^\eps
 - \left(P_0+P_1 \sqrt{\eps}\right)\right]\right)
= \Ll_\eps 
\left(P_0+P_1 \sqrt{\eps}+\eps Q_2 + \eps \sqrt{\eps} Q_3 -P^\eps\right)\\
& = \frac{\Ll_Y P_0}{\eps} + \frac{\Ll_YP_1+\Ll_1P_0}{\sqrt{\eps}} + \left(\Ll_Y Q_2+\Ll_1P_1+ \Ll_{\BS}^{\sigma(y)}P_0\right)
 + \sqrt{\eps}
 \left(\Ll_Y Q_3+\Ll_1Q_2+ \Ll_{\BS}^{\sigma(y)}P_1\right)\\
& \qquad +\eps
\left(\Ll_1Q_3+ \Ll_{\BS}^{\sigma(y)}Q_2\right) +\eps\sqrt{\eps} \Ll_{\BS}^{\sigma(y)}Q_3 -\Ll_\eps P^\eps\\
& = \eps
\left(\Ll_1Q_3+ \Ll_{\BS}^{\sigma(y)}Q_2 +\sqrt{\eps} \Ll_{\BS}^{\sigma(y)}Q_3\right).
\end{align*}

Setting
\begin{equation}\label{Eq_def_Feps_Geps}
\left\{
\begin{array}{rl}
\displaystyle F_\eps(t,x,y) & := \displaystyle \Ll_1Q_3+ \Ll_{\BS}^{\sigma(y)}Q_2 +\sqrt{\eps} \Ll_{\BS}^{\sigma(y)}Q_3\\
\displaystyle G_\eps(x,y) & := \displaystyle Q_2(T,x,y) +\sqrt{\eps}Q_3(T,x,y),
\end{array}\right.
\end{equation}
we write a parabolic PDE associated to $Z^{\eps}$:
\begin{equation}\label{Eq_PDE_Z^vareps}
\Ll_\eps Z^\eps = \eps F_\eps,
\qquad\text{with boundary condition}\qquad
Z^\eps(T,x,y)= \eps G_\eps(x,y).
\end{equation}
The first is a consequence of the identities
above, while the second follows from
\begin{align*}
	Z^\eps(T,x,y)&
	=  \eps Q_2(T,x,y) + \eps \sqrt{\eps} Q_3(T,x,y) -[P^\eps(T,x,y)-(P_0(T,x,y)+P_1(T,x,y) \sqrt{\eps})]\\ \nonumber
	& = \eps Q_2(T,x,y) + \eps \sqrt{\eps} Q_3(T,x,y) -[h(x)-(h(x)+0)]\\ \nonumber
	& =  \eps Q_2(T,x,y) + \eps \sqrt{\eps} Q_3(T,x,y)
	= \eps G_\eps(x,y)
\end{align*}
We now investigate the form of~$Q_2, Q_3$.
From the third identity in~\eqref{Eq_Q2_eq_1}, 
$Q_2 = -\half\psi(y) \Ddx P_0$,
where~$\psi$ is the solution to~\eqref{Eq_eq_def_psi},
which implies (recall that~$P_0$ does not depend on~$y$)
\begin{align*}
	\Ll_Y Q_2
	& = -\half\Ll_Y \left[ \psi(y) \Ddx P_0\right]
	 = -\frac{\sigma^2(y) - \sigmaLim^2}{2} \Ddx P_0
	= -\left(\Ll_{\BS}^{\sigma(y)} - \LBsLim\right)P_0
	 = -\Ll_{\BS}^{\sigma(y)}P_0	
\end{align*} 
The core idea here is to rewrite~$F_\eps$ and~$G_\eps$ to obtain the order of convergence of the first-order price approximation $Q^\eps$.
In the following computations, $P_0$ has smooth derivatives since the payoff~$h$ is smooth by assumption.
The identity
\begin{equation}\label{Eq_def_LBS}
\Ll_{\BS}^{\sigma(y)}
 = \partial_{t} + \frac{\sigma^2(y)}{2}\Ddx = \Ll_{\BS}^{\sigmaLim} + \frac{\sigma^2(y)-\sigmaLim^2}{2}\Ddx
\end{equation}
holds, yielding an explicit expression for the second term on the right-hand side of~\eqref{Eq_def_Feps_Geps}:
\begin{align}\label{eq_term_2_in_Feps}
	\Ll_{\BS}^{\sigma(y)} Q_2 
	 = \left(\LBsLim  + \frac{\sigma^2(y)-\sigmaLim^2}{2}\Ddx\right)
	\left( -\frac{\psi(y)}{2}\Ddx P_0\right) 
	& = -\frac{\psi(y)}{2}
	\left(\LBsLim  \Ddx P_0 + \frac{\sigma^2(y)-\sigmaLim^2}{2}\Ddx^2 P_0 \right) \nonumber\\
	& = -\frac{\sigma^2(y)-\sigmaLim^2}{4}\psi(y)\Ddx^2 P_0, 
\end{align}
since these differential operators commute.
Now, $Q_3$ is solution to the Poisson equation
$\Ll_Y Q_3 = - \left(\Ll_1 Q_2 + \Ll_{\BS}^{\sigma(y)}P_1\right)$,
and the validity of the centering condition for the Poisson equation is guaranteed by the choice of~$P_1$.
Equivalently,
\begin{align}\label{Eq_def_Q3}
	Q_3
	& = - \Ll_Y^{-1}\left(\Ll_1 Q_2 + \Ll_{\BS}^{\sigma(y)} P_1\right) \nonumber\\
	& = - \Ll_Y^{-1}\left(\Ll_1 Q_2 + \Ll_{\BS}^{\sigma(y)} P_1- \Lg \Ll_1 Q_2 +\Ll_{\BS}^{\sigma} P_1, \Pi\Rg\right) \nonumber\\
	& = - \Ll_Y^{-1}
	\left(\Ll_1 Q_2 - \Lg \Ll_1 Q_2, \Pi\Rg + \left(\Ll_{\BS}^{\sigma(y)} - \LBsLim\right) P_1\right).
\end{align}
We make the terms on the right more explicit
\begin{align}\label{Eq_def_L1Q2}
	\Ll_1 Q_2 = -\left(y\sigma^2(y) \partial_{xy}\right)
	\left(\frac{\psi(y)}{2}\Ddx P_0\right)
	& =  -\frac{y\sigma^2(y)}{2}\dx \Big(\psi'(y)\Ddx P_0 + \psi(y)\Ddx \partial_{y} P_0\Big)\nonumber\\
	& =  -\frac{\varpi(y)}{2}\dx \Ddx P_0 .
\end{align}
Now, let~$\vartheta$ be the solution to the Poisson equation
\begin{equation}\label{Eq_eq_vartheta_def}
\LY \vartheta = \varpi(y)
 - \Lg  \varpi, \Pi \Rg,
\end{equation}
and plug~\eqref{Eq_def_LBS} and~\eqref{Eq_def_L1Q2} into~\eqref{Eq_def_Q3} to obtain
\begin{align}\label{eq_Q3 _expl}
Q_3 
& = -\LY^{-1}\left( -\frac{\varpi(y)}{2}\dx \Ddx P_0
+ \frac{\Lg \varpi, \Pi \Rg}{2}
\dx \Ddx P_0
+ \frac{\sigma^2(y)-\sigmaLim^2}{2}\Ddx P_1\right)\\\nonumber
& = \half \LY^{-1}\left[ \Big(\varpi(y)
 - \Lg \varpi, \Pi \Rg\Big) \dx \Ddx P_0 -  \left(\sigma^2(y)-\sigmaLim^2\right)\Ddx P_1\right]\\ \nonumber
& = \half \LY^{-1}\Big( \LY \vartheta\dx \Ddx P_0 -  \LY \psi \Ddx P_1\Big) 
= \half\Big( \vartheta\dx \Ddx P_0 -  \psi \Ddx P_1\Big).
\end{align}
Exploiting the definition of $\Ll_1$, we obtain the first term in the expansion for $F_\eps$:
\begin{align}\label{Eq_term2_exp_Fvareps}
	\Ll_1 Q_3 
	& = \frac{y\sigma^{2}(y)}{2} \partial_{xy} \Big( \vartheta\dx \Ddx P_0 -  \psi \Ddx P_1\Big)\nonumber\\
	  & = \frac{y\sigma^{2}(y)}{2}
	 \Big( \vartheta'(y)\dx^2 \Ddx P_0 -  \psi'(y) \dx \Ddx P_1\Big).
\end{align}
Finally, exploiting~\eqref{Eq_def_LBS}-\eqref{eq_Q3 _expl}, together with~\eqref{Eq_for_P1}, we write
\begin{align}\label{eq_term_3_in_Feps}
	\Ll_{\BS}^{\sigma(y)} Q_3 
	& = 
	\half\left(\LBsLim +\frac{\sigma^2(y)-\sigmaLim^2}{2}\Ddx\right)
	\left( \vartheta\dx \Ddx P_0 -  \psi \Ddx P_1\right)\nonumber \\
	& = -\frac{\psi(y)}{2} \widetilde{\Omega} \dx \Ddx^2 P_0
	+\frac{\sigma^2 (y)-\sigmaLim^2}{4} \vartheta(y)\dx\Ddx^2 P_0
	- \frac{\sigma^2 (y)-\sigmaLim^2}{4}\psi(y)\Ddx^2 P_1,
\end{align}
where $\widetilde{\Omega}
:= \frac{1}{\sqrt{\eps}}\Omega^{\eps}
= \half \langle \varpi, \Pi \rangle $.

Placing~\eqref{Eq_term2_exp_Fvareps}-\eqref{eq_term_2_in_Feps}-\eqref{eq_term_3_in_Feps} in~\eqref{Eq_def_Feps_Geps}, we then obtain
\begin{align*}
	F_\eps(t,x,y)
	& = \Ll_1Q_3 + \Ll_{\BS}^{\sigma(y)}Q_2 +\sqrt{\eps} \Ll_{\BS}^{\sigma(y)}Q_3\\
	& = \frac{y\sigma^{2}(y)}{2}
	\Big( \vartheta'(y)\dx^2 \Ddx P_0 -  \psi'(y) \dx \Ddx P_1\Big) - \frac{\sigma^2(y)-\sigmaLim^2}{4}\psi(y)\Ddx^2 P_0 \\
	& \quad + \sqrt{\eps}\left[
	-\frac{\psi(y)}{2} \widetilde{\Omega} \dx \Ddx^2 P_0
	+ \frac{\sigma^2 (y)-\sigmaLim^2}{4} \vartheta(y)\dx\Ddx^2 P_0
	- \frac{\sigma^2 (y)-\sigmaLim^2}{4} \psi(y)\Ddx^2 P_1\right]\\
	& = \frac{y\sigma^{2}(y)}{2}\vartheta'(y) \left(\dx^4 - \dx^3 \right) P_0 -  \frac{\varpi(y)}{2} \left(\dx^3 - \dx^2 \right) P_1
	 -\frac{\sigma^2(y)-\sigmaLim^2}{4}\psi(y)\left(\dx^4 - 2  \dx^3 + \dx^2\right)P_0 \\
	& \quad + \frac{\sqrt{\eps}}{2}
	\Bigg[\left(\frac{\sigma^2 (y)-\sigmaLim^2}{2} \vartheta(y) - \psi(y) \widetilde{\Omega} \right) \left(\dx^5-2\dx^4 + \dx^3 \right) P_0
	 - \frac{\sigma^2 (y)-\sigmaLim^2}{2} \psi(y)\left(\dx^4 - 2  \dx^3 + \dx^2\right) P_1\Bigg].
\end{align*}
Exploiting the fact that we chose $P_1= -(T-t)\widetilde{\Omega}\left( \dx^3
- \dx^2\right)P_0$ (as in~\eqref{Eq_for_P1}), we obtain
\begin{align}\label{Eq_Feps_fin}
	F_\eps(t,x,y)
	& = -\frac{\sigma^2(y)-\sigmaLim^2}{4} \psi(y) \dx^2 P_0 + \left(
	- \frac{y\sigma^2(y)}{2}\vartheta'(y)
	+ \frac{\sigma^2(y)-\sigmaLim^2}{2}\psi(y)\right)\dx^3  P_0 \nonumber\\ 
	& \quad + \left(\frac{y\sigma^{2}(y)}{2} \vartheta'(y)
	 - \frac{\sigma^2(y)-\sigmaLim^2}{4}\psi(y)\right) \dx^4 P_0
	+\frac{T-t}{2}
	\varpi(y)\widetilde{\Omega} \left(\dx^4
	-2 \dx^5 + \dx^6\right)P_0 \nonumber\\ 
	& \quad +
	\frac{\sqrt{\eps}}{2}\Bigg\{ 
	\left[ \frac{\sigma^2(y)-\sigmaLim^2}{2}\vartheta(y)-\psi(y)\widetilde{\Omega} \right] \left(\dx^3 - 2\dx^4 + \dx^5\right)P_0 \nonumber\\
	& \quad + (T-t)\left( \frac{\sigma^2(y)-\sigmaLim^2}{2}\psi(y)\widetilde{\Omega} \right)\left(-\dx^4 + 3\dx^5 - 3\dx^6 + \dx^7\right)P_0
	\Bigg\}.
\end{align}
Performing similar computations for $G_\eps$, we obtain
\begin{align}\label{Eq_Geps_fin}
	G_\eps(x,y)
	& = Q_2(T,x,y)+\sqrt{\eps}Q_3(T,x,y)\nonumber\\ 
	& = -\frac{\psi(y)}{2}\Ddx P_0
	+ \sqrt{\eps}\left(\frac{\vartheta(y)}{2}\dx\Ddx P_0
	 - \frac{\psi(y)}{2}\Ddx P_1\right)\nonumber\\
	& = -\frac{\psi(y)}{2}\left( \dx^2-\dx\right)P_0
	+\frac{\sqrt{\eps}}{2}\bigg[ \vartheta(y) 
	\left( \dx^3-\dx^2\right)
	 +\psi(y) (T-T)\widetilde{\Omega}\left( \dx^5
	- 2\dx^4
	+\dx^3 \right)\bigg]P_0\nonumber\\
	& = -\frac{\psi(y)}{2}\left( \dx^2
	 - \dx\right)P_0
	+ \frac{\sqrt{\eps}}{2} \vartheta(y)\left( \dx^3 - \dx^2\right)P_0,
\end{align}
with $P_0, P_1$ evaluated at $(T,x)$.
The probabilistic representation of $Z^\eps$ as the solution of the Poisson equation in~\eqref{Eq_PDE_Z^vareps} reads
\begin{equation}\label{Eq_Zeps_prob_repres}
	Z^\eps(t,x,y)= \eps \ \EE_{t,x,y}\left[G_\eps(X_T,Y_T) + \int_t^T F_\eps(s,X_s,Y_s) \D s\right].
\end{equation}
To show that this is of order $\Oo(\eps)$ as $\eps\downarrow 0$, it is enough to bound~$F_\eps$ and~$G_\eps$ uniformly in~$\eps$.
The key ingredients here are the following two lemmas.  The proof of the first one, being long and technical, is postponed to Appendix~\ref{proof:pol_growth_poi_eq}.

\begin{lemma}\label{lem_pol_growth_poi_eq}
Let~$\xi$ be a solution to the Poisson equation 
$\LY \xi = g$
on $(y_\sigma, \infty)$,
with
\begin{equation*}
\left\{
\begin{array}{rll}
|g(y)| & \leq C, & \text{ for }y \in (y_\sigma, \underline y),\\
|g(y)| & \leq C \left(1+|y|^n\right), & \text{ for }y \geq \overline y,\\
\langle g, \Pi \rangle & = 0,
\end{array}
\right.
\end{equation*}
for some $C>0$, $n\in\mathbb{N}$, 
$\underline{y} \in (y_\sigma, \overline{y})$.
Then there exist $C'>0$, $n'\in\mathbb{N}$, 
$y_\sigma < \underline{y}' < \overline{y}'$ such that
\begin{equation}
\left\{
\begin{array}{rll}
|\xi'(y)| & \leq C', 
& \text{ for }y \in (y_\sigma, \underline y'),\\
|\xi'(y)| & \leq C'(1+|y|^{n'}),
& \text{ for }y \geq \overline y,
\end{array}
\right.
\end{equation}
and consequently
\begin{equation}
\left\{
\begin{array}{rll}
|\xi(y)| & \leq C'', 
& \text{ for }y \in (y_\sigma, \underline y'),\\
|\xi(y)| & \leq C''(1+|y|^{n'+1}),
& \text{ for }y \geq \overline y,
\end{array}
\right.
\end{equation}
with $C''$ suitable positive constant.
\end{lemma}

\begin{lemma}\label{lem_bound_der_price}
	If~$h$ is smooth and bounded with bounded derivatives, then
	$\partial_{x}^n P_0$ exists and is bounded
	for any $n \in \mathbb{N}$.
\end{lemma}

\begin{proof}
Since $P_0(t,x)$ is the BS price with constant volatility~$\sigmaLim^2$, denoting~$f(\cdot)$ the density function of $\mathcal{N}\left(-\half\sigmaLim^2(T-t),\sigmaLim^2\sqrt{T-t}\right)$ and assuming that the first $n$ derivatives of the function $h$ are uniformly bounded by $K >0$, we have, for $n=0$,
\begin{align*}
    |P_0(t,x)| 
    = \left|\int_{\RR} h(\E^{x+z})f(z)\D z\right|
    \leq \int_{\RR} \left|h(\E^{x+z})\right|f(z)\D z
    \leq K
\end{align*}
and then, for any $n \geq 1$,
\begin{align*}	
\dx^n P_0(t,x)
& = \dx^n \left(\int_{\RR} h(\E^{x+z})f(z)\D z\right)
= \dx^{n-1} \left(\int_{\RR} h'(\E^{x+z})\E^{x+z}f(z)\D z\right)\\
& = \dx^{n-2} \left(\int_{\RR} (h''(\E^{x+z})\E^{2(x+z)}+h'(\E^{x+z})\E^{x+z})f(z)\D z\right)\\
& = \cdots
= \int_{\RR} \sum_{k=1}^n \binom{n}{k} \dx^{k} h(\E^{x+z})\E^{k(x+z)}f(z)\D z \\
&= \int_{\RR} \sum_{k=1}^n \binom{n}{k} \dx^{k} h(\E^{x+z})\E^{k(x+z)}f(z)\D z,
\end{align*} 
and so
\begin{align*}	
|\dx^n P_0(t,x)|
& \leq   \sum_{k=1}^n \binom{n}{k} \int_{\RR} \left|\dx^{k} h(\E^{x+z})\right|\E^{k(x+z)}f(z)\D z
\leq K \int_{\RR} \sum_{k=1}^n \binom{n}{k}\E^{k(x+z)}f(z)\D z \\
& = K  \sum_{k=1}^n \binom{n}{k}\E^{kx}\E^{-k\half\sigmaLim^2(T-t)}\E^{k^2\half\sigmaLim^2(T-t)}
= K  \sum_{k=1}^n \binom{n}{k}\E^{kx+(k^2-k)\half\sigmaLim^2(T-t)}.
\end{align*} 
Since this is clearly finite, the lemma follows.
\end{proof}

Now, $\psi$ and $\vartheta$ are respectively the solutions to the Poisson equations~\eqref{Eq_eq_def_psi}-\eqref{Eq_eq_vartheta_def} and satisfy the hypotheses in Lemma~\ref{lem_pol_growth_poi_eq}.
Indeed, for $\psi$, the function $g:y\mapsto \sigma^2(y)-\sigmaLim^2$ 
clearly satisfies $\langle g, \Pi \rangle = \langle \sigma^2(\cdot)-\sigmaLim^2, \Pi \rangle =  \langle \sigma^2(\cdot), \Pi \rangle-\sigmaLim^2= \sigmaLim^2-\sigmaLim^2=0$. 
Furthermore, on $(y_\sigma, \infty)$,
\begin{align*}
	|g(y)|
	& = |\sigma^2(y)-\sigmaLim^2|
	\leq \sigma^2(y)+ \sigmaLim^2
	= \left(\frac{\alpha}{\beta}- \gamma y^{-\beta}\right)^2 +\sigmaLim^2
	\leq \frac{\alpha^2}{\beta^2}+\sigmaLim^2
\end{align*}
is finite.
Analogously, for $\vartheta$, the function $g:y\mapsto \varpi(y)- \langle \varpi, \Pi \rangle$
clearly satisfies
$\langle g, \Pi \rangle
	= \langle \varpi(y)- \langle \varpi, \Pi \rangle, \Pi \rangle
	= \langle \varpi, \Pi \rangle - \langle  \varpi, \Pi \rangle
	= 0$.
Clearly, $\langle \varpi,\Pi\rangle$ is a finite positive constant. 
Let us check the polynomial growth assumption on $(y_\sigma, \infty)$. 
Since~$\sigma$ is bounded there and since~$\psi$ (and its first derivative) has polynomial growth,
then
\begin{align*}
	|g(y)|
	& =|\varpi(y)- \langle  \varpi, \Pi \rangle|
	=|\varpi(y)| + \langle \varpi,\Pi\rangle
	\leq |y| |\sigma^2 (y)| |\psi'(y)| + \langle \varpi,\Pi\rangle\\
	& \leq |y| \frac{\alpha^2}{\beta^2}K' \left(1+|y|^{n'}\right) + \langle \varpi,\Pi\rangle
	\leq \left(\frac{\alpha^2}{\beta^2}K' +\langle \varpi,\Pi\rangle\right)\left(1+|y|^{n'+1}\right),
\end{align*}
which yields the desired growth condition.
Thus, $\psi$ and $\vartheta$ have at most polynomial growth at infinity, which we denote~$n_\psi$ and~$n_\vartheta$, and are bounded by a suitable constant when approaching~$y_\sigma$.
Plugging~\eqref{Eq_Feps_fin} and~\eqref{Eq_Geps_fin} in~\eqref{Eq_Zeps_prob_repres}, we can write
\begin{align*}
	Z^\eps(t,x,y)
	& = \eps \EE_{t,x,y}\left[G_\eps(X_T,Y_T) + \int_t^T F_\eps(s,X_s,Y_s) \D s\right]\\
	& = \eps \EE_{t,x,y}\Bigg[
	-\half \psi(Y_T)(\partial_x^2-\partial_x) P_0(T,X_T) 
	+ \frac{\sqrt{\eps}}{2} \vartheta(Y_T)(\partial_x^3-\partial_x^2)P_0(T,X_T) \\
	& + \int_t^T 
	\Bigg\{-\frac{\sigma^2(Y_s) - \sigmaLim^2}{4} \psi(Y_s)\dx^2
	 + \left( -\frac{\sigma^2(Y_s)}{2}Y_s\vartheta'(Y_s)
	 + \frac{\sigma^2(Y_s)-\sigmaLim^2}{2}\psi(Y_s)\right)\dx^3  \\ \nonumber
	& + \left[\frac{\sigma^{2}(Y_s)}{2} Y_s\vartheta'(Y_s)-\frac{\sigma^2(Y_s)-\sigmaLim^2}{4}\psi(Y_s)\right] \dx^4
	+\frac{T-s}{2}Y_s\sigma^2(Y_s) \psi'(Y_s)\widetilde{\Omega} \left(\dx^4 - 2\dx^5+\dx^6\right) \\ \nonumber
	& +\sqrt{\eps}\Bigg\{ 
	\half\left[ \frac{\sigma^2(Y_s)-\sigmaLim^2}{2}\vartheta(Y_s)-\psi(Y_s)\widetilde{\Omega} \right] \left(\dx^3 - 2\dx^4 + \dx^5\right)\\ \nonumber
	& + (T-s)\frac{\sigma^2(Y_s)-\sigmaLim^2}{4}\psi(Y_s)\widetilde{\Omega} 
	\left(-\dx^4 + 3\dx^5 - 3\dx^6 + \dx^7\right)
	\Bigg\}P_0(s, X_s) \D  s \Bigg].
\end{align*}
Now, an application of Lemma~\ref{lem_bound_der_price} yields
(with $\zeta:=\frac{\alpha^2}{\beta^2}+\sigmaLim^2$)
\begin{align*}
|Z^\eps(t,x,y)|
	& \preceq \eps \EE_{t,x,y}\Bigg[
	|\psi(Y_T)|
	+ \sqrt{\eps} |\vartheta(Y_T)|\\
	& + \int_t^T \Bigg\{\frac{\zeta}{4}|\psi(Y_s)|
	+ \frac{\alpha^2}{\beta^2}|Y_s||\vartheta'(Y_s)| + |\psi(Y_s)|\frac{\zeta}{2}
	 + \frac{\alpha^2}{2\beta^2}|Y_s||\vartheta'(Y_s)| + \frac{\zeta}{4}|\psi(Y_s)|\\
	& +\frac{2\alpha^2 (T-s)}{\beta^2}
	|Y_s| |\psi'(Y_s)|\widetilde{\Omega} +\sqrt{\eps}\Bigg[
	 \zeta|\vartheta(Y_s)|
	 + 2|\psi(Y_s)|\widetilde{\Omega}
	+ 2(T-s)\zeta|\psi(Y_s)|\widetilde{\Omega}
	\Bigg]\Bigg\} \D s \Bigg]\\
	&  \preceq \eps \EE_{t,x,y}\Bigg[
	|\psi(Y_T)| + \sqrt{\eps}|\vartheta(Y_T)|
	 + \int_t^T \Bigg\{|\psi(Y_s)| + |Y_s| |\vartheta'(Y_s)| +(T-s)|Y_s||\psi'(Y_s)| \\
	& \quad  +\sqrt{\eps}\Big(|\vartheta(Y_s)|+|\psi(Y_s)|+(T-s)|\psi(Y_s)|\Big) \Bigg\}\D s \Bigg],
\end{align*} 
where~$\preceq$ means less than modulo multiplication by some strictly positive constant.
Finally, applying Lemma~\ref{lem_pol_growth_poi_eq}, we obtain
\begin{align*}
    \left|Z^\eps(t,x,y)\right|
    & \preceq \eps\  \EE_{t,x,y}\Big[ \left(1+|Y_T|^{n_\psi}\right) + \sqrt{\eps}\left(1+|Y_T|^{n_\vartheta}\right) \\
    & \quad  + \int_t^T \Big\{1+|Y_s|^{n_\psi}+  |Y_s|\left\{\left[1+|Y_s|^{n_\vartheta-1}\right]
+ (T-s)\left[1+|Y_s|^{n_\psi-1}\right]\right\}\\
    & \quad +\sqrt{\eps}\big\{1+|Y_s|^{n_\vartheta} + (1+ (T-s)) |Y_s|^{n_\psi} \big\}\Big\} \D s \Big]
 \preceq\eps.
\end{align*} 
The finiteness in the last line
is a consequence of Appendix~\ref{App:Proof_boundedness_second_moment_Y}
on the uniform finiteness of the moments of~$Y$,
and the proposition thus follows.


\subsection{Proof of Lemma~\ref{lem_pol_growth_poi_eq}}
\label{proof:pol_growth_poi_eq}
With the notations introduced in Section~\ref{sec:Boundary}, 
the third assumption on~$g$ can be rewritten as
	\begin{align*}
		0 = \langle g, \Pi \rangle
		= \int_{y_\sigma}^{\infty} g(y) \Pi(\D y)
		= \left( \int_{y_\sigma}^{\infty}\frac{\D \xi}{\widetilde \sigma^{2}(\xi)s(\xi)}\right)^{-1} \int_{y_\sigma}^{\infty} g(y) m(y) \D y,
	\end{align*}
	and therefore
	\begin{equation}\label{Eq_center_cond_g}
		\int_{y_\sigma}^{\infty} g(y) m(y) \D y=0.	
	\end{equation}
Recall that the equation $\LY \xi = g$ 
solved by~$\xi$ on  $(y_\sigma, \infty)$ is equivalent to
	\begin{equation}
		\half\frac{\D}{\D M}\left(\frac{\D}{\D S} \xi (y)	\right) = g(y).
	\end{equation}
Integrating both sides yields
	\begin{align}\label{eq_xi'}
		\xi'(y)
		& = 2 s(y)\int_{y_\sigma}^{y} g(z) m(z) \D z.
	\end{align}
	We first study the behaviour around $y_\sigma$.
	Consider $y \in (y_\sigma, \underline{y})$, for a sufficiently small $\underline{y}$. 
	Since the function~$g$ is bounded by assumption, then
		\begin{align*}
		|\xi'(y)|
		& = 2 C s(y)\int_{y_\sigma}^{y} m(z) \D z
		= 2 C s(y)M_{\Pi}(y_\sigma, y].
	\end{align*}
	In the proof of the boundary classification of the left boundary point $y_\sigma < 1$ for the domain $(y_\sigma, \infty)$, we have seen that~\eqref{eq:Expan_right_rho_m}
	\begin{align}
		s(y)=\exp\left(\frac{\tbrK}{y-y_\sigma} + \tbrK\tbrKab\right)(y - y_\sigma)^{-\tbrK\overline{\chi_1}}(1+ \mathcal{O}(y-y_\sigma)), 
		\qquad\text{for }y \in (y_\sigma, \underline{y}),
	\end{align}	
	with $\tbrK = \frac{2y^{2\beta +1}_\sigma (1- y_\sigma)}{h \beta^2 \gamma^2}$, positive constant, and 
$M_{\Pi}(y_\sigma,y]
 = \int_0^{y-y_\sigma}\frac{\D x}{\overline \sigma^{2}(x)\overline{s}(x)}$,
with
\begin{align}
	\overline{s}(x)=\exp\left(\frac{\tbrK}{x} + \tbrK\tbrKab\right)x^{-\tbrK\overline{\chi_1}}(1+ \Oo(x)), 
	\qquad\text{for }x \in (0,\underline{y}-y_\sigma),
	\end{align}
	and
$\overline \sigma(x)^{-2}= \frac{y_\sigma^{2+2\beta}}{\beta^2 \gamma^2 x^2}(1 + \Oo(x))$, 
for $x \in (y_\sigma, \underline{y})$.
Thus, exploiting these two expansions, the change of variables $x=(y-y_\sigma)z$ and the asymptotic expansion for integrals in~\cite[Chapter~3.3, pages~62 and~67]{Miller}, we obtain
	\begin{align*}
	M(y_\sigma,y]
	& = \int_0^{y-y_\sigma}\frac{\D x}{\overline \sigma^{2}(x)\overline{s}(x)}
	= \frac{y_\sigma^{2+2\beta}}{\beta^2 \gamma^2}\int_0^{y-y_\sigma}\exp\left\{-\frac{\tbrK}{x} - \tbrK\tbrKab\right\}
	x^{\tbrK\overline{\chi_1}-2}(1+ \mathcal{O}(x)) \D x\\
	& = \frac{y_\sigma^{2+2\beta}}{\beta^2 \gamma^2}\E^{-\tbrK\tbrKab}(y-y_\sigma)^{\tbrK\overline{\chi_1}-1}\int_0^{1}\exp\left\{-\frac{\tbrK}{(y-y_\sigma)z}\right\}
	z^{\tbrK\overline{\chi_1}-2}(1+ \mathcal{O}(z)) \D z\\
	& = \frac{y_\sigma^{2+2\beta}}{\beta^2 \gamma^2}\E^{-\tbrK\tbrKab}(y-y_\sigma)^{\tbrK\overline{\chi_1}-1}\exp\left\{-\frac{\tbrK}{y-y_\sigma}\right\} \left(\frac{y-y_\sigma}{\tbrK} +  \mathcal{O}((y-y_\sigma)^2)\right)\\
	& = \frac{y_\sigma^{2+2\beta}}{\beta^2 \gamma^2 \tbrK}\E^{-\tbrK\tbrKab}(y-y_\sigma)^{\tbrK\overline{\chi_1}}\exp\left\{-\frac{\tbrK}{y-y_\sigma}\right\}
	\left(1+ \mathcal{O}(y-y_\sigma)\right).
	\end{align*}
	Thus, we conclude
	\begin{align}
	|\xi'(y)| \leq 2Cs(y)M(y_\sigma,y]
	= \frac{2y_\sigma^{2+2\beta}}{\beta^2\gamma^2} (1+ \mathcal{O}(y-y_\sigma)),
	\end{align}
	which yields the boundedness of $\xi'(y)$ and of  $\xi(y)$ itself as $y$ approaches $y_\sigma$.

About the behaviour at infinity,
applying the centering condition~\eqref{Eq_center_cond_g} to~\eqref{eq_xi'} yields
		\begin{align*}
		\xi'(y)
		& = 2 s(y)\int_{y_\sigma}^{y} g(z) m(z) \D z\\
		& = 2 s(y)\left(\int_{y_\sigma}^{y} g(z) m(z) \D z+\int_{y}^{\infty} g(z) m(z) \D z-\int_{y}^{\infty} g(z) m(z) \D z \right)\\
		& = - 2 s(y)\int_{y}^{\infty} g(z) m(z) \D z 
	\end{align*}
Since~$s$ and~$m$ are non-negative, the polynomial growth assumption in the statement of Lemma~\ref{lem_pol_growth_poi_eq} and the definition of~$m$ give
	\begin{align}\label{Eq_bound_xi'_1}
		|\xi'(y)|
		& = 2 s(y)\left|\int_{y}^{\infty} g(z) m(z) \D z\right|
		\leq 2 s(y)\int_{y}^{\infty} |g(z)| m(z) \D z \\ \nonumber
& \leq 2C s(y)\int_{y}^{\infty}  z^{n} m(z) \D z
 \leq 2C s(y)\int_{y}^{\infty}  \frac{z^{n-2}}{\sigma^2(z)s(z)} \D z. 
	\end{align}
Since~$\overline{y}$ can be picked as $\overline{y}> 1$, then $|z|^{n-2} \leq 1$ for $n \in \{0,1\}$,
and we thus take~$1$ in place of $z^{n-2}$.
	We make a short digression to study~$s(y)$, for $y \in (a, \infty)$ with $a >y_\sigma$. By definition,
	\begin{equation}
		s(y)= \exp\left\{-\int_a^y \frac{2b(\eta)}{\widetilde{\sigma}^2(\eta)}\D \eta\right\}
		= \E^{-f_a(y)},
	\end{equation}
with $f_a(y):=\int_a^y \frac{2b(\eta)}{\widetilde{\sigma}^2(\eta)}\D \eta$,
which we can compute explicitly as
$$
f_a(y)
= \frac{2}{h}\left(\int_a^y
\frac{\D\eta}{ \eta \left(-\frac{\alpha}{\beta}+\gamma \eta^{-\beta}\right)^2}
- \int_a^y \frac{\D\eta}{ \left(-\frac{\alpha}{\beta}+\gamma \eta^{-\beta}\right)^2} \right)
= \frac{2}{h}\Big(I_1(a,y) - I_2(a,y)\Big),
$$
where
\begin{align*}
I_1(a,y) & := \frac{\beta}{\alpha^2} \log \left( \frac{\beta \gamma - \alpha y^\beta}{\beta \gamma - \alpha a^\beta} \right) + \frac{\beta^2 \gamma}{\alpha } \frac{a^{-\beta}- y^{-\beta}}{(\alpha- \beta \gamma a^{-\beta})(\alpha- \beta \gamma y^{-\beta})},\\
I_2(a,y) & := \frac{1}{\gamma (2\beta+1)}
\left[y^{2\beta+1}{}_2 F_1 \left(2, 2+\frac{1}{\beta}; 3+\frac{1}{\beta}; \frac{\alpha y^{\beta}}{\beta \gamma}\right) - a^{2\beta+1}{}_2 F_1 \left(2, 2+\frac{1}{\beta}; 3+\frac{1}{\beta}; \frac{\alpha a^{\beta}}{\beta \gamma}\right) \right].
\end{align*}
\begin{itemize}
\item[-]  Since $y>a$, then
the first term in~$I_1$ satisfies $\frac{\beta \gamma - \alpha y^\beta}{\beta \gamma - \alpha a^\beta} \in (0,1]$ 
so that its logarithm is well-posed and negative.
		
\item[-]  Likewise, the second term in $I_1$ is positive 
and (as a function of~$y$) increasing and bounded by its $\infty$-limit equal to$\frac{a^{-\beta}}{(\alpha- \beta \gamma a^{-\beta})\alpha}$.
		
\item[-]  The two terms in $I_2$ can be rewritten exploiting the following series representation of the hypergeometric function \cite[Volume~I, Chapter~III, Section~3.6, Equation~(1)]{L69}, 
 which holds for any $|z| > 1$ and $a-b \notin \mathbb{Z}$:
\begin{align*}
    {}_2 F_1 \left(a,b;c;z\right)
    & = \frac{\Gamma(b-a)\Gamma(c)}{\Gamma(b)\Gamma(c-a)}\frac{1}{(-z)^{a}} \sum_{k=0}^{\infty}\frac{(a)_k(a-c+1)_k}{k!(a-b+1)_k}\frac{1}{z^{k}}\\
    & + \frac{\Gamma(a-b)\Gamma(c)}{\Gamma(a)\Gamma(c-b)}\frac{1}{(-z)^{b}} \sum_{k=0}^{\infty}\frac{(b)_k(b-c+1)_k}{k!(b-a+1)_k}\frac{1}{z^{k}}.
\end{align*}
In our specific case this reads
\begin{equation}
{}_2 F_1 \left(2, 2+\frac{1}{\beta}; 3+\frac{1}{\beta}; z\right)
= (2\beta+1)\sum_{k=0}^{\infty}\frac{k+1}{1-k \beta}z^{-(2+k)} +\half\Gamma\left(-\frac{1}{\beta}\right)
\Gamma\left(3+\frac{1}{\beta}\right) \left(-\frac{1}{z}\right)^{2+\frac{1}{\beta}},
\end{equation}
which implies
$-I_2(a,y) 
= \sum_{k\geq 0}\gimel_k
(a^{1-\beta k}-y^{1-\beta k})$,
where we define $\gimel_k:=\frac{k+1}{1-k \beta}\gamma^{k+1}\left( \frac{\alpha}{\beta}\right)^{k+2}$
for convenience.
We further introduce the useful quantities
$$
\Sigund_{n}^{z}
:= \sum_{k=0}^{n-1}\gimel_k
z^{1-\beta k}
\qquad\text{and}\qquad
\Sigov_{n}^{z}
:= \sum_{k=n}^{\infty}\gimel_k
z^{1-\beta k}.
$$
Then, for any $\beta \in (0,\half)$, there exists $n_\beta \in \mathbb{N}\setminus \{ 0,1,2\}$ such that $1-\beta n < 0$, for $n \geq n_\beta$, and $1-\beta n \geq 0$, for $n < n_\beta$. Hence, for any $z \in (y, \infty)$, with $y > a> y_\sigma$,
$$
-I_2(a,z) 
= C_{a} - \sum_{k=0}^{\infty}\gimel_k z^{1-\beta k}
 = C_{a} - \Sigund_{n_\beta}^{z} - \Sigov_{n_\beta}^{z}
 \leq C_{a} - 
 \Sigund_{n_\beta}^{z} - \Sigov_{n_\beta}^{y},
$$
where the constant $C_{a}:=\Sigund_{\infty}^{a}$
is finite.
\end{itemize}
As a consequence of these bullet points, we deduce
\begin{align*}
\frac{1}{s(z)} 
& = \E^{f_a(z)}
= \left( \frac{\beta \gamma - \alpha z^\beta}{\beta \gamma - \alpha a^\beta} \right)^{\frac{2\beta}{h\alpha^2}}
\exp\left\{\frac{2\beta^2 \gamma}{h\alpha } \frac{a^{-\beta}- z^{-\beta}}{(\alpha- \beta \gamma a^{-\beta})(\alpha- \beta \gamma z^{-\beta})} \right\}
\E^{-\frac{2}{h}I_2(a,z)}\\
& \leq \exp\left\{\frac{2\beta^2 \gamma}{h\alpha } \frac{a^{-\beta}}{(\alpha- \beta \gamma a^{-\beta})\alpha}\right\}
\exp\left\{\frac{2}{h}\left[
C_{a} - \Sigund_{n_\beta}^{z}- \Sigov_{n_\beta}^{y}\right]\right\}\\
& = \exp\left\{\frac{2\beta^2 \gamma a^{-\beta}}{(\alpha- \beta \gamma a^{-\beta})\alpha^2 h}+\frac{2 C_{a}}{h}\right\}
\exp\left\{- \frac{2}{h}\left(\Sigund_{n_\beta}^{z} + \Sigov_{n_\beta}^{y}\right)\right\}.
\end{align*}
Let us now go back to the starting problem and consider $y > a> y_\sigma$. 
Replacing the expression in~\eqref{Eq_bound_xi'_1}, we have
	\begin{align*}
		|\xi'(y)|
		& \leq 2 C s(y)\int_{y}^{\infty}  \frac{z^{n-2}}{\left(\frac{\alpha}{\beta}-\gamma z^{-\beta}\right)^2s(z)} \D z
		\leq 2 C \left(\frac{\alpha}{\beta}-\frac{\gamma}{ y^{\beta}}\right)^{-2} s(y)\int_{y}^{\infty}\frac{z^{n-2}}{s(z)} \D z\\
		& = 2 C \left(\frac{\alpha}{\beta}-\frac{\gamma}{ y^{\beta}}\right)^{-2} 
		\left( \frac{\beta \gamma - \alpha a^\beta}{\beta \gamma - \alpha y^\beta} \right)^{\frac{2\beta}{h\alpha^2} }  
		\E^{ \frac{\beta^2 \gamma}{\alpha } \frac{ y^{-\beta}-a^{-\beta}}{(\alpha- \beta \gamma a^{-\beta})(\alpha- \beta \gamma y^{-\beta})}}
		\E^{ \frac{2}{h}I_2(a,y)}\int_{y}^{\infty}\frac{z^{n-2}}{s(z)} \D z\\
		& \leq 2 C \left(\frac{\alpha}{\beta}-\frac{\gamma}{ y^{\beta}}\right)^{-2} \left( \frac{\beta \gamma - \alpha a^\beta}{\beta \gamma - \alpha y^\beta} \right)^{\frac{2\beta}{\alpha^2h}}
		\exp\left\{ -\frac{2}{h}\left[ C_{a}- \Sigund_{n_\beta}^{y}
- \Sigund_{n_\beta}^{y}\right]\right\} \times\\
		& \quad \times  \int_{y}^{\infty} z^{n-2}\E^{\frac{2\beta^2 \gamma a^{-\beta}}{(\alpha- \beta \gamma a^{-\beta})\alpha^2 h}+\frac{2}{h}C_{a}}
		\exp\left\{- \frac{2}{h}\Sigund_{n_\beta}^{z}- \frac{2}{h}\Sigov_{n_\beta}^{y}\right\} \D z\\
		& \leq 2 C \E^{\frac{2\beta^2 \gamma a^{-\beta}}{(\alpha- \beta \gamma a^{-\beta})\alpha^2 h }}
		\left(\frac{\alpha}{\beta}-\frac{\gamma}{ y^{\beta}}\right)^{-2} 
		\left( \frac{\beta \gamma - \alpha a^\beta}{\beta \gamma - \alpha y^\beta} \right)^{\frac{2\beta}{h \alpha^2} } 
		\exp\left\{\frac{2}{h}\Sigund_{n_\beta}^{y}\right\}
		\int_{y}^{\infty} z^{n-2}
		\exp\left\{- \frac{2}{h}
		\Sigund_{n_\beta}^{z}\right\} \D z.
	\end{align*}
	Now, suppose that the integral in the last line satisfies a bound of the form
\begin{equation}\label{Eq_int_ineq_exp}
\int_{y}^{\infty} z^{n-2}
\exp\left\{ - \frac{2}{h}
\Sigund_{n_\beta}^{z}\right\} \D z
	\leq K \sum_{j=0}^{N}\frac{N!}{j!}
	\left(\frac{2}{h}\Sigund_{n_\beta}^{y}\right)^j
	\exp\left\{-\frac{2}{h}
	\Sigund_{n_\beta}^{y}\right\},
\end{equation}
for some $K>0$ and some integer $N>0$.
Plugging this in the equation above, we obtain
	\begin{align*}
		|\xi'(y)|
		& \leq 2 K C \exp\left\{\frac{2\beta^2 \gamma a^{-\beta}}{(\alpha- \beta \gamma a^{-\beta})h \alpha^2}\right\}
		\left(\frac{\alpha}{\beta}-\gamma y^{-\beta}\right)^{-2} 
		\left( \frac{\beta \gamma - \alpha a^\beta}{\beta \gamma - \alpha y^\beta} \right)^{\frac{2 \beta}{h \alpha^2} } 
		\sum_{j=0}^{N}\frac{N!}{j!}
		\left(\frac{2}{h}\Sigund_{n_\beta}^{y}\right)^j \\
		& =  2 K C \exp\left\{\frac{2\beta^2 \gamma a^{-\beta}}{(\alpha- \beta \gamma a^{-\beta})h \alpha^2}\right\}
		\frac{\left( \beta \gamma - \alpha a^\beta\right)^{\frac{2\beta}{h\alpha^2}}}{\beta^2}
		\frac{y^{2\beta}}{\left(\beta \gamma - \alpha y^\beta \right)^{2+\frac{2\beta}{\alpha^2 h}}} 
		\sum_{j=0}^{N}\frac{N!}{j!}
		\left(\frac{2}{h}\Sigund_{n_\beta}^{y}\right)^j\\
		& \leq  2 K C
		\exp\left\{\frac{2\beta^2 \gamma a^{-\beta}}{(\alpha- \beta \gamma a^{-\beta})h \alpha^2}\right\}
		\frac{\left( \beta \gamma - \alpha a^\beta\right)^{-2}}{\beta^2}
		y^{2\beta}
		\sum_{j=0}^{N}\frac{N!}{j!}
		\left(\frac{2}{h}\Sigund_{n_\beta}^{y}\right)^j \\
		& \leq K_{a} 
		\left(1+ |y|^{2\beta+N(1-\beta n_{\beta})}\right)
		= K_{a} 
		\left(1+ |y|^{N_{\beta}}\right),
	\end{align*}
	
where $N_\beta$ and $K_{a}$ are respectively a suitably chosen positive integer and a positive constant. 
Thus, this last inequality yields the desired polynomial growth for $\xi'$ and $\xi$.
Finally the inequality in~\eqref{Eq_int_ineq_exp} is a consequence of the following lemma.

\begin{lemma}
Let $y>a>1$, $k,N \in \mathbb{N}_0$ and
$I:= \int_y^{\infty} z^{k} \exp \{- \sum_{j=0}^N A_j z^{d_j} \} \D z$,
with $(A_j)_{j \in \{0, \dots, N\}} \geq 0$ and $(d_j)_{j \in \{0, \dots, N\}} \in (0,1)$.
Then, there exists $r\in\mathbb{N}$ and $C>0$ such that
$$
I \leq C \exp\left\{- \sum_{j=0}^N A_j y^{d_j} \right\} \sum_{k=0}^{r} \frac{r!}{k!} \left(\sum_{j=0}^N A_j y^{d_j} \right)^k.
$$
\end{lemma}

\begin{proof}
The function $g: (y, \infty) \to \RR_{+}$, defined as $g(z):= \sum_{j=0}^N A_j z^{d_j}$, is positive and strictly increasing, hence invertible.
Its inverse~$g^{\leftarrow}$ is thus strictly increasing and
$\lim_{z \uparrow\infty} g(z)=+\infty$.
The change of variables $g(z)=u$ thus implies
$$
I=\int_{g(y)}^{\infty} \E^{-u} \frac{g^{\leftarrow}(u)^{k}}{g'\left(g^{\leftarrow}(y)\right)} \D u.
$$
Notice that the first derivative of $g$, given by $g'(z)= \sum_{j=0}^N A_j d_j z^{d_j-1}$, is clearly positive and strictly decreasing on $(y,\infty)$.
Now, set $\amin=\min_{j \in \{0, \dots, N\}} A_j$ and $\dmin=\min_{j \in \{0, \dots, N\}} d_j$. 		
Since
$g(z)\geq \amin z^{\dmin}$,
then
$g\left(\left(\frac{z}{\amin}\right)^{\frac{1}{\dmin}}\right)\geq z$,
and so, by the monotonicity of $g^{\leftarrow}$, we have
\begin{align}\label{Eq_ineq_g^-1}
g^{\leftarrow}(z) 
\leq g^{\leftarrow}\left(g\left(\left(\frac{z}{\amin}\right)^{\frac{1}{\dmin}}\right)\right)
\leq \amin^{-\frac{1}{\dmin}}z^{\frac{1}{\dmin}},
\end{align}
as well as $g'(z) \geq \amin\dmin z^{\dmin-1}$.
Applying this inequality and then~\eqref{Eq_ineq_g^-1} to the chain of inequalities for~$I$, gives, for a suitably chosen positive integer~$r$ some constant $C>0$,
\begin{align*}
I 
& \leq \int_{g(y)}^{\infty} \E^{-u} \frac{g^{\leftarrow}(u)^{k}}{\amin\dmin g^{\leftarrow}(u)^{\dmin-1}} \D u
= \frac{1}{\amin\dmin} \int_{g(y)}^{\infty} \E^{-u}g^{\leftarrow}(u)^{k+1-\dmin} \D u	\\	
& \leq \frac{1}{\amin\dmin} \int_{g(y)}^{\infty} \E^{-u}\left(\frac{u}{\amin}\right)^{\frac{1}{\dmin}(k+1-\dmin)} \D u	
= \amin^{-\frac{k+1}{\dmin}} \dmin^{-1}  \int_{g(y)}^{\infty} \E^{-u}u^{\frac{1}{\dmin}(k+1-\dmin)} \D u	\\
& \leq C \int_{g(y)}^{\infty} \E^{-u}u^{r} \D u
= \E^{-g(y)}
\sum_{j=0}^r \frac{r!}{j!}g(y)^{j},	
\end{align*}
which ends the proof of the inequality in the statement of the theorem.		
\end{proof}



\subsection{Uniform bounds for the moments of~$Y$}
\label{App:Proof_boundedness_second_moment_Y}


    Because of Section~\ref{sec:existence_uniqueness_sol}, Theorem~\ref{thm:ErgodicB} and Proposition~\ref{prop:Stationary}, we restrict our interest to the case with $\beta \in (0,\half)\cup \{1\}$ and domain $\Dom=(y_\sigma, \infty)$, with $y_\sigma:= \left( \frac{\beta\gamma}{\alpha}\right)^{1/\beta} < 1$.
	We need to prove that, for any $n\in \NN$, the uniform (in time) bound
	$\sup_{t \geq 0} \EE[Y^n_t] \leq K$ holds.
	We shall use the following lemma, the proof of which is relegated below:
	
\begin{lemma}\label{lem:Claim}
On any compact interval of the form $[0,T]$,
 any moment of~$Y$ is uniformly bounded
 and $\lim_{t \to s}	 \EE[(Y_t-Y_s)^{n+2}] = 0$.
\end{lemma}

This claim implies immediately that  $\EE[Y_t^{n+2}]$, $\EE[Y_t^{n+1}]$, $\EE[Y_t^n]$, $\EE[Y_t^{n-\beta}]$ and $\EE[Y_t^{n-2\beta}]$ are all continuous on any compact interval.
Now, It\^o's formula implies yields
	\begin{align*}
		Y_t^n 
		& = y_0^n + \int_0^t nY_s^{n-1} \D Y_s +\half \int_0^t n(n-1)Y_s^{n-2} \D \langle Y\rangle^2_s \\
		& = y_0^n + \frac{n}{h}\int_0^t 
		\left(Y_s^n -Y_s^{n+1}\right) \D s + \int_0^t \left(-\frac{\alpha n}{\beta}Y_s^n + \gamma n Y_s^{n-\beta} \right)\D W_s\\
		& \quad +\frac{n(n-1)}{2}\int_0^t \left(\frac{\alpha^2}{\beta^2}Y_s^n +\gamma^2 Y_s^{n-2\beta} - \frac{2\alpha\gamma}{\beta}Y_s^{n-\beta}\right)\D s.
	\end{align*}
	Taking expectations on both sides and exploiting the regularity of the processes involved (from the aforementioned claim) we obtain
	\begin{align*}
		\EE[Y_t^n] 
		& = y_0^n + \left(\frac{n}{h}+\frac{\alpha^2n(n-1)}{2\beta^2}\right)\int_0^t \EE[Y_s^n] \D s - \frac{n}{h} \int_0^t \EE[Y_s^{n+1}] \D s + 0 +\frac{\gamma^2n(n-1)}{2} \int_0^t \EE[Y_s^{n-2\beta}]\D s\\
		& \quad -\frac{\alpha\gamma n(n-1)}{\beta}\int_0^t \EE[Y_s^{n-\beta}]\D s.
	\end{align*}
	Define now the function $t\mapsto\varphi(t):=\EE[Y_t^n]$,
	which is differentiable since on any compact $[0,T]$, $|\partial_{t}f(t,Y)|$ is bounded in~$L^1$, for $f(t,Y) := \int_{0}^{t}Y_s^n\D s$.
	Since the process~$Y$ is positive almost surely, differentiating the expression above and applying H\"{o}lder inequality yield
	\begin{align*}
		\varphi'(t) 
		& =  \left(\frac{2}{h}-\frac{\alpha^2}{\beta^2}\right)\varphi(t) - \frac{2}{h} \EE\left[Y_t^3\right]
		+\gamma^2 \EE\left[Y_t^{2-2\beta}\right] - \frac{2\alpha\gamma}{\beta}\EE\left[Y_t^{2-\beta}\right]\\
		& \leq  \left(\frac{n}{h}+\frac{\alpha^2n(n-1)}{2\beta^2}\right)\varphi(t) - \frac{n}{h} \EE\left[Y_t^{n+1}\right] +\frac{\gamma^2n(n-1)}{2} \EE\left[Y_t^{n-2\beta}\right] \\
		& \leq  \left(\frac{n}{h}+\frac{\alpha^2n(n-1)}{2\beta^2}\right)\varphi(t) - \frac{n}{h} \varphi(t)^{1+ \frac{1}{n}} +\frac{\gamma^2 n(n-1)}{2} \varphi(t)^{1-\frac{2}{n}\beta}
		= \psi(\varphi(t)),
	\end{align*}
	with $\psi(y):= \left(\frac{n}{h}+\frac{\alpha^2n(n-1)}{2\beta^2}\right)y - \frac{n}{h} y^{1+ \frac{1}{n}} +\frac{\gamma^2 n(n-1)}{2} y^{1-\frac{2}{n}\beta}$.
	Since $\lim_{y\uparrow\infty}\psi(y)=-\infty$, there exists~$y^{*}$ such that $\psi(y) \leq -1$ for all $y \geq y^*$.
	
	This implies that $\varphi(\cdot)$ is uniformly bounded.
	First, without loss of generality we can assume $y^* \geq y_0^2$.
	Now, either the level $y^*$ is never reached, so that that the function~$\varphi$ is uniformly bounded by~$y^*$, 
	or that~$y^*$ is actually attained at some time~$t^*$, namely $\varphi(t^*)=y^*$.
	Let us show that in this last case the level $y^*+1$ cannot be attained and consequently~$\varphi$ is uniformly bounded by $y^*+1$.
	Assume by contradiction that there exists $\overline{t}$ such that $\varphi(\overline{t})=y^*+1$.
	Since~$\varphi$ is continuous,
	then $\overline{t}\geq t^*$.
	Set $\widehat{t}:=\max\{ 0 \leq t \leq \overline{t}: \varphi(t)=y^* \}$.
	Clearly then $\psi(\varphi(t)) \leq -1$ for all $t \in [\widehat{t}, \overline{t}]$,
	and furthermore
$$
y^*+1 
		=\varphi(\overline{t})
		= \varphi(\widehat{t})+ \int^{\overline{t}}_{\widehat{t}} \varphi'(t) \D t
		= y^* + \int^{\overline{t}}_{\widehat{t}} \varphi'(t) \D t
		\leq y^* + \int^{\overline{t}}_{\widehat{t}} \varphi'(t) \D t
		 \leq y^* + \int^{\overline{t}}_{\widehat{t}} \psi(\varphi(t)) \D t  \leq y^*,
$$
which is obviously a contradiction and thus completes the proof.
	\\	
	
We now prove Lemma~\ref{lem:Claim}.
\begin{proof}[Proof of Lemma~\ref{lem:Claim}]
The finiteness of any moment of~$Y$ can be recovered proceeding as in \cite{DET14}.
    Indeed, let $\tau_{M}:=\inf \{ t \geq 0: Y_t \geq M \}$ for any $M>0$, 
    so that $Y_{t  \land \tau_{M}} \leq M$ and hence is bounded almost surely. 
    Consider a function~$h \in \mathcal{C}^2([0,\infty))$ with the following properties:
    \begin{equation*}
    \left\{
    \begin{array}{ll}
        h(y) = 1, &  y \leq \half, \\
        h(y) \geq y^k, & \text{everywhere,}\\
        h(y) = y^k, & y \geq 2.
        \end{array}
        \right.
    \end{equation*}
    It is then easy to see that
    there exists a constant $\widetilde{C}>0$ such that, for all $y\geq 0$,
$$
\frac{\widetilde{\sigma}^2(y)}{2}h''(y)
 + b h'(y) \leq \widetilde{C} h(y).
$$
    Then, set  $f(t):= \EE_{y_0}[h(Y_{t \land \tau})]$. 
    It\^{o}'s formula implies
    \begin{align*}
        f(t) 
        & = h(y_0) + \EE_{y_0}\left[\int_0^{t \land \tau} \frac{\widetilde{\sigma}^2(Y_s)}{2}h''(Y_s)+b(Y_s)h'(Y_s) \D s \right]\\
        & = h(y_0) + \widetilde{C}\  \EE_{y_0}\left[\int_0^{t \land \tau} h(Y_s) \D s \right]
        = h(y_0) + \widetilde{C} \EE_{y_0}\left[\int_0^{t \land \tau} h(Y_{s\land \tau}) \D s \right]\\
        & \leq h(y_0) + \widetilde{C}\  \EE_{y_0}\left[\int_0^{t} h(Y_{s\land \tau}) \D s \right]
        = h(y_0) + \widetilde{C} \int_0^{t} f(s) \D s.
    \end{align*}
    Finally, an application of Gronwall's inequality yields
    \begin{align*}
        \EE_{y_0}\left[ Y_{t \land \tau}^k\right]
        & \leq \EE_{y_0}\left[ h(Y_{t \land \tau})\right] 
        \leq h(y_0)\E^{\widetilde{C}t}
        \leq C \left(1+y_0^k\right),
    \end{align*}
    which does not depend on~$M$, proving the uniform finiteness of moments of~$Y$ on~$[0,T]$.
    
Regarding the second item of the lemma, applying, in sequence, H\"{o}lder, BDG and H\"{o}lder inequalities, Fubini's Theorem and the previously boundedness of moments of~$Y$, we obtain
    \begin{align*}
    	\EE[(Y_t-Y_s)^n]
    	& = \EE\Bigg[ \sum_{k=0}^n \binom{n}{k}\left(\int_s^t b(Y_u) \D u \right)^{n-k}\left(\int_s^t \widetilde \sigma(Y_u) \D W_u\right)^k\Bigg]\\ \nonumber
    	& \leq \sum_{k=0}^n \binom{n}{k} \EE\Bigg[\left(\int_s^t b(Y_u) \D u \right)^n\Bigg]^\frac{n-k}{n} \EE\Bigg[\left(\int_s^t \widetilde \sigma(Y_u) \D W_u\right)^n\Bigg]^{\frac{k}{n}}\\ \nonumber
    	& \leq \sum_{k=0}^n \binom{n}{k} \EE\Bigg[\int_s^t b(Y_u)^n \D u \Bigg]^\frac{n-k}{n} \EE\Bigg[\left(\int_s^t \widetilde \sigma(Y_u)^2 \D u\right)^{\frac{n}{2}}\Bigg]^{\frac{k}{n}}\\ \nonumber
    	& \leq \sum_{k=0}^n \binom{n}{k} \EE\Bigg[\int_s^t b(Y_u)^n \D u \Bigg]^\frac{n-k}{n} \EE\Bigg[\int_s^t \widetilde \sigma(Y_u)^n \D u\Bigg]^{\frac{k}{n}}\\ \nonumber
    	& \leq \sum_{k=0}^n \binom{n}{k} \left\{\frac{1}{h^n}\int_s^t \EE[Y_u^n(1-Y_u)^n] \D u\right\}^{\frac{n-k}{n}}\left\{\int_s^t \EE\left[Y_u^n\left(-\frac{\alpha}{\beta}+\gamma Y_u^{-\beta}\right)^n\right] \D u\right\}^\frac{k}{n} \\ \nonumber
    	& \leq \sum_{k=0}^n \binom{n}{k} \frac{1}{h^{n-k}}\left\{\int_s^t \EE[Y_u^n+Y_u^{2n}] \D u\right\}^{\frac{n-k}{n}} 2^{\frac{k(n-1)}{n}}\left\{\int_s^t \EE\left[\frac{\alpha^n}{\beta^n}Y_u^n +\gamma^n Y_u^{n(1-\beta)}\right] \D u\right\}^\frac{k}{n}.
    \end{align*}
    Since all moments of~$Y$ are uniform bounded over $[0,T]$, we obtain
$$
\lim_{t \to s}\EE[(Y_t-Y_s)^n]
\leq 
\lim_{t \to s}C(T, y_0,n) (t-s) = 0,
$$
completing the proof.
\end{proof}

\section{Large deviations proofs}
\subsection{Proof of Proposition~\ref{pp:LDP_Y}}\label{proof:LDP_Y}
Since the process~$Y^\eps$ lies in~$\RR^*_+$ instead of~$\RR$, 
we adapt the proof of~\cite[Theorem 2.9]{P07} to prove a large deviations principle with speed~$\eps$ and rate function~$\Ir^Y$. 
Since $y_\sigma >0$, and in both cases $y_0 \ge y_\sigma$ and $y_0 < y_\sigma$, 
the function~$\widetilde{\sigma}$ is locally Lipschitz continuous on~$\RR^*_+$.
Furthermore, for $f\in\overline{\Hh}$, the Picard-Lindel\"of Theorem implies that the controlled ODE 
$\dot{g}_t = \widetilde{\sigma} (g_t) \dot{f}_t$, with $g_0 = y_0$ admits the solution
$$
\Ss_2^{y_0} (f)(t) = \left( \frac{\beta\gamma}{\alpha}\right)^{\frac{1}{\beta}} \left[ \E^{-\alpha \int_0^t \dot{f}_u \D u} \left( y_0^\beta \frac{\alpha}{\beta\gamma}-1\right)+1\right]^{1/\beta},
\qquad\text{for }t\in[0,T],\quad y_0>0.
$$
This formulation requires the term $\left[ \E^{-\alpha \int_0^t \dot{f}_u \D u} \left( y_0^\beta \frac{\alpha}{\beta\gamma}-1\right)+1\right]$ to be positive for all $y_0>0$:
\begin{itemize}
\item[-]  If $y_0 \ge y_\sigma$, then $y^\beta_0 \frac{\alpha}{\beta\gamma} -1 \ge 0$ and $\Ss_2^{y_0} (f)$ is positive on~$[0,T]$;
\item[-]  If $y_0 < y_\sigma$, then $y^\beta_0 \frac{\alpha}{\beta\gamma} -1 <0$ and $\Ss_2^{y_0} (f)$ is positive on~$[0,T]$ if and only if~\eqref{eq:Condition_f} holds.
\end{itemize}
The crucial step in~\cite[Theorem 2.9]{P07} is~\cite[Theorem 2.7]{P07}, which states that if $\sqrt{\eps}W$ is close to $f\in\overline{\Hh}$, then $Y^\eps$ should be close to $\Ss_2^{y_0}(f)$, the solution of the controlled ODE.
The case of bounded and locally Lipschitz coefficients on the whole real line was done in~\cite[Theorem~2.7]{P07},
but with such conditions on a domain,
a new localisation argument is required.
Given suitable $\eta> \delta >0$, with~$\delta$ sufficiently small, there exists $r \in (0, \eta)$ such that the $\delta$-tube around $\Ss_2^{y_0}(f)$ is contained in $B_r (\eta)$. 
For this radius~$r$ to exist, one simply needs to make sure that the solution~$\Ss_2^{y_0}(f)$ of the controlled ODE never reaches zero (explosion is impossible as infinity is recurrent), which is obvious 
when $y_0 \ge y_\sigma$, and guaranteed by Condition~\eqref{eq:Condition_f} when $y_0 < y_\sigma$.
Then both functions
$$
\mathfrak{b} (x) :=
\left\{
\begin{array}{ll}\displaystyle
b(x), & x \in [\eta-r,\eta+r],
\\
\displaystyle 
b\left(\frac{(\eta-r)x}{|x|} \right)
 = b(\eta-r) , & x  < \eta-r,\\
\displaystyle 
b\left(\frac{(\eta+r)x}{|x|} \right)
 = b(\eta+r) , & x  > \eta+r,
\end{array}
\right.
$$
and
$$
\mathfrak{s}(x) :=
\left\{
\begin{array}{lll}
\displaystyle \widetilde{\sigma} (x), &  x \in [\eta-r,\eta+r],\\
\displaystyle \widetilde{\sigma} \left(\frac{(\eta-r)x}{|x|} \right)
 = \widetilde{\sigma}(\eta-r), & x <\eta-r,\\
 \displaystyle \widetilde{\sigma} \left(\frac{(\eta+r)x}{|x|} \right)
 = \widetilde{\sigma}(\eta+r), & x > \eta+r,
\end{array}
\right.
$$
are bounded and globally Lipschitz continuous on~$\RR^*_+$, 
and clearly $\eps \mathfrak{b}(\cdot)$ converges uniformly to zero on~$\RR^*_+$ as~$\eps$ goes to zero.

Denote $\overline{Y}^\eps$ the solution to
$\D \overline{Y}^\eps_t = \eps\mathfrak{b}(\overline{Y}^\eps_t) \D t + \sqrt{\eps} \mathfrak{s}(\overline{Y}^\eps_t) \D W_t$
with $\overline{Y}^\eps_0 = y_0 >0$.
Then the two sequences $(\overline{Y}^\eps)_{\eps >0}$ and $(Y^\eps)_{\eps>0}$ are identical 
in~$B_r (\eta)$. 
Thus, for each $0 < \delta < y_0$ (small enough) there exist $\xi >0$ such that, for all $x\in B_\xi(y_0)$, 
$$
\PP \left[ \| Y^\eps - \Ss_2^{y_0} (f) \|_\infty >\delta, \| \sqrt{\eps} W- f \|_\infty \le \zeta \right]
= \PP \left[ \|\overline{Y}^\eps - \Ss_2^{y_0}(f) \|_\infty > \delta,\| \sqrt{\eps} W - f\|_\infty \le \zeta\right],
$$
for all $f\in \overline{\Hh}$ s.t. $\Lambda (f) \le \lambda$, with $\zeta, \lambda >0$ fixed.
Hence, for each $R, \lambda>0$ and $\delta >0$ small enough, there exist $\zeta, \xi, \eps_0 >0$ such that, for all $f\in \overline{\Hh}$ with $\Lambda(f) \le \lambda$, $x\in B_\xi (y_0)$, $\eps \le \eps_0$,
$$
\PP \left[ \|Y^\eps - \Ss_2^{y_0} (f)\|_\infty >\delta, \| \sqrt{\eps} W -f\|_\infty \le \zeta \right]
\le \exp\left\{-\frac{R}{\eps}\right\}
$$
holds from~\cite[Proposition 2.15]{P07} and so~\cite[Theorem 2.7]{P07} is satisfied here as well.
Finally, large deviations follow from the same reasoning as in the proof of~\cite[Theorem 2.9]{P07}.

\subsection{Proof of Theorem~\ref{thm:LDP_X}}\label{proof:LDP_X}
To obtain a large deviations principle for~$X^\eps$, a large deviations principle for the rescaled process $\mathrm{X}^\eps :=(X^\eps, Y^\eps)$ needs to be proved. 
This is
$$
\D
\mathrm{X}^\eps_t
= \eps\mathrm{b} (\mathrm{X}^\eps_t) \D t + \sqrt{\eps} \mathrm{a} (\mathrm{X}^\eps_t) \D W_t, 
$$
with initial condition 
$\mathrm{X}^\eps_0 
:= \mathrm{x}_0
=
\begin{pmatrix}
\log s_0 \\
y_0
\end{pmatrix} 
$
and the maps $\mathrm{b}, \mathrm{a}: \RR^*_+ \to \RR^2$ defined as
$$
\mathrm{b}(\mathrm{X}^\eps_t) =
\begin{pmatrix}
-\half\sigma^2(Y^\eps_t) \\
b(Y^\eps_t)
\end{pmatrix}
\qquad\text{and}\qquad
\mathrm{a}(\mathrm{X}^\eps_t) = 
\begin{pmatrix}
\sigma(Y^\eps_t) \\
\widetilde{\sigma}(Y^\eps_t)
\end{pmatrix}.
$$
These two maps are both locally Lipschitz continuous on~$\RR\times\RR^*_+$.
Solving the controlled ODE for $Y^\eps$ is sufficient to solve the controlled ODE for the process $\mathrm{X}^\eps$. Using the proof of Proposition~\ref{pp:LDP_Y}, 
for $\mathrm{f}:= (f,f)$ with $f\in \overline{\Hh}$, 
the controlled ODE $\dot{\mathrm{g}}_t =\dot{f}_t \mathrm{a}(\mathrm{g}_t) $,
with $\mathrm{g}_0 = \mathrm{x}_0$
has a solution $\mathrm{g} = \Ss^{\mathrm{x}_0}(f)$ on~$[0,T]$.
For $y_0 > y_\sigma$, the solution $\Ss_2^{y_0}$ is strictly positive and~$\Ss^{\mathrm{x}_0} (f)$ exists on~$[0,T]$ for all $f\in \overline{\Hh}$ and $\mathrm{x}_0 \in \RR\times\RR^*_+$. In this case, $\overline{\Hh}$ boils down to the Cameron-Martin space.
For $y_0 < y_\sigma$, Condition~\eqref{eq:Condition_f} ensures that~$\Ss_2^{y_0}$ is positive.
Applying~\cite[Theorem~2.9]{P07}, 
the sequence~$\mathrm{X}^\eps$ then satisfies a large deviations principle on $\Cc([0,T], \RR\times\RR^*_+)$ as~$\eps$ tends to zero, with speed~$\eps$ and rate function
$$
\Ir^{Y,X} (\mathrm{g})
:= \inf \left\{\Lambda (f), f\in \overline{\Hh}, \Ss^{\mathrm{x}_0} (f) = \mathrm{g} \right\}.
$$
To obtain a large deviations principle for the $\log$-stock price~$X^\eps$, we apply the Contraction Principle~\cite[Theorem 4.2.1]{DZ}, since the projection on the first component is continuous.

\subsection{Proof of Corollary~\ref{cor:SmallTimeIV}}
\label{cor:SmallTimeIV_proof}
We prove the lower and upper bounds separately, which turn out to be equal. 
For simplicity, we introduce the following notation, for all $k \neq 0$:
$$
 \widetilde{\mathrm{I}}^X (k)
 = 
 \left\{ \begin{array}{ll}
\displaystyle \inf _{y \geq k} \mathrm{I}^X (g)\rvert_{g(1)= y},
& \text{if }k>0,\\
\displaystyle \inf _{y \leq k} \mathrm{I}^X (g)\rvert_{g(1)= y},
& \text{if }k<0.
\end{array}
\right.
$$
Assuming that the rate function is continuous\footnote{Unless the rate function is available in closed form, it is hard to check for continuity.
This was done directly 
for the Heston model in~\cite{FJ09Heston} and in~\cite[Corollary 4.10]{FZ17} for a simplified rough volatility model.
The most general related statement is available in~\cite{FGP21} based on non-degeneracy assumptions.}, $\lim_{t \downarrow 0} t \log \PP \left[S_t > \E^k \right] = 
-   \widetilde{\mathrm{I}}^X (k)$.
We only consider $k>0$, the other case being symmetric.
The proof of this identity is similar  to that of ~\cite[Corollary 4.13, Appendix C]{FZ17}.
\begin{itemize}
\item[-]  For any $\delta >0$,
the inequality $\EE [(S_t - \E^k)_{+}]
\ge k \E^k \delta \PP 
[S_t > \E^{k(1+\delta)}]$ and Theorem~\ref{thm:LDP_X}, together with the continuity of the rate function,  then imply
$$
\liminf_{t \downarrow 0} t \log \EE\left[\left(S_t - \E^k\right)_{+}\right]
\ge \liminf_{t \downarrow 0} 
\left\{ t(k + \log k + \log \delta) + t\log \PP \left[S_t > \E^{k(1+\delta)}\right] \right\}
= -  \widetilde{\mathrm{I}}^X (k(1+\delta)).
$$
Take $\delta \downarrow 0$, by continuity of $ \widetilde{\mathrm{I}}^X (k)$, we obtain the desired lower bound.
\item[-]  To establish the desired upper bound, we note that for any $q>1$, we have
$$
\EE  \left[\left(S_t - \E^k\right)_{+}\right] 
\le \EE\left[\left(S_t - \E^k \right)_+^q \right]^{1/q} 
\PP \left[S_t \ge \E^k \right]^{1-1/q}.
$$
and therefore
$t \log \EE[(S_t - \E^k)_{+}]
\le \frac{t}{q} \log\EE[S_t^q] + t (1-\frac{1}{q}) \log \PP[S_t \ge \E^k]$.
From Theorem~\ref{Thm_behaviour}, for $y_\sigma \le \min \{y_0,1\}$, the process $(Y_t)_{t \in [0,T]}$ remains in $(y_\sigma, \infty)$. 
The map~$\sigma$ is bounded on $(y_\sigma, \infty)$, in particular $0 \leq \sigma(y) \leq \alpha/\beta$, and thus adapting the arguments in~\cite[proof of Corollary 1.2]{FJ11}, we have $\limsup_{t \downarrow 0} \frac{t}{q} \log\EE[S_t^q] \le  0$.
Indeed, exploiting H\" older inequality and the closed-form formula for the exponential moments of a Gaussian random variable, we have
\begin{align*}
    \EE[S_t^q]
    & =s_0^q \EE\left[\E^{q(X_t-x_0)}\right] \\
    & =s_0^q \EE\left[\exp\left\{q\left(-\half \int_0^t \sigma(Y_s)^2\D s + \int_0^t \sigma(Y_s)\D W_s \right)\right\}\right] \\
    & \leq s_0^q \EE\left[\exp\left\{-q\int_0^t \sigma(Y_s)^2\D s\right\}\right]^{\half} \EE\left[\exp\left\{2q\int_0^t \sigma(Y_s)\D W_s \right\}\right]^{\half}\\
    & \leq s_0^q \exp\left\{\frac{4q^2}{4}\mathbb{V}\left[\int_0^t \sigma(Y_s)\D W_s \right]\right\}
    \leq s_0^q \exp\left\{q^2\int_0^t \EE\left[\sigma(Y_s)^2\right]\D s \right\}
    \leq s_0^q \exp\left\{\frac{\alpha^2q^2}{\beta^2}t \right\},
\end{align*}
which yields
\begin{align*}
    \limsup_{t \downarrow 0}\frac{t}{q} \log\EE[S_t^q] 
    \leq  \limsup_{t \downarrow 0}  t \log \left(s_0^q \exp\left(\frac{\alpha^2q^2}{\beta^2}t \right) \right)
    \leq  \limsup_{t \downarrow 0}  t \left\{q x_0 + \frac{\alpha^2q^2}{\beta^2}t \right\}
    =0.
\end{align*}
Therefore, for fixed $q>1$, we have 
$
\limsup_{t \downarrow 0} t \log \EE [(S_t - \E^k)_{+} ]
\le  -(1-\frac{1}{q})  \widetilde{\mathrm{I}}^X (k)$.
Taking~$q$ to infinity yields the desired upper bound.
\end{itemize}



\begin{thebibliography}{99}

\bibitem{Alos}E.~Al\`os, J.A.~Le\'on, and J.~Vives. 
On the short-time behavior of the implied volatility for jump-diffusion models with stochastic volatility. 
\textit{Finance and Stochastics}, {\tt 11(4)}: 571-589, 2007.

\bibitem{BayerFriz}C.~Bayer, P.K.~Friz, and J.~Gatheral. 
Pricing under rough volatility. 
\textit{Quantitative Finance}, {\tt 16(6)}: 887-904, 2016.

\bibitem{Bishop}A.N.~Bishop, P.~Del Moral, K.~Kamatani and B.~R\'emillard.
On one-dimensional Riccati diffusions.
\textit{Annals of Applied Probability}, {\tt 29(2)}: 1127-1187, 2019.

\bibitem{B86} T.~Bollerslev.
Generalised autoregressive conditional heteroskedasticity.
\textit{Journal of Econometrics}, {\tt 31(3)}: 302-327, 1986.

\bibitem{CE05} A.S.~Cherny and H.J.~Engelbert.
Singular stochastic differential equations.
\textit{Springer}, 2005.

\bibitem{DZ} A.~Dembo and O.~Zeitouni.
Large deviations techniques and applications. 
\textit{Springer}, 1998.

\bibitem{DW98} X.~Ding and R.~Wu.
A new proof for comparison theorems for stochastic differential inequalities with respect to semimartingales. 
\textit{Stochastic Processes and their Applications}, {\tt 78(2)}: 155-171, 1998.

\bibitem{DET14} H.~Dyrssen, E.~Ekstrom and J.~Tysk.
Pricing equations in jump-to-default models.
\textit{International Journal of Theoretical and Applied Finance},
{\tt 17(3)}: 1-13, 2014.

\bibitem{D95} J.C.~Duan.
The GARCH option pricing model.
\textit{Mathematical Finance}, {\tt 5(1)}: 13-32, 1995.

\bibitem{ElEuch}O.~El Euch, M.~Fukasawa, and M.~Rosenbaum. 
The microstructural foundations of leverage effect and rough volatility.
\textit{Finance and Stochastics}, {\tt 22(2)}: 241-280, 2018.

\bibitem{E82} R.~Engle.
Autoregressive conditional heteroskedasticity with estimates of variance of United Kingdom.
\textit{Econometrica}, {\tt 50}: 987-1008, 1982.

\bibitem{FFF10} J.~Feng, M.~Forde and J.P.~Fouque. 
Short-maturity asymptotics for a fast mean-reverting Heston stochastic volatility model.
\textit{SIAM Journal on Financial Mathematics}, {\tt 1(1)}: 126-141, 2010.

\bibitem{FFK12} J.~Feng, J.-P.~Fouque and R.~Kumar.
Small-time asymptotics for fast mean-reverting stochastic volatility models.
\textit{The Annals of Applied Probability}, {\tt 22(4)}: 1541–1575, 2012.

\bibitem{FJ09Heston}M.~Forde and A.~Jacquier. 
Small-time asymptotics for implied volatility under the Heston model. 
\textit{International Journal of Theoretical and Applied Finance}, {\tt 12(6)}: 861-876, 2009.

\bibitem{FJ11} M.~Forde and A.~Jacquier. 
Small-time asymptotics for an uncorrelated local-stochastic volatility model.
\textit{Applied Mathematical Finance}, {\tt 18(6)}: 517-535, 2011.

\bibitem{FZ17} M.~Forde and H.~Zhang. 
Asymptotics for rough stochastic volatility models.
\textit{SIAM Journal on Financial Mathematics}, {\tt 8(1)}: 114-145, 2017.

\bibitem{FPS00} J.P.~Fouque, G.~Papanicolaou and R.~Sircar.
Derivatives in financial markets with stochastic volatility.
\textit{Cambridge University Press}, 2000.

\bibitem{FPSS03} J.-P.~Fouque, G.~Papanicolaou, R.~Sircar, and K.~S\o lna. 
Singular Perturbations in Option Pricing.
\textit{SIAM Journal on Applied Mathematics}, {\tt 63(5)}: 1648–65, 2003.

\bibitem{FPSS03B} J.-P.~Fouque, G.~Papanicolaou, R.~Sircar, and K.~S\o lna. 
Multiscale stochastic volatility asymptotics.
\textit{Multiscale Modeling \& Simulation, SIAM}, {\tt 2(1)}: 22–42, 2003.

\bibitem{FPSS11} J.P.~Fouque, G.~Papanicolaou, R.~Sircar and K.~S\o lna.
Multiscale stochastic volatility for equity, interest rate, and credit derivatives.
\textit{Cambridge University Press}, 2011.

\bibitem{FLS16} J.P.~Fouque, M.~Lorig and R.~Sircar.
Second order multiscale stochastic volatility asymptotics: stochastic terminal layer analysis and calibration. 
\textit{Finance and Stochastics},  {\tt 20}, 543-588, 2016.

\bibitem{FGP21}P.~Friz, P.~Gassiat and P.~Pigato.
Precise asymptotics: robust stochastic volatility models.
\textit{Annals of Applied Probability}, {\tt 31(2)}: 896-940, 2015.

\bibitem{FrizBook}P.~Friz, J.~Gatheral, A.~Gulisashvili, A.~Jacquier and J.~Teichmann.
Large deviations and asymptotic methods in Finance.
\textit{Springer}, Springer Proceedings in Mathematics and Statistics,  2015.
    
\bibitem{Fuka}M.~Fukasawa. 
Asymptotic analysis for stochastic volatility: martingale expansion. 
\textit{Finance and Stochastics}, {\tt 15}: 635-654, 2011.

\bibitem{GatheralRough}J.~Gatheral, T.~Jaisson, and M.~Rosenbaum. 
Volatility is rough. 
\textit{Quantitative Finance}, {\tt 18(6)}: 933-949, 2018

\bibitem{Guennoun}H.~Guennon, A.~Jacquier, P.~Roome, and F.~Shi. 
Asymptotic behavior of the fractional Heston model. 
\textit{SIAM Journal on Financial Mathematics}, {\tt 9(3)}: 1017-1045, 2018.

\bibitem{Guyon14} J.~Guyon.
Path-dependent volatility.
\textit{Risk}, 2014.

\bibitem{G17} J.~Guyon.
Path-dependent volatility: practical examples.
\textit{Global Derivatives Conference}, 2017.

\bibitem{HR98} D.G.~Hobson and L.C.G.~Rogers.
Complete models with stochastic volatility.
\textit{Mathematical Finance}, {\tt 8(1)}: 27-48, 1998.

\bibitem{IW14} N.~Ikeda and S.~Watanabe.
Stochastic differential equations and diffusion processes.
\textit{Elsevier}, 2nd Edition, 2014.

\bibitem{KT81} S.~Karlin and H.M.~Taylor.
A second course in stochastic processes.
\textit{Academic Press}, 1981.

\bibitem{L69} Y.L.~Luke. 
The special functions and their approximations.
\textit{Academic Press}, New York, 1969.

\bibitem{Miller} P.D.~Miller.
Applied asymptotic analysis.
\textit{AMS} Graduate Studies in Mathematics, {\tt 75}, 2006.

\bibitem{MZ09} S.~Mohammed and T.~Zhang.
Anticipating stochastic differential systems with memory.
\textit{Stochastic Processes and their Applications}, {\tt 119}: 2773-2802, 2009.

\bibitem{P07} D.~Peithmann.
Large deviations and exit time asymptotics for diffusions and stochastic resonance.
\textit{PhD Thesis} Humboldt University, Berlin, 2007.

\bibitem{Pinsky} R.G.~Pinsky.
Positive harmonic functions and diffusion.
\textit{Cambridge University Press}, 1995.

\end{thebibliography}
\end{document}